\def\submission{0}
\def\arxiv{1}
\ifnum\arxiv=1
\pdfoutput=1
\fi
\ifnum\submission=0
    \documentclass{llncs}
    \pagestyle{plain}

    \newcommand{\Miryam}[1]{{\footnotesize\color{cyan}[Miryam: #1]}}
    \newcommand{\yuhsuan}[1]{{\footnotesize\color{blue}[Yuyu: #1]}}
    \newcommand{\ychsieh}[1]{{\footnotesize\color{red}[yaoching: #1]}}
    \usepackage{authblk}
\else
    \documentclass{llncs}
    \pagestyle{plain}

    \pagestyle{plain}
    \newcommand{\Miryam}[1]{}
    \newcommand{\yuhsuan}[1]{}
    \newcommand{\ychsieh}[1]{}
\fi
\usepackage{tikz-cd}
\usepackage{comment}

\usepackage{amsmath,kbordermatrix,blkarray}

\usepackage{scalerel,stackengine}
\stackMath
\newcommand\reallywidetilde[1]{%
\savestack{\tmpbox}{\stretchto{%
  \scaleto{%
    \scalerel*[\widthof{\ensuremath{#1}}]{\kern-.6pt\sim\kern-.6pt}%
    {\rule[-\textheight/2]{1ex}{\textheight}}
  }{\textheight}%
}{0.5ex}}%
\stackon[1pt]{#1}{\tmpbox}%
}
\parskip 1ex

\usepackage{enumitem}
\usepackage{graphicx,color,url,booktabs,comment}
\usepackage{amsthm}
\usepackage{amsmath, amssymb,amsbsy}
\usepackage{mathrsfs}
\usepackage{qcircuit}
\usepackage{braket}
\usepackage{algorithm}
\usepackage{caption}
\usepackage[noend]{algorithmic}
\usepackage{cite,xcolor,float}
\usepackage[utf8]{inputenc}
\usepackage{tabularx}
\usepackage[pdfstartview=FitH,colorlinks,linkcolor=blue,filecolor=blue,citecolor=blue,urlcolor=blue]{hyperref}
\usepackage{fixltx2e}
\usepackage{xspace}
\usepackage{cleveref} 
\usepackage{mdframed}
\usepackage{amsfonts}
\usepackage{tikz}
\usetikzlibrary{positioning,fit,calc}

\usepackage{setspace}
\linespread{1}


\newcounter{protocol}

\newcommand{\Acal}{\mathcal{A}}

\newcommand{\Ccal}{\mathcal{C}}

\newcommand{\Ecal}{\mathcal{E}}

\newcommand{\Ical}{\mathcal{I}}
\newcommand{\Jcal}{\mathcal{J}}
\newcommand{\Kcal}{\mathcal{K}}
\newcommand{\Lcal}{\mathcal{L}}
\newcommand{\Mcal}{\mathcal{M}}

\newcommand{\Ocal}{\mathcal{O}}
\newcommand{\Pcal}{\mathcal{P}}
\newcommand{\Qcal}{\mathcal{Q}}

\newcommand{\Scal}{\mathcal{S}}

\newcommand{\Fbb}{\mathbb{F}}

\newcommand{\Qbb}{\mathbb{Q}}

\newcommand{\Zbb}{\mathbb{Z}}

\newcommand{\secparam}{\kappa}
\newcommand{\pk}{\mathsf{pk}}
\newcommand{\sk}{\mathsf{sk}}
\newcommand{\bbN}{\mathbb{N}}
\newcommand{\ok}{\mathsf{msk}}


\newcommand{\MGen}{\mathbf{MKeygen}}
\newcommand{\Gen}{\mathbf{Keygen}}
\newcommand{\Com}{\mathbf{Commit}}

\newcommand{\Resp}{\mathbf{Resp}}
\newcommand{\Ver}{\mathbf{Verify}}
\newcommand{\Open}{\mathbf{Open}}
\newcommand{\Sim}{\mathbf{Sim}}
\newcommand{\Hyb}{\mathbf{Hyb}}
\newcommand{\Ext}{\mathbf{Ext}}

\newcommand{\Sign}{\mathbf{Sign}}
\newcommand{\Vrfy}{\mathbf{Verify}}
\newcommand{\Trans}{\mathbf{Trans}}
\newcommand{\inv}{^{-1}}
\newcommand{\Cl}{\mathsf{Cl}(\mathcal{O})}
\newcommand{\nset}[2]{\{#1_#2\}_{#2\in[n]}}
\newcommand{\aset}[3]{\{#1_#2\}_{#2\in#3}}
\newcommand{\pset}[3]{\{#1\}_{#2\in#3}}
\newcommand{\nmk}{[n]\setminus\{k\}}
\newcommand{\Usample}{\xleftarrow{\$}}
\newcommand{\sch}{\{1,2,3,4\}}

\usepackage{multicol}

\newcommand*{\tikzmk}[1]{\tikz[remember picture,overlay,] \node (#1) {};\ignorespaces}
\newcommand{\boxit}[1]{\tikz[remember picture,overlay]{\node[yshift=3pt,fill=#1,opacity=.25,fit={(A)($(B)$)}] {};}\ignorespaces}

\captionsetup[algorithm]{labelformat=empty}

\title{Isogeny-based Group Signatures and Accountable Ring Signatures in QROM}
\ifnum\submission=0

\author{ Kai-Min Chung\inst{1} \and Yao-Ching Hsieh\inst{2} \and
  Mi-Ying (Miryam) Huang\inst{3} \and \\\vspace{-1em} Yu-Hsuan Huang\inst{1,4}
  \and Tanja Lange\inst{5} \and Bo-Yin Yang\inst{1} } \institute{
  Academia Sinica, Taiwan \\
  \and University of Washington, United States \\
  \and University of Southern California, United States \\
  \and Centrum Wiskunde \& Informatica, The Netherlands \\
  \and Eindhoven University of Technology, The Netherlands \\
  \email{kmchung@iis.sinica.edu.tw, ychsieh@cs.washington.edu, miying.huang@usc.edu, Yu-Hsuan.Huang@cwi.nl, byyang@iis.sinica.edu.tw, tanja@hyperelliptic.org}}

\date{}

\else
    \author{}
    \institute{}
\fi

\begin{document}

\maketitle

\ifnum\submission=1\vspace{-4em}\fi

\begin{abstract}
We present the first provably secure isogeny-based group signature (GS) and accountable ring signature (ARS) in the quantum random oracle model (QROM). We do so via introducing and constructing an intermediate primitive called the {\em openable sigma protocol} and demonstrating that any such protocol gives rise to a secure GS and ARS. Furthermore, QROM security is guaranteed if an additional perfect unique-response property (which is achieved via our tailored construction) is satisfied.

Previous works by Beullens et al. (Eurocrypt 2022, Asiacrypt 2020) proposed isogeny-based GS and ARS with better efficiency but were only analyzed in the classical random oracle model (CROM). It is well-known that CROM security does not generally translate to QROM security; with the growing relevance of isogeny-based constructions in post-quantum cryptography, the current state of the art is unsatisfactory. Moreover, the aforementioned existing isogeny-based signatures were recently affected by the Fiat-Shamir with aborts (FSwA) flaw discovered by Barbosa et al. and Devevey et al. (CRYPTO 2023), leaving the provable security of isogeny-based signatures open to question once again. Our constructions are not only immune to the FSwA flaw but also provide stronger QROM security. As current QROM-secure ARS and GS schemes are mostly lattice-based, we offer a robust post-quantum alternative should lattice assumptions weaken.

\keywords{Quantum Random Oracle Model, Provable Security, Isogeny Group Signature, Fiat-Shamir with Abort flaw}
\end{abstract}

\ifnum\submission=1\vspace{-3em}\fi

\ifnum\submission=0
\begingroup
\makeatletter
\def\@thefnmark{} \@footnotetext{
Authors list in alphabetical order; see
\url{https://www.ams.org/profession/leaders/culture/CultureStatement04.pdf}.}
\endgroup
\fi

\section{Introduction}

\noindent\textbf{Group signatures and accountable ring signature} {\em Group signatures} (GS), first proposed in
\cite{chaum91}, are signature schemes that permit signing for a {\em group} of players chosen by a prescribed {\em group manager}. Each player can generate publicly verifiable signatures on behalf of the
group while keeping himself anonymous to everyone else except for the group
manager. The group manager has the authority to {\em open}, i.e., to
reveal the signer's identity from a signature with its {\em master secret key}. There have been numerous works devoted to group
signatures. Many of them aimed to give refinements and extensions to
the primitive \cite{CM98,BSZ05}. An important line of research on group signatures studies variants with {\em dynamic groups}. In contrast to the original formulation where only {\em static groups} are supported \cite{chaum91, bellare2003foundations},
a dynamic group signature scheme allows updating the group of players after setup. The notion of {\em partially dynamic} group signatures was
formulated in \cite{BSZ05,KY06}, where parties can join a group but cannot be removed. {\em Accountable ring signatures} (ARS), first proposed in \cite{xu04}, provides the ``dynamic property for groups'' in a
different aspect. While having ``ring signature''
\cite{rivest2001how} within its name, ARS can also be viewed as a variant
of group signatures where groups are {\em fully dynamic} but {\em not authenticated}. In an ARS scheme, the manager no longer controls the group. Instead, a signer can freely decide which
master public key to use and which group to sign for, and the corresponding master secret key can then open its identity. Though seemingly incomparable to a standard group signature, an ARS scheme can, in fact, trivially imply a group signature scheme simply by fixing the
group at the setup stage. Later, \cite{bootle15}
proposed a stringent formulation for ARS and a provable
construction based on the DDH assumption. It is further shown in
\cite{BCC16} that such a stringent ARS scheme can be generally
transformed to a {\em fully dynamic} group signature scheme. There has been increasing attention to the importance of post-quantum security for cryptographic primitives.
Various attempts emerged to construct group signatures based on
cryptographic assumptions that resist quantum attacks. \cite{GKV10} first gave a group signature construction from
lattice-based assumptions. Several constructions of lattice-based group signatures followed this, either for static
groups \cite{LLL13, NZZ15} or dynamic groups \cite{LLM16, LNH17}.
There have also been a few attempts from other classes of post-quantum assumptions, such as code-based
assumptions \cite{ELL15} or hash-based assumptions \cite{BM18}.

\noindent\textbf{Quantum random oracle model.}
Cryptographic hash functions are often modeled as random functions with public query access, a framework known as the random oracle model (ROM), which has been successful in analyzing the security of practical signature schemes. However, in the presence of full-fledged quantum adversaries who can evaluate the hash function in superposition, extending this model to allow quantum query access becomes essential, leading to the quantum random oracle model (QROM). In post-quantum cryptography, analyses under QROM are now considered crucial. In fact, \cite{ambainis2014quantum} has demonstrated that signature schemes proven secure in the  CROM can still be vulnerable to quantum attacks. Despite this, most ring and group signatures are often found in the CROM, with only a few lattice-based constructions offering QROM security \cite{csahin2022constant}. Given that most QROM-secure group signatures are lattice-based, if lattice assumptions were to be broken, there would be nearly no QROM-secure group or ring signatures available. Therefore, exploring other post-quantum assumptions for ring and group signatures becomes important, particularly in the context of QROM security. This leads to the main question of this work:

\ifnum\submission=1\vspace{-1em}\fi\begin{quote}
    {\it Can we construct provably QROM-secure group signatures and accountable ring signatures from isogenies?}
\end{quote}
\ifnum\submission=1\vspace{-2em}\fi


\subsection{Our results}
\ifnum\submission=1\vspace{-1em}\fi
We construct the first provably QROM-secure accountable ring signatures (ARS) from {\em isogeny-based assumptions}. Moreover, since ARS can be easily transformed into group signatures and ring signatures\footnote{ARS$\Rightarrow$RS is trivial by throwing away the opening functionality.} while preserving its QROM security, we also achieve the first provably QROM secure group signature and ring signature. For convenience, since QROM-security also implies CROM-security, we will primarily present and phrase our discussion in terms of QROM.
On top of it, we introduce a primitive for constructing ARS called {\em openable sigma protocol}, which is simple and fits well with the Fiat-Shamir methodology: Any openable sigma protocol $\Sigma$ can be securely transformed into an ARS ${\cal ARS}^t_\Sigma$ using Fiat-Shamir transformation. Furthermore, we show that a typical requirement for QROM-secure signatures, the {\em unique-response property}, suffices to provide QROM security to the abovementioned ARS. Note that it is non-trivial to obtain (or avoid relying on) the unique-response property, and we will direct the readers to our technical overview in Section~\ref{sec:technical_overview} for further details.

\ifnum\submission=1\vspace{-1em}\fi \begin{theorem}{(Informal)}\label{informal}
    Let $\Sigma$ be a secure openable sigma protocol. Then ${\cal ARS}_{\Sigma}^t$ is a classically secure ARS. Furthermore, if $\Sigma$ is perfect unique-response, then ${\cal ARS}_\Sigma^t$ is QROM-secure.
\end{theorem}

We base our construction on the {\em decisional CSIDH assumption} (D-CSIDH).
From an abstract viewpoint, D-CSIDH is a natural generalization of DDH that
is built on the weaker group-action structure.
Due to the lack of homomorphic properties in group-action assumptions,
it is usually infeasible to transform results obtained from group-based assumptions to those from group-action-based assumptions. Our work demonstrates the
possibility of constructing advanced cryptographic primitives with group-action-based assumptions despite its limited expressiveness. For future works and open problems, please refer to Supplementary~\ref{sec:open_problem} for more detail.

\paragraph{\bf Related Work}
A line of works \cite{beullens2020calamari, lai2021collusion, BDKLP22} result in efficient isogeny-based ring and group signatures in CROM. However, they are also affected by a recently discovered flaw \cite{FSwA23, FSwA23_Devevey} in the analyses of the Fiat-Shamir with Aborts (FSwA) paradigm. \footnote{We furthermore confirmed with Yi-Fu Lai, one of the co-authors of \cite{lai2021collusion,BDKLP22}, that it is indeed affected by the FSwA flaw.} While \cite{FSwA23, FSwA23_Devevey} have proposed patches for this flaw, the aforementioned isogeny-based group signature schemes are neither covered nor easily fixed.

Moreover, the works \cite{beullens2020calamari, lai2021collusion, BDKLP22} do not seem to achieve QROM security primarily because their underlying sigma protocols violate the so-called collapsing property, which is required for most currently available quantum rewinding techniques, e.g. \cite{Unr12,chiesa2022post}. Specifically, in the underlying sigma protocols of these works, a prover can produce two distinct responses for the same challenge—particularly when the challenge is zero. 
An alternative strategy is to rely on the so-called Unruh's transformation, a method used to enforce the unique-response property by having the prover commit to a single response per challenge. This does not seem directly applicable to \cite{lai2021collusion, BDKLP22} either due to the extra RO queries involved in the underlying sigma protocol. Therefore, even if the aforementioned FSwA flaws in these works are patched in the future, 
an upgrade to QROM security remains non-trivial. 
We refer readers to \cref{sec:hardnessQROM} and \cref{Tech:FSwA} for more details.

\section{Technical overview}\label{sec:technical_overview}

\ifnum\submission=1 \vspace{-1em}\fi In this overview, we assume some familiarity for sigma protocols and the Fiat-Shamir transformation \cite{fiat1987how}.
\ifnum\submission=1\vspace{-1em}\fi

\subsection{Why is it hard to obtain QROM-security?}\label{sec:hardnessQROM}

To start off, we explain why one needs to be extra careful when analyzing security proofs in QROM. In Fiat-Shamir paradigm, one often starts with constructing a so-called sigma protocol, where an adversary can make random challenges for the prover to respond. Then, the interaction is removed by substituting the verifier's random challenges with hashes of earlier transcripts.

In proving security for this kind of construction, two commonly used techniques are {\em rewinding} and {\em reprogramming}. We briefly describe their use-case as follows.
\begin{enumerate}[label=(\arabic*)]
    \item The properties of a sigma protocol $\Sigma$ can often be lifted to its corresponding Fiat-Shamir signatures, which run almost as $\Sigma$, except challenges are computed as the hashes (partly) of to-be-signed messages. An adversary may therefore bias the challenge distribution, by post-selecting the messages depending on previous hash queries. One typically resorts to the reprogramming argument to control such bias.
    \item A common way to argue the underlying sigma protocol indeed proves the knowledge of a certain relation, is by means of running a prover multiple times, and each time rewinding to the point where fresh challenges are about to be made, in order to collect its responses for many different challenges.
\end{enumerate}
The reprogramming argument typically involves reasoning about earlier query inputs. For example, in the classical setting, one might argue that a certain random variable $x$ has likely not been queried to the random oracle, except with a small probability. However, in the quantum setting, where queries can be made in superposition, it is not possible to read out the previously queried inputs due to the no-cloning principle, nor to measure them, as this may disturb the running states and be noticed by the adversary.
Similar challenges arise in rewinding arguments, where we rerun the same prover multiple times. Each time a response is generated, the prover's running state is disrupted, affecting subsequent executions.

\noindent\textbf{Mitigating the unique-response property?}\\
Unruh \cite{Unr12} demonstrated how to perform the quantum rewinding argument with an additional so-called collapsing property. The collapsing property is a slight relaxation of the unique-response property, which turns out necessary in the currently available quantum rewinding technique, e.g., \cite{chiesa2022post}. In particular, if the prover can produce two distinct responses for the same challenge, as in \cite{beullens2020calamari,lai2021collusion,BDKLP22}, then the collapsing property is violated.

A natural question is: Can we easily "uniquify" their responses for a construction lacking collapsing property? The answer is that it depends. There is a rule of thumb along the lines of the so-called Unruh's transformation \cite{DFMS19,Unr12}. The prover commits to one response for each challenge and publishes it in the first message, which the verifier can open and check in the later phase. However, the situation is often more complex, and we need to carefully examine whether specific constructions can be modified to achieve security in the QROM.


The primary obstacles preventing existing group and ring signatures \cite{beullens2020calamari,lai2021collusion,BDKLP22} from achieving QROM security\footnote{
As noted in \cref{Tech:FSwA}, the security claims in \cite{beullens2020calamari,lai2021collusion,BDKLP22} are compromised by the FSwA flaw \cite{FSwA23}, and even with future fixes, these constructions will still not achieve QROM security.} are the following two reasons: 
\begin{enumerate}[label=(\arabic*)]
\item Within their underlying sigma protocols, a prover is able to compute two distinct responses when the challenge is $0$. This violates the collapsing property, which is required in presently available quantum rewinding techniques, e.g., \cite{Unr12,chiesa2022post}.\\
\item Unruh's transformation \cite{unruh2015non} premises a sigma protocol in the plain model and is not directly applicable to \cite{lai2021collusion,BDKLP22} due to their extra hash commitments involved.
\end{enumerate}

\noindent\textbf{How about a polynomial number of valid responses?}\\
One thing worth mentioning is that there is a variant of Unruh's rewinding tolerating up to polynomially many responses (for the same challenge). However, this variant of rewinding still does not help either because typical constructions involve parallel repetition, which executes the same protocol multiple times independently in order to amplify soundness probability. The number of valid responses will then have an exponential blow-up as the number of repetitions scales, from which the QROM analysis falls apart.
\ifnum\submission=1\vspace{-1em}\fi
\subsection{A Technical Trailer}
\noindent\textbf{Signatures based on isogeny class group action.}
Stolbunov \cite{stolbunov2012cryptographic} attempted an isogeny-based signature scheme by applying the Fiat-Shamir transformation \cite{fiat1987how} to Couveignes' sigma protocol \cite{couveignes2006hard}. Though similar to the discrete log-based protocol by Chaum and van Heyst \cite{chaum91}, Couveignes' challenge space cannot be extended like Schnorr's protocol \cite{Schnorr91}, making parallel repetition necessary.

Later, following the proposal of an efficient class group action
implementation by CSIDH \cite{castryck2018csidh}, SeaSign
\cite{luca2019seasign}, and CSI-FiSh \cite{beullens2019csi} separately
gave efficient signature constructions based on Stolbunov's approach.
One main contribution of their works is that they overcome the lack of
canonical representation for elements in the class group $\Cl$. In
Stolbunuov's scheme, the signer would reveal $rs$ for $r\Usample\Cl$
and secret $s\in\Cl$. However, since $r$ and $s$ are represented as
element-wise bounded vectors in the CSIDH representation, a naive
representation for $rs$ {\em does not} hide the information of $s$. To
cope with this issue, SeaSign proposed a solution using the
Fiat-Shamir with abort technique \cite{Lyu09}, while CSI-FiSh computes
the whole class group structure and its relation lattice for a
specific parameter set, CSIDH-512. In this work, we will adopt the
latter approach, where we can simply assume canonical representation
for elements in $\Cl$.

Beullens, Katsumata, and Pintore \cite{beullens2020calamari} constructed an isogeny-based ring signature using a sigma protocol for an OR-relation. Similarly, our work begins with a sigma protocol that supports an opening operation. It takes $n$ statements and a master public key as inputs, proving one statement while embedding its "identity" into the transcript, which can be extracted with the master secret key. We first discuss how to embed information for opening the transcript.

\noindent\textbf{Embedding opening information.}
In a group signature scheme, the signer's identity must be embedded in the signature for the master to open it. One natural approach to embed opening information is to
encrypt the information with the master public key. Such an approach
is proven successful in a few previous works on group signatures
\cite{boneh2004short, bootle15}. However, since the opening
information is now a ciphertext under the master key, a verifier can only check the validity of the ciphertext via homomorphic operations
or NIZK. Unfortunately, unlike group-based assumptions, it is not yet
known how to achieve such homomorphic property from the weaker
\textit{group action structure} given by isogeny-based assumptions.
There is also no isogeny-based NIZK construction in the literature.
Thus, we must find a more straightforward way to
encode our opening information.

In light of this, we construct our opening functionality in a very
naive way. For a signature with group/ring size $n$ and a master
secret key $s_m$ for opening, we embed the signer identity by one DDH
tuple and $n-1$ dummies. Namely, the opening information is in the
form
\ifnum\submission=1\vspace{-1em}\fi
\begin{equation*}
  \tau = ((r_1E, r_2E, \dots, r_nE), r_kE_m), \text{ where } r_1,\dots, r_n\Usample G \text{ and } E_m=s_mE\;,
\end{equation*}
\ifnum\submission=1\vspace{-0.2 em}\fi
which embeds the signer's identity $k\in[n]$ through
\textit{position}, and is extractable for the manager holding $s_m$.
Note that such $\tau$ keeps all its elements in the form of curves/set
elements, hence the verifier can do further group action on $\tau$ for
consistency checking. This circumvents the previous difficulty, but
with the cost of a larger payload.

\noindent\textbf{Openable sigma protocol.}
To construct a group signature/accountable ring signature scheme
through Fiat-Shamir transformation, we first introduce an intermediate
primitive called openable sigma protocol. We refer the reader to
Section~\ref{sec:openablesigma} for more details.

The openable sigma protocol is similar to the standard OR sigma protocol, as both take 
$n$ statements and one witness as input. While the OR sigma protocol is a proof of knowledge for the OR-relation, the openable sigma protocol is a proof of knowledge
for the relation of the $k$th statement, where $k$ is chosen at the
proving stage and embedded in the first message $\mathsf{com}$,  which can be extracted using the master secret key $s_m$.

For our openable sigma protocol, the special soundness would thus
require an extractor that extracts the $k$th witness, which matches the
opening result. Such a stronger extractor is crucial for proving
unforgeability for group signatures, in which we transform a forger
for party $k$ into the extractor for the $k$th witness. Extractors for
standard OR sigma protocols cannot provide such reduction.

Also, unlike an OR sigma protocol, an openable sigma protocol cannot
get anonymity directly from the HVZK property, as the proving
statement is now embedded in $\mathsf{com}$. To achieve anonymity, we
need an extra property \textit{computational witness
  indistinguishability (CWI)}, which states that for an honest master
key pair $(\mathsf{mpk}, \mathsf{msk})$, the proof for the $i$th
the statement is indistinguishable from the proof for the $j$th
statement. This promises that when transformed into signatures, the signer would be anonymous as long as the manager has not colluded.

The construction of our openable sigma protocol is built on top of the
previous identity embedding component. For statements $E_1,\dots, E_n$
along with the $k$th witness $s_k$ s.t. $E_k=s_kE$ and the master key
pair $(s_m, E_m=s_mE)$, the opening information in our protocol is set
to
\ifnum\submission=1\vspace{-1 em}\fi
\begin{equation*}
  \tau = (E^\beta, E^{\mathsf{Open}}) = ((r_1E_1, r_2E_2, \dots, r_nE_n), r_ks_kE_m), r_1,\dots, r_n\Usample G
\end{equation*}
\ifnum\submission=1\vspace{-0.1em}\fi
As argued earlier, the manager can extract $k$ from $\tau$ with $s_m$.
To complete a proof of knowledge protocol, we use two challenges
($\mathsf{ch}=1,2$) to extract the knowledge of each $r_i$, and use
another two challenges ($\mathsf{ch}=3,4$) to extract ``some
$d=r_ks_k$ s.t. $dE\in E^\beta$ and $dE_m=E^{\mathsf{Open}}$.'' This
gives us a four-challenge openable sigma protocol with a corresponding
special soundness property.

We detail the full construction and security proofs in Section~\ref{sec:openablesigma}.

\noindent\textbf{Parallel repetitions and Fiat-Shamir transformation.}
From our 4-challenge sigma protocol with opening property, we
immediately obtain an identification scheme with a soundness error
$\frac{3}{4}$. In order to amplify away the soundness error, the designed
openable sigma protocol is executed in parallel repetitions. The parallel
repeated proof can be opened by taking the majority of the opening results
from each of its sessions.

It may be tempting to claim that we can achieve
soundness $(\frac{3}{4})^{\lambda}$ through a $\lambda$ repetition.
Unfortunately, this is not the case because each parallel session can
be independently generated with a different witness, and some of
the witnesses might be validly owned by the adversary. As a concrete
example, in a $\lambda$-parallel protocol, an adversary that owns 3
keys can generate $\lambda/4-1$ honest parallel sessions on behalf of
each key, and then cheat on only $\lambda/4+3$ sessions to achieve a
successful forgery.
Looking ahead, for an adversary owning $n_A$ keys, we would need
$n_A\cdot {\sf poly}(\lambda)$ repetitions to keep the error negligibly bounded, where the bound holds with its polynomial degree depending on the adversary being classical or quantum.

\ifnum\submission=0
With an identification scheme under a negligible soundness error, we can
now apply the Fiat-Shamir transformation and obtain a signature
scheme. This is done by substituting the verifier's random message with the hash of earlier transcripts. Classically, a $\mu$-special-sound protocol with some constant $\mu$, under parallel repetitions, preserves its proof of knowledge (PoK) property after being Fiat-Shamir transformed. This is done by adopting the so-called improved forking lemma, which rewinds a soundness adversary under the hood multiple times in order to collect multiple outputs and extract a secret from it. However, in the quantum setting, this does not work trivially. Each time measuring an output of a quantum adversary potentially corrupts its internal state and would require a different set of techniques for analysis.

A recently developed technique, measure-and-reprogram \cite{DFMS19,DFM20}, gives a non-trivial reduction from non-interactive PoK to the interactive one. Then, by means of generalized Unruh's rewinding, one can again rewind the interactive adversary and use it to extract a secret, assuming the {\em collapsing property} of the underlying protocol. Keeping the goal to construct a signature in mind, at some point, we need to reduce security against {\em chosen-message attacks} to {\em no-message attacks}. The crucial part of such reduction lies in the simulation of the signing oracle. One typically needs to reprogram the random oracle and argue that such reprogramming is not noticeable by the adversary except with some small probability. Assuming the reprogrammed places are of high min-entropy, a recent technique {\em adaptive reprogramming} \cite{grilo2021tight} helps one show that such reprogramming is not noticeable, even if the distribution of reprogrammed places is chosen by the quantum adversary on-the-fly.

Now, it becomes retrospectively clear why an openable sigma protocol should be defined in such a way that the index $k$ of the proven statement is embedded in its first message ${\sf com}$, and not in the responses:  only then, obtaining $\mu$ accepted responses to distinct challenges at the same position $i\in[t]$ implies extraction because they share the same first message and therefore the same opening result. In order to obtain these $\mu$ responses, the abovementioned QROM tools are used in a non-blackbox manner. In particular, we are taking advantage of the fact that the extracted responses are subject to uniformly and independently chosen challenges in each repetition and rewinding. We can then fix the sessions that open to the desired $k$, and argue about the probability where $\mu$ distinct challenges occur at the same position.
\fi

\ifnum\submission=1\vspace{-1em}\fi
\section{Preliminary}
\ifnum\submission=1\vspace{-1em}\fi

\subsection{Conventions on efficiency} \label{sec:L_eff}

For the rest of this paper, an algorithm always takes (the unary representation of) an security parameter $\lambda$ as input, which we may keep implicit. Let $\Lcal$ be a certain {\em polynomially-closed} class of non-negative functions $\Lcal\supseteq{\sf poly}(\lambda)$, i.e.
$$
f(\lambda)^C + C\in\Lcal\;,
$$
for every $f(\lambda)\in\Lcal$ and $C>0$. We define an algorithm to be {\em efficient}, if the running time falls into $\Lcal$. In addition, for an oracle algorithm, we further require that it makes at most ${\sf poly}(\lambda)$ many queries. Likewise, a hard problem is to be understood subject to such a notion of efficiency. Note that $\Lcal$ may be strictly bigger than ${\sf poly}(\lambda)$, giving rise to a stronger hardness assumption, e.g. we may assume that breaking CSI-FiSh takes up sub-exponential time, which is super-polynomially larger than the time to evaluate a CSI-FiSh group action, which is unfortunately not polytime (see \url{https://yx7.cc/blah/2023-04-14.html} for more detail).

\subsection{Isogeny and class group action}
\label{sec:classgroupaction}
\ifnum\submission=1\vspace{-1em}\fi
At the bottom level of our construction is the so-called {\em isogeny class group action}, which considers a commutative class group $\Cl$ acting on the set of supersingular elliptic curves $\Ecal\ell\ell_p(\Ocal,\pi_p)$ up to $\Fbb_{p}$ isomorphisms. The group action is free and transitive: for every $E_1,E_2\in\Ecal\ell\ell_p(\Ocal,\pi_p)$, there is exactly one ${\frak a}\in\Cl$ such that $E_2\cong_{\Fbb_p}{\frak a}E_1$. For the use of cryptography,
we note that computing the action is efficient while extracting ${\frak a}$ from the end-point curves is considered intractable. This introduces a hard-to-compute relation while regarding the curves as public keys and the group element ${\frak a}$ as secret. Note that validating the public key is efficient because it is efficient to validate the supersingularity of a curve. We refer readers to Supplementary~\ref{sec:CSIDH} for a guided walk-through.

\noindent\textbf{Hardness assumptions.} Hardness for the {\em group
  action inverse problem} (GAIP) in Definition~\ref{def:GAIP} is
commonly assumed for the above-mentioned group action, which has been
shown useful on constructing signature schemes such as CSI-FiSh
\cite{beullens2019csi} and SeaSign \cite{luca2019seasign}.

\begin{definition}[Group Action Inverse Problem (GAIP)]\label{def:GAIP} On inputs $E_1, E_2 \in \Ecal\ell\ell_p(\mathcal{O},\pi_p)$,
  find ${\frak a}\in\Cl$ such that $E_2 \cong_{\Fbb_p} {\frak a}\cdot
  E_1$.
\end{definition}

In this work, we need to assume hardness for a weaker problem, the
{\em decisional CSIDH problem} (abbreviated as D-CSIDH\footnote{This
  problem is called the decisional Diffie-Hellman group action problem
  (DDHAP) in \cite{stolbunov2012cryptographic}.}) in
Definition~\ref{def:DDH}, which was considered already in 
\cite{couveignes2006hard,stolbunov2012cryptographic}, and is
the natural generalization of the decisional
Diffie-Hellman problem for group actions.

\begin{definition}[Decisional CSIDH (D-CSIDH) / DDHAP]\label{def:DDH}
  For $E\in \Ecal\ell\ell_p(\mathcal{O},\pi_p)$, distinguish the two
  distributions
  \begin{itemize}
  \ifnum\submission=1\vspace{-0.5em}\fi
  \item $(E, {\frak a}E, {\frak b}E, {\frak c}E)$, where
    $\mathfrak{a,b,c}\Usample \Cl$,
  \item $(E, {\frak a}E, {\frak b}E, \mathfrak{ab}E)$, where
    $\mathfrak{a,b}\Usample \Cl$.
  \end{itemize}
\end{definition}
\ifnum\submission=1
    In order to enable the sampling of a random element from the class group (as in Definition~\ref{def:DDH}), we are adopting the setting of \cite{beullens2019csi}, in which a class group with specific parameters is precomputed.
\fi
\ifnum\submission=0
    We note that for typical cryptographic constructions such as CSIDH,
    additional heuristic assumptions are required to sample a random
    element from the class group (as in Definition~\ref{def:DDH}). This is
    because the ``CSIDH-way'' for doing this is by sampling exponents 
    $(e_1,\dots,e_n)$ satisfying $\forall i: |e_i|\leq b_i$,
    and the resulting distribution for ideals ${\frak
      l}_1^{e_1}\dots{\frak l}_n^{e_n}$ is generally non-uniform within
    $\Cl$. To get rid of such heuristics, one could instead work with
    specific parameters, where a bijective (yet efficient) representation
    of ideals is known. For instance, in \cite{beullens2019csi}, the
    structure of $\Cl$ is computed, including a full generating set of
    ideals ${\frak l}_1,\dots,{\frak l}_n$ and the entire lattice
    $\Lambda:=\{(e_1,\dots,e_n)| {\frak l}_1^{e_1}\dots{\frak
      l}_n^{e_n}=\mathsf{id}\}$. Evaluating the group action is just a
    matter of approximating a {\em closest vector} and then evaluating the
    residue as in CSIDH. In this work, we will be working with such a
    ``perfect'' representation of ideals, unless otherwise specified.
    
    As a remark, we note that the D-CSIDH problem for characteristic
    $p=1\mod 4$ is known to be broken \cite{castryck2020breaking}.
    Nevertheless, the attack is not applicable to the standard CSIDH
    setting where $p=3\mod 4$.
\fi
\ifnum\submission=1\vspace{-0.2em}\fi
\subsection{Group action DDH}

In this section, we give an abstract version of the CSIDH group
action. Such formulation will simplify our further construction and
security proof.

A commutative group action $\mathcal{GA}_\lambda = (G_\lambda, \Ecal_\lambda)$
with security parameter $\lambda$ (we will omit the subscripts for
simplicity) is called a DDH-secure group action if the following
holds:
\ifnum\submission=1\vspace{-1em}\fi
\begin{itemize}
\item $G$ acts freely and transitively on $\Ecal$.
\item DDHAP is hard on $\mathcal{GA}_\lambda$. i.e., for any efficient
  adversary $A$ and $E\in\Ecal$, the advantage for $A$ distinguishing
  the following two distributions is $\mathsf{negl}(\lambda)$.
  \begin{itemize}
  \item $(E, aE, bE, cE)$, $a,b,c\Usample G$
  \item $(E, aE, bE, abE)$, $a,b\Usample G$
  \end{itemize}
\end{itemize}
\ifnum\submission=1\vspace{-1em}\fi
As a side remark, the GAIP problem is also hard on a DDH-secure group
action.

For a DDH-secure group action, we can also have a natural parallel
extension for DDHAP. Such extension is also discussed in
\cite{kaafarani2020lossy}.

\ifnum\submission=1
\begin{samepage}
\fi

\ifnum\submission=1\vspace{-0.6em}\fi
\begin{definition}[Parallelized-DDHAP (P-DDHAP)] Given $E\in \Ecal$,
  distinguish the two distributions
  \ifnum\submission=1\vspace{-0.8em}\fi
  \begin{itemize}
  \item $(aE, \pset{b_iE}{i}{[m]}, \pset{c_iE}{i}{[m]})$, where
    $a,\aset{b}{i}{[m]},\aset{c}{i}{[m]}\Usample G$,
  \item $(aE, \pset{b_iE}{i}{[m]}, \pset{ab_iE}{i}{[m]})$, where
    $a,\aset{b}{i}{[m]}\Usample G$.
  \end{itemize}
\end{definition}
By a simple hybrid argument, we can easily see that if $DDHAP$ is
$\epsilon$-hard, then P-DDHAP is $m\epsilon$ hard. 
To see this, note that a single DDHAP can be turned into
a P-DDHAP as $(aE, \pset{r_ibE}{i}{[m]},\pset{r_icE}{i}{[m]})$
for $\aset{r}{i}{[m]}\Usample G$.

In the following we will use this in the form $(aE, \pset{b_iE}{i}{[m]},
\pset{c_iE}{i}{[m]})\approx_c (aE, \pset{c_ia\inv E}{i}{[m]},
\pset{c_iE}{i}{[m]})$.

\ifnum\submission=1
\end{samepage}
\fi

\ifnum\submission=1\vspace{-1em}\fi
\subsection{Sigma protocol}
\ifnum\submission=1\vspace{-1em}\fi
\label{sec:sigma}
A sigma protocol is a three-message public coin proof of knowledge protocol. For interested readers, we provide a brief introduction in Supplementary~\ref{sec:sigpro}.
\ifnum\submission=1\vspace{-1.5 em}\fi
\subsection{The forking lemma}
\label{sec:forking}
\ifnum\submission=1\vspace{-1em}\fi
A sigma-protocol-based signature naturally allows witness extraction
from the special soundness property. By extracting the witness
from signature forgeries, one can reduce the unforgeability property
to the hardness of computing the witness. However, the main gap
between special soundness and unforgeability is that special soundness
needs multiple related transcripts to extract the witness, while a
signature forging adversary only provides one. The forking lemma
\cite{pointcheval1996security} is thus proposed to close this gap. For our particular application, as elaborated in Supplementary~\ref{sec:forkl}, a generalized variant is adopted for the classical analysis.

\ifnum\submission=1\vspace{-1.5em}\fi
\subsection{Group signature}
\ifnum\submission=1\vspace{-1em}\fi
A group signature scheme consists of one manager and $n$ parties. The
manager can set up a group and provide secret keys to each party.
Every party is allowed to generate signatures on behalf of the whole
group. Such signatures are publicly verifiable without
revealing the corresponding signers, except the manager can open signers'
identities with his master's secret key. We refer readers to Supplementary~\ref{sec:gsig} for group signature syntax and formal definitions.

\ifnum\submission=1\vspace{-1 em}\fi
\subsection{Accountable ring signature}
\ifnum\submission=1\vspace{-1em}\fi
Accountable ring signatures (ARS) are a natural generalization for
both group and ring signatures. Compared to a group signature, ARS gives the power of group decision to the signer. On
signing, the signer can sign for an arbitrary group (or ring, to fit
the original naming) and can decide on a master independent of the
choice of the group. The master can open the signer's identity
among the group without needing to participate in the key generation
of parties in the ring. Note that accountable ring signatures directly
imply group signatures simply by fixing the group and the master
party at the key generation step. Thus, ARS can be viewed as a
more flexible form of group signature.

\noindent\textbf{Syntax.} An accountable ring signature scheme
$\mathcal{ARS}$ is associated with the following sets
$\Mcal$, $\Kcal_m$, $\Kcal$, $\Kcal\Pcal_m$, $\Kcal\Pcal$ and (efficient) algorithms
$\MGen$, $\Gen$, $\Sign$, $\Vrfy$, $\Open$, as elaborated below.

\begin{itemize}
\item $\MGen(1^\lambda)\rightarrow (\mathsf{mpk},\mathsf{msk})\in\Kcal\Pcal_m$
  generates a master public-secret key pair.
\item $\Gen(1^\lambda)\rightarrow (\pk,\sk)\in\Kcal\Pcal$ generates a public
  key-secret key pair for a ring member.
\item $\Sign(\mathsf{mpk}, S, m, \sk
  )\rightarrow \sigma$, for a message $m\in\Mcal$, a finite set of public keys $S\subset_{\sf fin}\Kcal$ with the existence of $\pk\in S$ such that $(\pk,\sk)\in\Kcal\Pcal$, generates a signature $\sigma$.
\item $\Vrfy(\mathsf{mpk},S , m, \sigma) \rightarrow {\sf acc}\in\{0,1\}$ verifies whether the signature is valid.
\item $\Open(\mathsf{msk}, S, m, \sigma) \rightarrow
  \pk\in S\cup\{\perp\} $ reveals an identity $\pk$, which presumably should be the public key of the signer of $\sigma$. It outputs
  $\pk=\perp$ when the opening fails, (e.g. when $\sigma$ is
  malformed).
\end{itemize}
\ifnum\submission=1\vspace{-1em}\fi
We refer to $\Mcal$ as the message space, $\Kcal_m$ as the master public key space
and $\Kcal$ as the public key space. We
also define $\Kcal\Pcal_m$ to be the set of all master key pairs
$(\mathsf{mpk}, \mathsf{msk})$, and $\Kcal\Pcal$ to be the set of all
public-private key pairs $(\pk,\sk)$. For simplicity, we keep the
parameter $\lambda$ implicit for the before-mentioned key spaces, and
additionally require public keys to be all distinct for a set $S$ of size
$|S|\leq \mathsf{poly}(\lambda)$.

An accountable ring signature scheme should satisfy the following
security properties.

\noindent\textbf{Correctness.} An ARS is said to be correct if every
honest signature can be correctly verified and opened.
\begin{definition}
  An accountable ring signature scheme $\mathcal{ARS}$ is correct if
  for any master key pair $(\mathsf{mpk},
  \mathsf{msk})\in\Kcal\Pcal_m$, any key pair
  $(\pk,\sk)\in\Kcal\Pcal$, and any set of public keys $S$ such that
  $\pk\in S$,
  \begin{equation*}
        \Pr\left[
        \substack{
        \mathsf{acc}=1\land \mathsf{out}=\pk
        } \middle| 
        \substack{
        \sigma\leftarrow\Sign(\mathsf{mpk}, S, m, \sk),\\
        \mathsf{acc} \leftarrow \Ver(\mathsf{mpk}, S, m, \sigma),\\
        \mathsf{out} \leftarrow\Open(\mathsf{msk}, S, m, \sigma)
        } \right]> 1-\mathsf{negl}(\lambda).
    \end{equation*}
\end{definition}

\noindent\textbf{Anonymity.} An ARS is said to be anonymous if no
adversary can determine the signer's identity within the set of
signers of a signature without using the master secret key.
\begin{definition}\label{def:anonymity}
  An accountable ring signature scheme $\mathcal{ARS}$ is anonymous if
  for any efficient adversary $A$ and any two key pairs $(\pk_0, \sk_0),
  (\pk_1,\sk_1)\in \Kcal\Pcal$,
    \begin{equation*}
    \left|
    \Pr\left[1\leftarrow A^{\Sign^*(\mathsf{mpk}_\bullet,\bullet, \bullet, \sk_0),{\sf mpk}_\bullet}(x)\right]
    - \Pr\left[1\leftarrow A^{\Sign^*(\mathsf{mpk}_\bullet,\bullet, \bullet, \sk_1),{\sf mpk}_\bullet}(x)\right]
    \right|\leq \mathsf{negl}(\lambda)\;,
    \end{equation*}
    with each query $\Sign^*(\mathsf{mpk}_\nu, S, m, \sk_b)$ returning an honest signature only when both $\pk_0, \pk_1\in S$ and otherwise abort, where each master key pairs $(\mathsf{mpk}_\nu,\mathsf{msk}_\nu)\leftarrow\MGen(1^\lambda)$ are sampled honestly.
\end{definition}

\ifnum\submission=1\vspace{-1em} \fi
\begin{remark}
As we do not forbid $x$ to contain information about the secret keys, adversaries in Definition~\ref{def:anonymity} are referred to as being under the full key exposure.
\end{remark}
\ifnum\submission=1\vspace{-1em} \fi
\noindent\textbf{Unforgeability.} An ARS is said to be unforgeable if
no adversary can forge a valid signature that fails to open or opens
to some non-corrupted party, even if the manager has also
colluded. We model this property with the unforgeability game
$G_{A}^{\mathsf{UF}}$ as defined below. Note that below we quantify over $A$ that may have the master key pairs $(\mathsf{mpk},\mathsf{msk})$ hard-coded within.

\begin{algorithm}[H]
\caption{$G_{A}^{\mathsf{UF}}({\sf mpk},{\sf msk})$: Unforgeability game}
\begin{algorithmic}[1]
    \STATE $(\pk,\sk)\gets\Gen(1^\lambda)$
    \STATE $(S^*, m^*, \sigma^*)\leftarrow A^{\textbf{Sign}(\bullet, \bullet,
    \bullet, \sk), H}(\pk)$
    \STATE {\bf check} $\sigma^*$ is not produced by querying $\textbf{Sign}({\sf mpk},S^*,m^*,\sk)$
    \STATE {\bf check} $1\leftarrow \Vrfy(\mathsf{mpk}, S^*, m^*, \sigma^*)$
    \STATE {\bf check} $\pk\text{ or }\bot\leftarrow\Open(\mathsf{msk}, S^*, m^*, \sigma^*)$
    \STATE $A$ wins if all check pass
\end{algorithmic}
\end{algorithm}

\ifnum\submission=1\vspace{-2.5em}\fi
\begin{definition}
  An accountable ring signature scheme $\mathcal{ARS}$ is unforgeable
  if for any efficient adversary $A$, any valid master key pair
  $(\mathsf{mpk}, \mathsf{msk})\in\Kcal\Pcal_m$
    \begin{equation*}
        \Pr[A\ \text{wins}\ G_{A}^{\mathsf{UF}}(\mathsf{mpk}, \mathsf{msk})] < \mathsf{negl}(\lambda).
    \end{equation*}
\end{definition}
\ifnum\submission=1\vspace{-1em}\fi
\noindent\textbf{Transforming ARS to GS.}
\label{sec:ARStoGS}
As mentioned earlier, an accountable ring signature can be viewed as a
generalization of a group signature. We give here the general
transformation from an ARS scheme $\mathcal{ARS}$ to a group signature
scheme $\mathcal{GS}^\mathcal{ARS}$.

The algorithms of the group signature scheme
$\mathcal{GS}^\mathcal{ARS}$ are detailed as follows:

\begin{itemize}
    \item $\mathbf{GKeygen}(1^\lambda,1^n)$:
    \begin{algorithmic}[1]
        \STATE $(\mathsf{mpk}, \mathsf{msk})\leftarrow \mathcal{ARS}.\MGen(1^\lambda)$
        \STATE $\forall i\in[n], (\pk_i, \sk_i)\leftarrow\mathcal{ARS}.\Gen(1^{\lambda})$ and $S=\nset{\pk}{i}$
        \RETURN $(\mathsf{gpk}=({\sf mpk, S}), \nset{\sk}{i}, \mathsf{msk})$
    \end{algorithmic}
    
    \item $\mathbf{GSign}(\mathsf{gpk}=(\mathsf{mpk}, S),m,\sk_k)$
    \begin{algorithmic}[1]
    	\RETURN $\sigma\leftarrow\mathcal{ARS}.\Sign(\mathsf{mpk}, S, m, \sk_k)$
    \end{algorithmic}
    
    \item $\mathbf{GVerify}(\mathsf{gpk}=(\mathsf{mpk}, S),m,\sigma)$:
    \begin{algorithmic}[1]
    	\RETURN $\sigma\leftarrow\mathcal{ARS}.\Ver(\mathsf{mpk}, S, m, \sigma)$
    \end{algorithmic}
    
    \item $\mathbf{GOpen}(\mathsf{gpk}=(\mathsf{mpk}, S),\mathsf{msk}, m, \sigma)$: 
    \begin{algorithmic}[1]
    	\STATE $\pk\leftarrow\mathcal{ARS}.\Open(\mathsf{msk}, S, m, \sigma)$
    	\RETURN $k$ s.t. $\pk = \pk_k\in S$ or $\perp$ otherwise
    \end{algorithmic}
\end{itemize}

Note that the transformation only changes the formulation of the setup
stage. Thus, the security properties from $\mathcal{ARS}$ transfer
directly to the induced group signature scheme
$\mathcal{GS}^\mathcal{ARS}$.

\ifnum\submission=1\vspace{-1.5 em}\fi
\section{Openable sigma protocol}
\label{sec:openablesigma}
\ifnum\submission=1\vspace{-1.5 em}\fi

In this section, we will introduce the openable sigma protocol, which
is an intermediate primitive toward group signatures and accountable
ring signatures. We will first give some intuition on how we formulate
this primitive, and then give a formal definition and construction
from DDH-hard group actions.

\ifnum\submission=1\vspace{-2em}\fi
\subsection{Intuition}
\ifnum\submission=1\vspace{-1em}\fi
Typical construction of a Fiat-Shamir-based signature starts from a
sigma protocol. As introduced in Section~\ref{sec:sigma}, the three
message protocol $(\mathsf{com}, \mathsf{ch}, \mathsf{resp})$ only
requires special soundness, which is, informally speaking, weaker than
the unforgeability property in the sense that multiple transcripts are
required in order to break the underlying hardness. The forking lemma
closes this gap with the power of rewinding and random oracle
programming. As stated in Section~\ref{sec:forking}, the lemma takes a
forger that outputs a single forgery and gives an algorithm that
outputs multiple instances of valid $(\mathsf{com}, \mathsf{ch}_j,
\mathsf{resp}_j)$'s. This gives a transformation from a signature
breaker to a witness extractor, bridging the two security notions.
	
For our accountable ring signature, we thus plan to follow the
previous roadmap. We design a sigma protocol that supports an extra
``opening'' property. Our openable sigma protocol takes $n$ statements
as input and additionally requires the prover to take a master public
key $\mathsf{mpk}$ as input on generating the first message
$\mathsf{com}$. The function $\Open$, with the master secret key
$\mathsf{msk}$, can then extract the actual statement to which the
proving witness corresponds. For a $\mathsf{com}$ generated from
statement $(x_1, \dots, x_n)$ and witness $w_i$ with $(x_i, w_i)\in
R$, we have $x_i = \Open(\mathsf{com}, \mathsf{msk})$. As our
target is a signature scheme, $(x_i, w_i)$ would be set to public
key/secret key pairs, and thus the open function outputs the
signer's identity.

To achieve the stronger security property of ARS after the Fiat-Shamir
transformation, our openable sigma protocol needs to have
modified security properties correspondingly. For \textit{special
  soundness}, we would not be satisfied with extracting only ``one of
the witnesses''; instead, we need to build an extractor that extracts
a witness which matches the opening result. Such a stronger extractor
will allow us to extract secret keys from adversaries that can
impersonate other players. For \textit{honest
  verifier zero knowledge} (HVZK), we require the transcript to
be ZK even when given the master secret key $\mathsf{msk}$. This is
crucial for proving that the impersonating attack cannot succeed even
with a corrupted manager. Note that when given $\mathsf{msk}$, one
cannot hope to hide the signer's identity, so we only require ZK against
the signer's witness. The formulation for the HVZK simulator thus
takes the signer identity as input. Finally, we need an extra
property to provide anonymity for the signer, which we named 
\textit{computational witness indistinguishability} (CWI). CWI
requires that, given honest master key pairs, the transcript generated
from two different witnesses/identities should be indistinguishable.
This property is formulated as the indistinguishability of two signing
oracles.

\ifnum\submission=1\vspace{-2em}\fi
\subsection{Definition}
\ifnum\submission=1\vspace{-1em}\fi

An openable sigma protocol $\Sigma$ is defined with respect to two relations. A base relation
$R\subset X\times W$ and an opening relation $R_m$. Each ${\cal R}\in\{R, R_m\}$ of the both relations is efficiently samplable with respect to some distribution, but for a fresh sample $(x,s)\gets {\cal R}(1^\lambda)$, it is hard to derive the witness $s$ from the statement $x$ as the security parameter $\lambda$ scales. We will keep $\lambda$ implicit for convenience if the context is clear. Additionally, we define the OR-relation for $R$, i.e. $(\nset{x}{i},
s)\in R_n$ if and only if all $x_i$ are distinct and $\exists i\in[n]$
s.t. $(x_i,s)\in R$. The openable sigma protocol $\Sigma$ contains the following four algorithms.
\ifnum\submission=1\vspace{-0.5em}\fi
\begin{itemize}
\item $\Com(x_m, \nset{x}{i}, s)\rightarrow (\mathsf{com},st)$
  generates a commitment $\mathsf{com}$ based on $(\nset{x}{i}, s)\in
  R_n$. $\Com$ also generates a state $st$ which is shared with
  $\Resp$ and will be kept implicit for convenience.

\item $\Resp(x_m, \nset{x}{i}, s, \mathsf{com}, \mathsf{ch},
  st)\rightarrow \mathsf{resp}$ computes a response $\mathsf{resp}$
  relative to a challenge $\mathsf{ch}\Usample\Ccal$.
\item $\Ver(x_m, \nset{x}{i}, \mathsf{com}, \mathsf{ch},
  \mathsf{resp})\rightarrow 1/0$ verifies whether a tuple
  $(\mathsf{com}, \mathsf{ch}, \mathsf{resp})$ is valid. $\Ver$
  outputs $1$ if the verification passes and $0$ otherwise.

\item $\Open(s_m, \nset{x}{i}, \mathsf{com}) \rightarrow x\in
  \nset{x}{i}\cup \{\perp\} $ reveals some $(x,s)\in R$, where $s$ is
  the witness used to generate the commitment $\mathsf{com}$. It
  outputs $x=\perp$ when the opening fails. (i.e. when $\mathsf{com}$
  is malformed)
\end{itemize} 
An openable sigma protocol is secure if it is high min-entropy, computational unique-response, correct, $\mu$-special sound for some constant $\mu$ and statistical honest-verifier zero-knowledge, as defined below.
\begin{definition}[High min-entropy]
  An openable sigma protocol $\Sigma$ is of high min-entropy if the for any possible commitment ${\sf com}_0$
  $$\Pr[\Com(x)\rightarrow{\sf com}={\sf com}_0]\leq {\sf negl}(\lambda)\;.$$
\end{definition}

\begin{definition}[Unique-response property]
    An openable sigma protocol $\Sigma$ is computational unique-response if for every $(x_m,s_m)\in R_m$ and every efficient algorithm $A$
    $$
    \Pr_{(x,s)\gets R(1^\lambda)}\left[\substack{(S,{\sf com},{\sf ch},{\sf resp}_1,{\sf resp}_2))\gets A(x)\\
    \forall i\in[2]:1\gets \Vrfy(x_m,S,{\sf com},{\sf ch}_i,{\sf resp}_i)\\
    \pk\text{ or }\bot\gets \Open(s_m,S,{\sf com})\\
    {\sf resp}_1\neq {\sf resp}_2}\right]\leq{\sf negl}(\lambda)\;.
    $$
    Furthermore, $\Sigma$ is called perfect unique-response if for every $x_m, S, {\sf com}, {\sf ch}$ there is at most one ${\sf resp}$ such that $1\gets\Vrfy(x_m,S,{\sf com},{\sf ch},{\sf resp})$.
\end{definition}

\begin{definition}[Correctness]
  An openable sigma protocol $\Sigma$ is correct if
  for all $n=poly(\lambda)$, $(x_m, s_m)\in R_m$, $(\nset{x}{i}, s)\in
  R_n$, $\mathsf{ch}\in\Ccal$, and $x\in\nset{x}{i}$ such that $(x,
  s)\in R$,
  \begin{equation*}
    \Pr\left[
      \mathsf{acc}=1\wedge id=x\middle|
      \substack{
        \mathsf{com} \leftarrow \Com(x_m, \nset{x}{i}, s),\\
        \mathsf{resp} \leftarrow \Resp(x_m, \nset{x}{i}, s, \mathsf{com}, \mathsf{ch}),\\
        \mathsf{acc} \leftarrow \Ver(\{x_m, \nset{x}{i}, \mathsf{com}, \mathsf{ch}, \mathsf{resp}),\\
        id \leftarrow\Open(s_m, \nset{x}{i}, \mathsf{com})
      } \right] \geq 1-\mathsf{negl}(\lambda).
    \end{equation*}
\end{definition}

\begin{definition}[$\mu$-Special Soundness]
  An openable sigma protocol $\Sigma$ is
  $\mu$-special sound if for all $n=poly(\lambda)$ there exists an
  efficient extractor $\mathbf{Ext}$ such that, for all $(x_m, s_m)\in
  R_m$ and any $(\nset{x}{i} , \mathsf{com},
  \aset{\mathsf{ch}}{j}{[\mu]}, \aset{\mathsf{resp}}{j}{[\mu]})$ such that
  each $\mathsf{ch}_j\in\Ccal$ are distinct, then
    \begin{equation}\label{eq:openable_special_sound}
        \Pr\left[
        \substack{
        (\forall j\in[\mu],\ \mathsf{acc}_j=1)\land\\ (x=\perp\lor (x, s)\notin R)}\middle|
        \substack{
        \forall j\in\Ccal,\ \mathsf{acc}_j \leftarrow \mathbf{Ver}(x_m, \nset{x}{i}, \mathsf{com}, \mathsf{ch}_j, \mathsf{resp}_j),\\
        x \leftarrow\mathbf{Open}(s_m,\nset{x}{i}, \mathsf{com}),\\
        s \leftarrow\mathbf{Ext}(\nset{x}{i}, \mathsf{com}, \aset{\mathsf{ch}}{j}{[\mu]},  \aset{\mathsf{resp}}{j}{[\mu]})
        } \right] =0.
    \end{equation} 
\end{definition}

\begin{definition}[Statistical honest-verifier zero-knowledge / sHVZK]
  An openable
  sigma protocol $\Sigma$ is statistical HVZK if there exists an
  efficient simulator $\mathbf{Sim}$ such that, for any $x_m\in X_m$,
  any $(\nset{x}{i}, s)\in R_n$, and $x\in\nset{x}{i}$ such that $(x,
  s)\in R$,
    \begin{equation*}
        \Trans(x_m, \nset{x}{i}, s) \approx_s \mathbf{Sim}(x_m, \nset{x}{i}, x)
    \end{equation*}
    where $\mathbf{Trans}$ outputs honest transcript $(\mathsf{com},
    \mathsf{ch}, \mathsf{resp})$ generated honestly by $\Com$ and
    $\Resp$ with honestly sampled $\mathsf{ch}\Usample\Ccal$.
\end{definition}

\begin{definition}[Computational witness indistinguishability / CWI]\label{def:CWI}
  An openable sigma protocol $\Sigma$ is computational witness indistinguishable, if
  for any two $(x_i, s_i), (x_j, s_j) \in R$ and any efficient
  adversary $A$, with $x_m^\bullet(\nu)$ returning $x_m^\nu$ where $(x_m^\nu,s_m^\nu)\gets R_m$ is freshly sampled for each $\nu$, we have
     \begin{equation*}
        \left|
        \begin{aligned}
            \Pr\left[1\leftarrow A^{ \mathbf{Trans^*}(x_m^\bullet, \bullet, s_i),x_m^\bullet}(x)\right] 
            - \Pr\left[1\leftarrow A^{ \mathbf{Trans^*}(x_m^\bullet, \bullet, s_j),x_m^\bullet}(x)\right]
        \end{aligned}
        \right|\leq \mathsf{negl}(\lambda)
    \end{equation*}
    where $\mathbf{Trans}^*(\mathsf{mpk}_\nu, S, s_k)$ for whichever $k\in\{i,j\}$ returns an honest transcript $(\mathsf{com}, \mathsf{ch}, \mathsf{resp})$ tuple from $\Sigma$ if both $x_i, x_j\in S$ and aborts otherwise.
\end{definition}

\ifnum\submission=1\vspace{-2 em}\fi
\subsection{Construction}\label{sec:sigma_construction}
\ifnum\submission=1\vspace{-1em}\fi
Here, we give our construction to an openable sigma protocol
$\Sigma_{GA}$ for relations from our DDH-secure group action
$\mathcal{GA} = (G, \Ecal)$. We let $E\in\Ecal$ be some fixed
element in $\Ecal$. When implemented with CSIDH, we can choose the
curve $E_0:y^2=x^3+x$ for simplicity. Let the relation $R_E = \{(aE,
a)| a\in G\}\subset \Ecal\times G$.

For our $\Sigma_{GA}$, we set its opening and base relations $R_m=R=R_E$, with
the natural instance generator that samples $a\Usample G$ and outputs $(aE,a)$. For inputs $E_m\in \Ecal$ and $(\nset{E}{i}, s)\in R_n$ with
any $n=poly(\lambda)$, the algorithms for $\Sigma_{GA}$ are
constructed as follow.

\begin{itemize}
    \item $\Com(E_m, \nset{E}{i}, s)$
    \begin{algorithmic}[1]
        \STATE set $k\in [n]$ s.t. $(E_k, s) \in R$.
        \STATE $\nset{\Delta}{i}, \nset{\Delta'}{i}, b \Usample G$
        \STATE $\tau \Usample \textit{sym}(n)$ \COMMENT{$\tau$ is a random permutation}
        \STATE $\forall i\in[n] :  E_i^\alpha := \Delta_iE_i$
        \STATE $\forall i\in[n] :  E_i^\beta := \Delta'_iE_i^\alpha = \Delta_i\Delta'_iE_i$
        \STATE $\forall i\in[n] :  E_i^\gamma := bE_i^\beta = \Delta_i\Delta'_ibE_i$
        \STATE $E^{\mathsf{Open}} := \Delta_k\Delta'_ksE_m$
        \STATE $E^{\mathsf{Check}} := \Delta_k\Delta'_kbsE_m = bE^{\mathsf{Open}}$
        \STATE $\mathsf{st} = (\nset{\Delta}{i}, \nset{\Delta'}{i}, b, l=\Delta_k\Delta_k'bs)$
        \RETURN $(\mathsf{com}, {\sf st})=((\nset{E^\alpha}{i}, \nset{E^\beta}{i}, \tau(\nset{E^\gamma}{i}), E^{\mathsf{Open}}, E^{\mathsf{Check}}), {\sf st})$\\ \COMMENT{We use $\tau(\bullet)$ as a lazy convention of sending a permuted list}
    \end{algorithmic}
    
    \item $\Resp(E_m, \nset{E}{i}, s, \mathsf{com}, \mathsf{ch}, st)$:
    \begin{algorithmic}[1]
        \IF{$\mathsf{ch} = 1$}
        \RETURN $\mathsf{resp} := \nset{\Delta}{i}$
        \ENDIF
        \IF{$\mathsf{ch} = 2$}
        \RETURN $\mathsf{resp} := \nset{\Delta'}{i}$
        \ENDIF
        \IF{$\mathsf{ch} = 3$}
        \RETURN $\mathsf{resp} := b$
        \ENDIF
        \IF{$\mathsf{ch} = 4$}
        \RETURN $\mathsf{resp} := l= \Delta_k\Delta'_kbs$
        \ENDIF
    \end{algorithmic}
    
    \item $\Ver(E_m, \nset{E}{i}, \mathsf{com}, \mathsf{ch}, \mathsf{resp})$:
    \begin{algorithmic}[1]
        \RETURN 0 if $\nset{E}{i}$ or $\nset{E^\beta}{i}$ are not all distinct
        \IF{$\mathsf{ch} = 1$}
        \STATE $\mathbf{check\ } \forall i\in[n] :  E_i^\alpha = \Delta_iE_i$
        \ENDIF
        \IF{$\mathsf{ch} = 2$}
        \STATE $\mathbf{check\ } \forall i\in[n] :  \Delta'_iE_i^\alpha = E_i^\beta$
        \ENDIF
        \IF{$\mathsf{ch} = 3$}

        \STATE $\mathbf{check\ } \exists \tau'\in \textit{sym}(n)$ s.t. $\tau'(\nset{bE^\beta}{i}) = \tau(\nset{E^\gamma}{i})$
        \STATE $\mathbf{check\ } E^{\mathsf{Check}} = bE^{\mathsf{Open}}$
        \ENDIF
        \IF{$\mathsf{ch} = 4$}
        \STATE $\mathbf{check\ }  E^{\mathsf{Check}} = lE_m$
        \STATE $\mathbf{check\ }  \exists E^\gamma \in \tau(\nset{E^\gamma}{i})\ s.t.\ E^\gamma= lE$
        \ENDIF
        \RETURN 1 $\mathbf{\ if\ all\ checks\ pass}$
    \end{algorithmic}
    
    \item $\Open(s_m:=\mathsf{msk}, \nset{E}{i}:=\nset{\pk}{i}, \mathsf{com})$: 
    \begin{algorithmic}[1]
        \FOR{$i\in[n]$}
            \IF {$s_mE_i^\beta = E^{\mathsf{Open}}$}
                \RETURN $E_i$
            \ENDIF
        \ENDFOR
        \RETURN $\perp$
    \end{algorithmic}
\end{itemize}

\ifnum\submission=1\vspace{-1em}\fi
The construction of our openable sigma protocol looks complicated, but
the intuition is simple. The core section of the message
$\mathsf{com}$ is $(E^\beta, E^{\mathsf{Open}})$, which allows
opening. The other parts of $\mathsf{com}$ are to ensure that the
opening section is honestly generated. $E^\alpha$ along with the
challenge/response pair on $\mathsf{ch}=1,2$ allows extraction for
$\Delta_i\Delta'_i$'s, ensuring that $E^\beta$ is honestly generated.
$(E^\gamma, E^{\mathsf{Check}})$ along with the challenge/response pair
on $\mathsf{ch}=3,4$ verifies the relation between $E^\beta$ and
$E^{\mathsf{Open}}$. By using a permuted $E^\gamma$, the CWI property
is preserved through such a verification process. Combined together,
we complete the proof of knowledge protocol.

\begin{theorem}
    \label{Sigmasecure}
    $\Sigma_{GA}$ is an openable sigma protocol with $R_E$ being both
    the opening relation and the base relation
\end{theorem}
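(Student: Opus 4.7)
The plan is to verify in turn the four defining properties of an openable sigma protocol---completeness, $4$-special soundness, statistical HVZK, and CWI---exploiting that the challenge space has exactly four values.

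Completeness and $4$-special soundness are largely syntactic. For completeness each verifier check unfolds directly from the construction of $\Com$: $E_i^\alpha=\Delta_iE_i$ for $\mathsf{ch}=1$; $\Delta'_iE_i^\alpha=E_i^\beta$ for $\mathsf{ch}=2$; the permutation equality between $\{bE_i^\beta\}$ and $\tau(\{E_i^\gamma\})$ along with $E^{\mathsf{Check}}=bE^{\mathsf{Open}}$ for $\mathsf{ch}=3$; and $E^{\mathsf{Check}}=lE_m$ together with $lE\in\tau(\{E_i^\gamma\})$ for $\mathsf{ch}=4$. Correct opening follows from the identity $s_mE_k^\beta=\Delta_k\Delta'_k sE_m=E^{\mathsf{Open}}$, and distinctness of $\nset{E^\beta}{i}$ holds with overwhelming probability over $\Delta_i\Delta'_i$. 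For $4$-special soundness, from the four accepting transcripts the extractor reads $\nset{\Delta}{i}$ and $\nset{\Delta'}{i}$ (and hence $\nset{E^\beta}{i}$) out of $\mathsf{ch}=1,2$, recovers $b$ together with the bijection $\{bE_i^\beta\}\leftrightarrow\tau(\nset{E^\gamma}{i})$ from $\mathsf{ch}=3$, and reads $l$ from $\mathsf{ch}=4$. Combined with the distinctness of $\nset{E^\beta}{i}$, the constraint $lE\in\tau(\nset{E^\gamma}{i})$ pins down a unique index $k$ with $lE=bE_k^\beta$; then $s_k:=l(b\Delta_k\Delta'_k)\inv$ satisfies $E_k=s_kE$ and $s_mE_k^\beta=(l/b)E_m=E^{\mathsf{Open}}$, so $\Open$ returns exactly $E_k$ and $(E_k,s_k)\in R_E$ as required.

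For statistical HVZK, $\Sim$ samples $\mathsf{ch}\Usample\sch$ and simulates conditioned on that choice. The key trick for avoiding $s$ and $s_m$ is to draw $t\Usample G$ and plant $E_k^\beta:=tE$ together with $E^{\mathsf{Open}}:=tE_m$, which perfectly reproduces the honest joint law $(\Delta_k\Delta'_k E_k,\,s_m\Delta_k\Delta'_k E_k)$ because $\Delta_k\Delta'_k s$ is uniform in $G$. Depending on $\mathsf{ch}$, the remaining components of $\mathsf{com}$ are either forced by the response (e.g.\ $E_i^\alpha=\Delta_iE_i$ for $\mathsf{ch}=1$), sampled uniformly from $\Ecal$ (e.g.\ $E_i^\beta$ for $i\neq k$), or derived from a fresh $b'\Usample G$ enforcing $E_i^\gamma=b'E_i^\beta$ and $E^{\mathsf{Check}}=b'E^{\mathsf{Open}}$; for $\mathsf{ch}=4$ one instead samples $l\Usample G$ and lets the implicit $b':=l/t$ play the same role. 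A branch-by-branch check then shows that the simulated joint distribution is statistically close to the honest one.

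For CWI I would reduce to P-DDHAP via a two-step hybrid. Once the challenge and the internal randomness are fixed, the only place a witness $s_i$ enters an honest transcript is the planted equality $E^{\mathsf{Open}}=s_mE_i^\beta$; every other slot ($\tau$, $\nset{\Delta}{i}$, $\nset{\Delta'}{i}$, $b$, and the scalar $l$ in $\mathsf{ch}=4$) has a marginal distribution independent of which witness was used. Embedding a P-DDHAP instance as $E_m:=aE$ and, for the $l$-th signing query, planting $E_i^\beta:=b_lE$ with $E^{\mathsf{Open}}:=y_lE$, the real case $y_l=ab_l$ reproduces an honest transcript with witness $s_i$, while $y_l$ uniform sits in an intermediate hybrid where $E^{\mathsf{Open}}$ is independent of everything else; a symmetric embedding on coordinate $j$ then bridges the intermediate hybrid to the honest distribution with witness $s_j$. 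The rest of each transcript is produced exactly as in the HVZK simulator above. The hard part is precisely this CWI step: one must carefully verify that across all four challenge branches the only witness-dependent slot is the pair $(E_k^\beta, E^{\mathsf{Open}})$---in particular that the responses $b$ and $l$ remain marginally uniform in $G$ regardless of which witness was used---so that injecting the P-DDHAP instance into just those two slots suffices, at which point the standard hybrid over $Q$ signing queries lifts the single-instance DDH step to its P-DDHAP form and closes the proof.
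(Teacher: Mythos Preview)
Your treatment of completeness, $4$-special soundness, and sHVZK is sound and tracks the paper. The gap is in the CWI argument at the branch $\mathsf{ch}=4$. You assert that once $E^{\mathsf{Open}}$ is made uniform the intermediate hybrid is witness-independent, because ``every other slot has a marginal distribution independent of which witness was used.'' Marginal independence is not enough: for $\mathsf{ch}=4$ the \emph{joint} law of $(l,\nset{E^\beta}{i},\tau(\nset{E^\gamma}{i}))$ still determines $k$ information-theoretically. Since the action is free and transitive, the blinding $b$ is (with overwhelming probability) fixed by the pair $(\nset{E^\beta}{i},\tau(\nset{E^\gamma}{i}))$, and then $lE=bE_k^\beta$ singles out $E_k^\beta$ in the \emph{ordered} list $\nset{E^\beta}{i}$, hence $k$. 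Your concrete embedding also fails to run at $\mathsf{ch}=4$: if $E_k^\beta=b_\ell E$ is an unknown DDH component you cannot output the response $l=b_\ell\cdot b$, and if you instead sample $l$ and let $b=l/b_\ell$ be implicit you cannot compute $E_i^\gamma=bE_i^\beta$ for $i\neq k$.

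The paper closes this with a \emph{second} P-DDHAP step applied only at $\mathsf{ch}=4$: after a first hybrid $\Hyb_1$ randomizes $E^{\mathsf{Open}}$, a further hybrid $\Hyb_2$ replaces every $E_i^\gamma$ by a fresh $r_iE$ and sets $l=r_k$, $E^{\mathsf{Check}}=r_kE_m$. Only then does the random permutation $\tau$ make $(\tau(\{r_iE\}_i),r_kE_m,r_k)$ identically distributed for every choice of $k$, so that $\Hyb_2(s_{k_0})=\Hyb_2(s_{k_1})$ exactly. The full chain is $\Trans(s_{k_0})\approx_c\Hyb_1\approx_c\Hyb_2=\Hyb_2\approx_c\Hyb_1\approx_c\Trans(s_{k_1})$; your single-hybrid plan collapses the two middle transitions into one, and that is where it breaks.
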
 

\ifnum\submission=1\vspace{-2 em}\fi
\subsection{Security}\label{sec:openable_sigma_security}

The proof for Theorem~\ref{Sigmasecure} is broken down into proving
each of the required properties. First, by construction one immediately get $\Sigma_{GA}$ being perfect unique-response, and high min-entropy. It is also easy to show that $\Sigma_{GA}$ is correct and statistical HVZK (see Supplementary~\ref{sec:sigma_GA_proof} for full proof).

\begin{lemma}\label{lem:sigma_correct}
    $\Sigma_{GA}$ is \textbf{correct} and \textbf{statistical honest-verifier zero-knowledge}.
\end{lemma}


\begin{lemma}
   $\Sigma_{GA}$ is \textbf{4-special sound}.
\end{lemma}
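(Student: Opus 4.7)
The plan is to show that knowing one valid response for each of the four distinct challenges $\mathsf{ch}\in\{1,2,3,4\}$ pins down enough algebraic information about $\mathsf{com}$ to reconstruct a witness $s$ that matches the unique value $E_k$ returned by $\Open$. I would start by parsing the four responses: from $\mathsf{ch}=1$ the extractor recovers the vector $\nset{\Delta}{i}$ satisfying $E_i^\alpha=\Delta_iE_i$, from $\mathsf{ch}=2$ the vector $\nset{\Delta'}{i}$ satisfying $E_i^\beta=\Delta_i'E_i^\alpha=\Delta_i\Delta_i'E_i$, from $\mathsf{ch}=3$ the scalar $b$ together with the permutation relation $\{bE_i^\beta\}=\tau(\{E_i^\gamma\})$ and the equality $E^{\mathsf{Check}}=bE^{\mathsf{Open}}$, and from $\mathsf{ch}=4$ a scalar $l$ with $E^{\mathsf{Check}}=lE_m$ and the existence of some element of $\tau(\nset{E^\gamma}{i})$ equal to $lE$.

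The next step is to combine these to locate the ``claimed'' signing slot. Since the element $lE$ sits in $\tau(\nset{E^\gamma}{i})$ by the $\mathsf{ch}=4$ check, and by $\mathsf{ch}=3$ that multiset coincides with $\{bE_i^\beta\}_{i\in[n]}$, there is an index $i^*$ with $bE_{i^*}^\beta=lE$, hence $E_{i^*}^\beta=lb^{-1}E$. Combining with the $\mathsf{ch}=1,2$ equations gives $E_{i^*}=(\Delta_{i^*}\Delta_{i^*}')^{-1}lb^{-1}E$, so the natural choice is
\[
  s := (\Delta_{i^*}\Delta_{i^*}')^{-1}\,l\,b^{-1}\in G,
\]
which by construction satisfies $(E_{i^*},s)\in R$. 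The extractor simply outputs this $s$.

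It then remains to argue that $i^*$ is exactly the index $k$ returned by $\Open$, so that the extracted witness corresponds to the opened statement. Using $E^{\mathsf{Check}}=bE^{\mathsf{Open}}=lE_m=ls_mE$ one obtains $E^{\mathsf{Open}}=lb^{-1}s_mE$, and therefore $s_mE_{i^*}^\beta=s_mlb^{-1}E=E^{\mathsf{Open}}$; thus the index $i^*$ satisfies the defining condition of $\Open$. The main subtlety, and the place where I expect to need the verifier's distinctness check, is arguing that $\Open$ cannot legitimately return a different index: the verifier rejects unless $\nset{E^\beta}{i}$ are pairwise distinct, and because $G$ acts freely on $\Ecal$ the map $E_i^\beta\mapsto s_mE_i^\beta$ is injective, so the equation $s_mE_i^\beta=E^{\mathsf{Open}}$ has a unique solution $i=i^*=k$. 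Hence whenever the four transcripts are accepting and the challenges are distinct (forcing all four to appear), the extractor succeeds with probability $1$, and the bad-event probability in the $t$-special soundness definition is $0$.

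The routine parts are the algebraic manipulations; the only potentially tricky aspect is being precise about the permutation $\tau$ in the $\mathsf{ch}=3$ check, since the verifier only tests equality as multisets. But the distinctness of $\nset{E^\beta}{i}$ ensures that $\{bE_i^\beta\}$ is also a set (not a multiset), so the matching index $i^*$ with $bE_{i^*}^\beta=lE$ is well-defined and unique, which suffices for the argument above.
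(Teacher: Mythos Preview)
Your proposal is correct and follows essentially the same approach as the paper: parse the four responses, locate the unique index where $lE=bE_{i^*}^\beta$, output $s=(\Delta_{i^*}\Delta_{i^*}')^{-1}lb^{-1}$, and verify that this index coincides with the output of $\Open$ via $E^{\mathsf{Open}}=s_mE_{i^*}^\beta$. Your treatment is in fact slightly more careful than the paper's, explicitly invoking the distinctness check on $\nset{E^\beta}{i}$ and the freeness of the action to argue uniqueness of $i^*$ both in the $\gamma$-matching step and in the $\Open$ output.
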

\begin{proof}
  For any $E_m\in\Ecal$ and any $(\nset{E}{i} , \mathsf{com},
  \aset{\mathsf{resp}}{j}{\Ccal})$ where\\ $\mathsf{com} =
  (\nset{E^\alpha}{i}, \nset{E^\beta}{i}, \sigma(\nset{E^\gamma}{i})$,
  and $\aset{\mathsf{resp}}{j}{[4]} = (\nset{\Delta}{i},
  \nset{\Delta'}{i}, b, l)$. Suppose that $\forall j\in[4],
  1\leftarrow \mathbf{Ver}(E_m, \nset{E}{i}, \mathsf{com}, j,
  \mathsf{resp}_j)$, then by the definition of $\Ver$, we can get the
  following equations:\\
\begin{equation*}
    \begin{cases}
    \nset{E}{i}, \nset{E^\beta}{i} \text{ are both pairwise distinct sets}\\
    \forall i\in[n] : 
    E_i^\alpha = \Delta_iE_i, 
    E_i^\beta = \Delta'_iE_i^\alpha\\
    \exists \tau'\in \textit{sym}(n)\ s.t.\ 
    \tau'(\nset{bE^\beta}{i}) = \tau(\nset{E^\gamma}{i})\\
    \exists E^\gamma \in \tau(\nset{E^\gamma}{i})\ s.t.\ E^\gamma= lE\\
    E^{\mathsf{Check}} = lE_m= bE^{\mathsf{Open}}
    \end{cases}
\end{equation*}

Thus, there exists a unique $k\in [n]$ such that $lE = bE_k^\beta =
\Delta'_kbE_k^\alpha = \Delta_k\Delta'_kbE_k $, which means
$l(\Delta_k\Delta'_kb)\inv E = E_k$. This implies that $(E_k,
l(\Delta_k\Delta'_kb)\inv)\in R_E$. Furthermore, we also have
$E^{\mathsf{Open}} = b\inv lE_m = s_m b\inv lE = s_m E_k^\beta$. This
implies that $E_k\leftarrow\Open(s_m, \nset{E}{i}, \mathsf{com})$.
Thus $\Open$ does not output $\perp$.
From these observations, we can easily construct the extractor
$\Ext(\mathsf{com}, \aset{\mathsf{resp}}{j}{\Ccal})$, which simply
searches through $k\in [n]$ for $k$ satisfying $lE = bE_k^\beta$, then
output $s=l(\Delta_k\Delta'_kb)\inv$. This concludes the proof that
$\Sigma_{GA}$ is \textbf{4-special sound}.
\end{proof}

\ifnum\submission=1\vspace{-1.5em}\fi
\begin{lemma}
  \label{sigmaCWI}
  $\Sigma_{GA}$ is \textbf{computational witness indistinguishable} (assuming DDHAP
  is hard for $\mathcal{GA}$).
\end{lemma}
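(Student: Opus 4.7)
The plan is to reduce CWI to the parallelized DDHAP assumption via a two-step argument, with the sHVZK property serving as a preprocessing step.

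First, I would apply sHVZK to replace the honest transcript oracle $\mathbf{Trans}^*(\mathsf{mpk},\cdot,s_b)$ with a simulator-based oracle $\mathbf{Sim}^*(\mathsf{mpk},\cdot,E_b)$ (defined analogously, invoking $\mathbf{Sim}$ instead of $\mathbf{Trans}$), which takes only the public key $E_b$ rather than the witness. Since this loss is statistical, it suffices to establish
\[
\mathbf{Sim}^*(\mathsf{mpk},\cdot,E_i)\ \approx_c\ \mathbf{Sim}^*(\mathsf{mpk},\cdot,E_j).
\]

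Second, I would observe that in the simulator's output, the only component whose distribution depends on the signer index $k$ is the element $E^{\mathsf{Open}} = s_m E^\beta_k$, together with the derived $E^{\mathsf{Check}}$ and, for $\mathsf{ch}=4$, the response $l$ satisfying $lE = bE^\beta_k$. All other components — the $E^\alpha_i$'s, $E^\beta_i$'s, the permuted $E^\gamma_i$'s, and the responses for $\mathsf{ch}\in\{1,2,3\}$ — are uniform and independent of $k$. This motivates an intermediate hybrid in which $E^{\mathsf{Open}}$ is replaced by a fresh uniform element of $\Ecal$ (with derived values recomputed consistently); such a hybrid is manifestly independent of $k$, so CWI will follow by transitivity once I show the hybrid is computationally indistinguishable from $\mathbf{Sim}^*(\cdot,E_k)$ for either fixed $k$.

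Third, the main step is a reduction to P-DDHAP. Given an instance $(aE,\{x_lE\}_l,\{c_lE\}_l)$, I would set $\mathsf{mpk}=aE$ and, for each query $l$ with signer index $k$, place $E^\beta_{k,l}=x_lE$ and $E^{\mathsf{Open}}_l=c_lE$. For $i'\neq k$ I sample $t_{i',l}\Usample G$ and set $E^\beta_{i',l} = t_{i',l}\cdot (x_l E)$; since $x_l$ is uniform and the action is free and transitive, this yields a joint distribution on $\{E^\beta_{\cdot,l}\}$ statistically identical to independent uniform. In the DDH case ($c_l=ax_l$) the reduction matches $\mathbf{Sim}^*$ with signer $k$ exactly, and in the uniform case it matches the hybrid. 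For $\mathsf{ch}\in\{1,2,3\}$, the remaining commitment and response fill in easily: sample $b$, compute $E^\gamma_i = bE^\beta_i$ and $E^{\mathsf{Check}} = bE^{\mathsf{Open}}$ directly by group action.

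The hardest part will be the $\mathsf{ch}=4$ case: the response $l$ must satisfy $lE = bE^\beta_{k,l} = b\cdot x_lE$, yet the discrete log $x_l$ is unavailable to the reduction. I would resolve this by sampling $l$ directly (which is the correct uniform distribution) and implicitly setting $b = l/x_l$. Precisely because $E^\beta_{i',l} = t_{i',l}\cdot x_lE$ for $i'\neq k$, the entry $E^\gamma_{i',l} = bE^\beta_{i',l} = t_{i',l}\cdot lE$ becomes directly computable from $l$ and $t_{i',l}$, while $E^\gamma_{k,l} = lE$ and $E^{\mathsf{Check}}_l = lE_m$ follow immediately from $l$ and $aE$. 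With this trick the simulation remains consistent across all four challenges, every verification check passes, and CWI follows from the hardness of P-DDHAP (hence from DDHAP).
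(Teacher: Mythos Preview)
Your preprocessing step via sHVZK (switching $\Trans^*$ to $\Sim^*$) is a clean idea not used in the paper, and your reduction for $\mathsf{ch}\in\{1,2,3\}$ is sound. However, there is a genuine gap in your claim that the intermediate hybrid with $E^{\mathsf{Open}}$ randomized is ``manifestly independent of $k$.'' This fails for $\mathsf{ch}=4$.

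In that case the adversary sees the unpermuted list $\{E^\beta_i\}_i$, the permuted list $\tau(\{E^\gamma_i\})$ where $E^\gamma_i=bE^\beta_i$, and the response $l$ satisfying $lE = E^\gamma_k$. Randomizing $E^{\mathsf{Open}}$ does not touch any of these. An unbounded distinguisher can recover $b$ as the (with overwhelming probability unique) group element taking the set $\{E^\beta_i\}$ to the set appearing in the permuted list, and then $lb^{-1}E = E^\beta_k$ reveals $k$ outright. Hence your reduction establishes $\Sim^*(\cdot,E_k)\approx_c \mathrm{Hybrid}(k)$ for each fixed $k$, but $\mathrm{Hybrid}(k)$ and $\mathrm{Hybrid}(k')$ are not the same distribution, and you give no argument that they are even computationally close, so the chain does not close.

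The paper resolves this with a \emph{second} hybrid ($\Hyb_2$, Lemma~\ref{CWIlem2}): for $\mathsf{ch}=4$ it additionally replaces each $E^\gamma_i$ by a fresh random curve $r_iE$, decorrelated from $E^\beta_i$, and sets $l=r_k$. This step requires a second, separate application of P-DDHAP (with $b^{-1}E$ playing the role of the DDH secret). Only after the $E^\beta$--$E^\gamma$ link is broken does the random permutation $\tau$ genuinely hide which $E^\gamma$ carries index $k$, making the final hybrid perfectly independent of the witness (Lemma~\ref{CWIlem3}). Your single P-DDHAP reduction plays the role of the paper's Lemma~\ref{CWIlem1} only; you are missing the analogue of Lemma~\ref{CWIlem2}.
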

\ifnum\submission=1\vspace{-1em}\fi
Here we will finally use the fact that $\mathcal{GA}$ is DDH-hard. We
will prove this theorem through two hybrids. We highlight the changes
between $\Trans$ and $\Hyb_1$ and between $\Hyb_1$ and $\Hyb_2$ with
different colors for easier comparison.

\ifnum\submission=1\vspace{-1em}\fi

\begin{proof}
  For any $E_m\in\Ecal$ and any $(\nset{E}{i} , \mathsf{com},
  \aset{\mathsf{resp}}{j}{\Ccal})$ where\\ $\mathsf{com} =
  (\nset{E^\alpha}{i}, \nset{E^\beta}{i}, \sigma(\nset{E^\gamma}{i})$,
  and $\aset{\mathsf{resp}}{j}{[4]} = (\nset{\Delta}{i},
  \nset{\Delta'}{i}, b, l)$. Suppose that $\forall j\in[4],
  1\leftarrow \mathbf{Ver}(E_m, \nset{E}{i}, \mathsf{com}, j,
  \mathsf{resp}_j)$, then by the definition of $\Ver$, we can get the
  following equations:\\
\begin{equation*}
    \begin{cases}
    \nset{E}{i}, \nset{E^\beta}{i} \text{ are both pairwise distinct sets}\\
    \forall i\in[n] : 
    E_i^\alpha = \Delta_iE_i, 
    E_i^\beta = \Delta'_iE_i^\alpha\\
    \exists \tau'\in \textit{sym}(n)\ s.t.\ 
    \tau'(\nset{bE^\beta}{i}) = \tau(\nset{E^\gamma}{i})\\
    \exists E^\gamma \in \tau(\nset{E^\gamma}{i})\ s.t.\ E^\gamma= lE\\
    E^{\mathsf{Check}} = lE_m= bE^{\mathsf{Open}}
    \end{cases}
\end{equation*}

Thus, there exists a unique $k\in [n]$ such that $lE = bE_k^\beta =
\Delta'_kbE_k^\alpha = \Delta_k\Delta'_kbE_k $, which means
$l(\Delta_k\Delta'_kb)\inv E = E_k$. This implies that $(E_k,
l(\Delta_k\Delta'_kb)\inv)\in R_E$. Furthermore, we also have
$E^{\mathsf{Open}} = b\inv lE_m = s_m b\inv lE = s_m E_k^\beta$. This
implies that $E_k\leftarrow\Open(s_m, \nset{E}{i}, \mathsf{com})$.
Thus $\Open$ does not output $\perp$.
From these observations, we can easily construct the extractor
$\Ext(\mathsf{com}, \aset{\mathsf{resp}}{j}{\Ccal})$, which simply
searches through $k\in [n]$ for $k$ satisfying $lE = bE_k^\beta$, then
output $s=l(\Delta_k\Delta'_kb)\inv$. This concludes the proof that
$\Sigma_{GA}$ is \textbf{4-special sound}.
\end{proof}

\begin{lemma}
  \label{CWIlem1}
  For any $s\in G$, any efficient adversary $A$ with $E_m^\bullet(\nu)$ generating $E_m^\nu$ from $(E_m^\nu,s_m^\nu)\gets R_E$ for each $\nu$, we have
     \begin{equation*}
        \left|
        \begin{aligned}
            &\Pr\left[1\leftarrow A^{ \mathbf{Trans}(E_m^\bullet, \bullet, s),E_m^\bullet}(x)\right] 
            - &\Pr\left[1\leftarrow A^{ \mathbf{\Hyb_1}(E_m^\bullet,\bullet, s),E_m^\bullet}(x)\right]
        \end{aligned}
        \right|\leq \mathsf{negl}(\lambda) \;,
    \end{equation*}
    where $\Hyb_1$ is as specified below.
\end{lemma}

\begin{algorithm}[h]
\caption{$\Hyb_1(E_m, \nset{E}{i}, s)$}
\setlength\multicolsep{0pt}
\begin{multicols}{2}
\begin{algorithmic}[1]
    \STATE $\mathsf{ch}\Usample \sch$
    \STATE set $k\in [n]$ s.t. $(E_k, s) \in R$.
    \STATE $\nset{\Delta}{i}, \nset{\Delta'}{i}, b \Usample G$
    \STATE $\tau \Usample \textit{sym}(n)$
    \STATE $\forall i\in[n] :  E_i^\alpha = \Delta_iE_i$, $E_i^\beta = \Delta'_iE_i^\alpha$
    \STATE \tikzmk{A}$\forall i\in[n] :  E_i^\gamma = bE_i^\beta$ \tikzmk{B}\boxit{green!40}
    \STATE \tikzmk{A} $r\Usample G$, $E^{\mathsf{Open}} = rE$
    \IF{$\mathsf{ch}=1,2,3$}
        \STATE $E^{\mathsf{Check}} = bE^{\mathsf{Open}}$ \tikzmk{B}\boxit{red!40}
    \ELSIF{$\mathsf{ch}=4$}
        \STATE $l=\Delta_k\Delta'_kbs$, $E^{\mathsf{Check}} = lE_m$
    \ENDIF
    \STATE set $\mathsf{resp}$ honestly w.r.t $\mathsf{ch}$
    \RETURN $(\mathsf{com},\mathsf{ch},\mathsf{resp})$
\end{algorithmic}
\end{multicols}
\end{algorithm}

\ifnum\submission=1 \vspace{-2em}\fi
\begin{proof}
  Each query input of $\Trans$ and $\Hyb_1$ is of form $(E_m, \nset{E}{i}, s)$ where $(\nset{E}{i},s)\in R_n$ and $E_m$ is the curve correspoinding to the random master public key. We first note that the difference between honest transcript $\Trans$
  and $\Hyb_1$ is that $\Hyb_1$ replaces honest $E^{\mathsf{Open}}$
  with $rE$ for a random $r\in G$. For $\mathsf{ch}\neq 4$,
  $E^{\mathsf{Check}}$ is also replaced accordingly to
  $E^{\mathsf{Open}}$.

  We will prove the indistinguishability of $(\mathsf{com},
  \mathsf{ch}, \mathsf{resp})\leftarrow\Trans$ and
  $(\mathsf{com}', \mathsf{ch}', \mathsf{resp}')\leftarrow\Hyb_1$ for
  each different challenge $\mathsf{ch}\in \Ccal$ separately. In the
  following proof, we set $k$ s.t. $(E_k,s)\in R$, as in both
  $\Trans$ and $\Hyb_1$

  For $\mathsf{ch}'=1$, we have $\mathsf{resp}' = \nset{\Delta}{i}$,
  which is honestly generated and thus identical to $\Trans$. We thus
  focus on the $\mathsf{com}'$ part.

  By the hardness of P-DDHAP, for random $\Delta'_k, r\Usample G$, we
  have
\begin{equation*}
    (E_m, \Delta'_kE, \Delta'_kE_m)\approx_c(E_m, \Delta'_kE, rE)
\end{equation*}
Hence, for random $\Delta_k, \Delta'_k, b, r\Usample G$ and honestly
generated $(E_m, E_k^\beta, E_k^\gamma, E^{\mathsf{Open}},\\
E^{\mathsf{Check}})$, we have
\begin{equation*}
    \begin{aligned}
        &(E_m, E_k^\beta, E_k^\gamma, E^{\mathsf{Open}}, E^{\mathsf{Check}})\\
        =&(E_m, \Delta_ks(\Delta'_kE), \Delta_kbs(\Delta'_kE), \Delta_ks(\Delta'_kE_m), \Delta_kbs(\Delta'_kE_m))\\
        \approx_c &(E_m, \Delta_ks(\Delta'_kE), \Delta_kbs(\Delta'_kE), \Delta_ks(rE), \Delta_kbs(rE))\\
        =&(E_m, E_k^\beta, E_k^\gamma, r'E, br'E)
    \end{aligned}
\end{equation*}
Where the left-hand side is the output $\mathsf{com}$ from $\Trans$, restricted to
the variables dependent on $s_m$ or $\Delta'_k$. The right-hand side is the
corresponding partial output from $\Hyb_1$. As the remaining parts of
$\Trans$ and $\Hyb_1$ are equivalent, this equation shows that the
output distributions of $\Trans$ and $\Hyb_1$ are indistinguishable for
$\mathsf{ch}=1$.

For the case $\mathsf{ch}=2,3$, the indistinguishability can be proved
in a similar fashion. Notice again that for random $\Delta_k,
r\Usample G$, $(E_m, \Delta_kE, \Delta_kE_m)\approx_c(E_m, \Delta_kE,
rE)$. Thus for random $\Delta_k, \Delta'_k, b, r\Usample G$
\begin{equation*}
    \begin{aligned}
        &(E_m, E_k^\alpha, E_k^\beta, E_k^\gamma, E^{\mathsf{Open}}, E^{\mathsf{Check}})\\
        =&(E_m, s(\Delta_kE), \Delta'_ks(\Delta_kE), \Delta'_kbs(\Delta_kE), \Delta'_ks(\Delta_kE_m), \Delta'_kbs(\Delta_kE_m))\\
        \approx_c &(E_m, s(\Delta_kE), \Delta'_ks(\Delta_kE), \Delta'_kbs(\Delta_kE), \Delta'_ks(rE), \Delta'_kbs(rE))\\
        =&(E_m, E_k^\alpha, E_k^\beta, E_k^\gamma, r'E, br'E)
    \end{aligned}
\end{equation*}

For the case $\mathsf{ch}=4$, we would need a slight change. First we
recall the fact that, since $\mathcal{GA}$ is free and transitive, for
every $E_i$ there exists a unique $s_i\in G$ s.t. $s_iE = E_i$. Thus,
sampling $\nset{D}{i}, b\Usample G$ and letting $\Delta'_i=(bs_i)\inv
D_i$ gives us a uniformly distributed $\nset{\Delta'}{i}$.

Now, again from P-DDHAP, for random $b, r\Usample G$, 
\begin{equation*}
    (E_m, b\inv E, b\inv E_m)\approx_c(E_m, b\inv E, rE)
\end{equation*}

Thus, for random $\nset{\Delta}{i}, \nset{D}{i}, b, r\Usample G$ where
$D_i = \Delta'_ibs_i$, we have
\begin{equation*}
    \begin{aligned}
        &(E_m, \nset{E^\alpha}{i}, \nset{E^\beta}{i}, \nset{E^\gamma}{i}, E^{\mathsf{Open}}, E^{\mathsf{Check}}, l)\\
        =&(E_m, \pset{\Delta_iE_i}{i}{[n]}, \pset{\Delta_i\Delta'_iE_i}{i}{[n]}, \pset{\Delta_i\Delta'_ibE_i}{i}{[n]}, \Delta_k\Delta'_ks_kE_m, \Delta_k\Delta'_kbs_kE_m, \Delta_k\Delta'_kbs_k)\\
        =&(E_m, \pset{\Delta_iE_i}{i}{[n]}, \pset{\Delta_iD_i(b\inv E)}{i}{[n]}, \pset{\Delta_iD_iE}{i}{[n]}, \Delta_kD_k(b\inv E_m), \Delta_kD_kE_m, \Delta_kD_k)\\
        \approx_c &(E_m, \pset{\Delta_iE_i}{i}{[n]}, \pset{\Delta_iD_i(b\inv E)}{i}{[n]}, \pset{\Delta_iD_iE}{i}{[n]}, \Delta_kD_k(rE), \Delta_kD_kE_m, \Delta_kD_k)\\
        =&(E_m, \nset{E^\alpha}{i}, \nset{E^\beta}{i}, \nset{E^\gamma}{i}, r'E, E^{\mathsf{Check}}, l)
    \end{aligned}
\end{equation*}

Finally, since both $\mathsf{ch}$ and $\mathsf{ch}'$ are sampled
randomly in $\sch$, we can conclude that $\Trans$ and $\Hyb_1$
are computationally indistinguishable.
\end{proof}

\begin{lemma}
\label{CWIlem2}

    For any $s\in G$, any efficient adversary $A$ with $E_m^\bullet(\nu)$ generating $E_m^\nu$ from $(E_m^\nu,s_m^\nu)\leftarrow R_E$ for each $\nu$, we have
     \begin{equation*}
        \left|
        \begin{aligned}
            \Pr\left[1\leftarrow A^{ \mathbf{\Hyb_1}(E_m^\bullet, \bullet, s),E_m^\bullet}(x)\right]
            - \Pr\left[1\leftarrow A^{\mathbf{\Hyb_2}(E_m^\bullet, \bullet, s),E_m^\bullet}(x)\right]
        \end{aligned}
        \right|\leq \mathsf{negl}(\lambda) \;,
    \end{equation*}
    where $\Hyb_2$ is as defined below.
\end{lemma}

\begin{algorithm}[H]
\caption{$\Hyb_2(E_m, \nset{E}{i}, s)$}
\setlength\multicolsep{0pt}
\begin{multicols}{2}
\begin{algorithmic}[1]
    \STATE $\mathsf{ch}\Usample \sch$
    \STATE set $k\in [n]$ s.t. $(E_k, s) \in R$.
    \STATE $\nset{\Delta}{i}, \nset{\Delta'}{i}, b \Usample G$
    \STATE $\tau \Usample \textit{sym}(n)$
    \STATE $\forall i\in[n] :  E_i^\alpha = \Delta_iE_i$, $E_i^\beta = \Delta'_iE_i^\alpha$
    \STATE $r\Usample G$, $E^{\mathsf{Open}} = rE$
    \IF{$\mathsf{ch}=1,2,3$}
        \STATE $E^{\mathsf{Check}} = bE^{\mathsf{Open}}$
        \STATE \tikzmk{A}$\forall i\in[n] :  E_i^\gamma = bE_i^\beta$ \tikzmk{B}\boxit{green!40}
    \ELSIF{$\mathsf{ch}=4$}
        \STATE \tikzmk{A}$\forall i\in[n] : r_i\Usample G, E_i^\gamma = r_iE$
        \STATE $l=r_k$, $E^{\mathsf{Check}} = lE_m$\quad\quad\quad\tikzmk{B}\boxit{green!40}
    \ENDIF
    \STATE set $\mathsf{resp}$ honestly w.r.t $\mathsf{ch}$
    \RETURN $(\mathsf{com},\mathsf{ch},\mathsf{resp})$
\end{algorithmic}
\end{multicols}
\end{algorithm}

\ifnum\submission=1\vspace{-3em}\fi
\begin{proof}
  The hybrids $\Hyb_1$ and $\Hyb_2$ differ only in the case
  $\mathsf{ch}=4$, in which we replace the whole $E^\gamma$ with
  random curves, $E^{\mathsf{Check}}$ and $l$ are also changed
  correspondingly. As in the previous proof, we use the fact
  that sampling $\nset{D}{i}, b\Usample G$ and letting
  $\Delta'_i=(bs_i)\inv D_i$ gives us uniformly random
  $(\nset{\Delta'}{i}, b)$.

By P-DDHAP, for random $b, \aset{D}{i}{\nmk}, \aset{r}{i}{\nmk}$, 
\begin{equation*}
    \begin{aligned}
        &(b\inv E, \pset{D_iE}{i}{\nmk}, \pset{D_ib\inv E}{i}{\nmk})\\
        \approx_c &(b\inv E, \pset{r_iE}{i}{\nmk}, \pset{D_ib\inv E}{i}{\nmk})
    \end{aligned} 
\end{equation*}

For simplicity, we let $S=\nmk$. Now, for random $\nset{\Delta}{i},
\nset{D}{i}, b, \aset{r}{i}{S}$ where $D_i = \Delta'_ibs_i$, and
$(E_m, \nset{E^\alpha}{i}, \aset{E^\beta}{i}{S},
\aset{E^\gamma}{i}{S}, E_k^\beta, E_k^\gamma, E^{\mathsf{Check}}, l)$
are the elements output from $\Hyb_1$, we have

\begin{equation*}
    \begin{aligned}
        &(E_m, \nset{E^\alpha}{i}, \aset{E^\beta}{i}{S}, \aset{E^\gamma}{i}{S}, E_k^\beta, E_k^\gamma, E^{\mathsf{Check}}, l)\\
        =&(E_m, \pset{\Delta_iE_i}{i}{[n]}, \pset{\Delta_i(D_ib\inv E)}{i}{S}, \pset{\Delta_i(D_iE)}{i}{S},\Delta_kD_k(b\inv E),\\ &\qquad \Delta_kD_kE, \Delta_kD_kE_m, \Delta_kD_k)\\
        \approx_c&(E_m, \pset{\Delta_iE_i}{i}{[n]}, \pset{\Delta_i(D_ib\inv E)}{i}{S}, \pset{\Delta_i(r_iE)}{i}{S}, \Delta_kD_k(b\inv E),\\ &\qquad  \Delta_kD_kE, \Delta_kD_kE_m, \Delta_kD_k)\\
        =&(E_m, \nset{E^\alpha}{i}, \aset{E^\beta}{i}{S}, \pset{r'_iE_i}{i}{S}, E_k^\beta, E_k^\gamma, E^{\mathsf{Check}}, l)\\
    \end{aligned}
\end{equation*}

Finally we let $r'_k = \Delta_kD_k$, which is obviously independent from 
all other $r'_i$, then $(E_k^\beta, E_k^\gamma, E^{\mathsf{Check}}, l)
= (r'_kb\inv E, r'_kE, r'_kE_m, r'_k)$. Note that $r'_kb\inv$ gives
fresh randomness since $b$ is now independent from all other elements in
the right-hand side. Thus the right-hand side perfectly fits the distribution for $\Hyb_2$. This
concludes that $\Hyb_1$ and $\Hyb_2$ are computationally
indistinguishable.

\end{proof}

\ifnum\submission=1\vspace{-3em}\fi
\begin{lemma}
\label{CWIlem3}
     For any $E_m\in X_m$, $\nset{E}{i}\in X^n$, and $s_{k_0},s_{k_1}$ s.t. both $(\nset{E}{i}, s_{k_0})$, $(\nset{E}{i}, s_{k_1})\in R_n$ then
     \begin{equation*}
         \Hyb_2(E_m, \nset{E}{i}, s_{k_0}) = \Hyb_2(E_m, \nset{E}{i}, s_{k_1})\;,
     \end{equation*}
    where ``$=$'' is understood as the output distribution being identical.
\end{lemma}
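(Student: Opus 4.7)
The plan is to exhibit, directly from the description of $\Hyb_2$, that the index $k$ (and hence the specific witness $s$) enters the output only in a way that is statistically symmetric across all indices. I will go through $\Hyb_2$ line by line and isolate the only places where $k$ is used, then argue that the joint distribution at those places is invariant under the choice of $k$.

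First I would note that none of the quantities $\nset{\Delta}{i}$, $\nset{\Delta'}{i}$, $b$, $\tau$, $\nset{E^\alpha}{i}$, $\nset{E^\beta}{i}$, $r$, $E^{\mathsf{Open}}$ depend on $k$: the $\Delta_i, \Delta'_i, b$ are sampled from $G$ independently of $s$, and $E^\alpha_i, E^\beta_i$ are built purely from these and the public $E_i$'s. In particular, on challenges $\mathsf{ch}\in\{1,2,3\}$, the entire commitment (including $E^\gamma_i=bE^\beta_i$ and $E^{\mathsf{Check}}=bE^{\mathsf{Open}}$) and the response ($\nset{\Delta}{i}$, $\nset{\Delta'}{i}$, or $b$) are computed without touching $s$ or $k$ at all, so the two distributions already agree on this half of the challenge space.

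The only nontrivial case is $\mathsf{ch}=4$. Here $k$ is used in three places: $l:=r_k$, $E^{\mathsf{Check}}:=r_kE_m$, and implicitly in the position of the element $r_kE$ within the permuted list $\tau(\nset{E^\gamma}{i})=\tau(\nset{r_iE}{i})$. The key observation is that the $r_i$'s are i.i.d.\ uniform in $G$ and are sampled independently of $\tau$, which is a uniform random element of $\mathrm{sym}(n)$; therefore the joint distribution of the tuple $(\tau, r_1,\dots, r_n)$ is invariant under any permutation of the $r_i$-indices. Applying the transposition that swaps $k_0$ and $k_1$ shows that
\[
\bigl(\tau(\nset{r_iE}{i}),\, r_{k_0} E_m,\, r_{k_0}\bigr)
\;\stackrel{d}{=}\;
\bigl(\tau(\nset{r_iE}{i}),\, r_{k_1} E_m,\, r_{k_1}\bigr),
\]
so the $(\mathsf{com},\mathsf{resp})$ output on $\mathsf{ch}=4$ has the same distribution regardless of whether $k=k_0$ or $k=k_1$.

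Combining these two cases, and noting that $\mathsf{ch}$ itself is sampled independently of everything else, yields $\Hyb_2(E_m,\nset{E}{i},s_{k_0})=\Hyb_2(E_m,\nset{E}{i},s_{k_1})$ as distributions. There is no real obstacle in the argument; the only care needed is to verify that the bookkeeping variables $\Delta_k,\Delta'_k$ computed in the $\mathsf{ch}=4$ branch of $\Hyb_1$ were indeed removed in the passage to $\Hyb_2$, so that after the swap to fresh $r_i$'s the witness $s$ survives only through the symmetric $l=r_k$. This verification is immediate from the boxed change between $\Hyb_1$ and $\Hyb_2$.
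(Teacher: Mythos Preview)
Your proposal is correct and follows essentially the same route as the paper: you split on $\mathsf{ch}\in\{1,2,3\}$ versus $\mathsf{ch}=4$, observe that the first case is entirely $k$-independent, and for $\mathsf{ch}=4$ reduce to the invariance of $(\tau(\{r_iE\}_i), r_kE_m, r_k)$ under the choice of $k$, using that the $r_i$ are i.i.d.\ and $\tau$ is a uniform permutation. Your explicit use of the transposition $(k_0\ k_1)$ is a slightly more detailed phrasing of what the paper calls ``obviously identical,'' but the argument is the same.
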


\begin{proof}
  We always have $\Hyb_2(E_m, \nset{E}{i}, s_{k_0}) = \Hyb_2(E_m,
  \nset{E}{i}, s_{k_1})$ for $\mathsf{ch}=1,2,3$, as every elements in
  the output is generated independently from $k$. For $\mathsf{ch}=4$,
  we can give a deeper look on elements in $(\mathsf{com},
  \mathsf{resp}) = (E^\alpha, E^\beta, E^\gamma, E^{\mathsf{Open}},
  E^{\mathsf{Check}}, l)$. The part $(E^\alpha, E^\beta,
  E^{\mathsf{Open}})$ is generated independent from $k$, and the part
  $(E^\gamma, E^{\mathsf{Check}}, l)$ is of the form
  $(\tau(\pset{r_iE}{i}{[n]}), r_kE_m, r_k)$. Since $\tau$ is a
  random permutation and $r_i$'s are independent randomness, the two
  distributions $(\tau(\pset{r_iE}{i}{[n]}), r_{k_0}E_m, r_{k_0})$
  and $(\tau(\pset{r_iE}{i}{[n]}), r_{k_1}E_m, r_{k_1})$ are
  obviously identical. Hence $\Hyb_2(E_m, \nset{E}{i}, s_{k_0}) =
  \Hyb_2(E_m, \nset{E}{i}, s_{k_1})$.

\end{proof}

Finally, by combining Lemma \ref{CWIlem1}, Lemma \ref{CWIlem2}, and
Lemma \ref{CWIlem3}, we conclude that for any efficient adversary $A$ with ${\sf E}_m^\bullet$ and $\Trans^*$ defined as usual, and any $s_{i},s_{j}\in G$, we have
 \begin{equation*}
    \left|
    \begin{aligned}
        \Pr\left[1\leftarrow A^{ \mathbf{Trans}^*(E_m^\bullet, \bullet, s_{i}),E_m^\bullet}(x)\right]
        - \Pr\left[1\leftarrow A^{ \mathbf{Trans}^*(E_m^\bullet, \bullet, s_{j}),E_m^\bullet}(x)\right]
    \end{aligned}
    \right|\leq \mathsf{negl}(\lambda)\;,
\end{equation*}
by restricting the query inputs $(E_m,\nset{E}{i},s_k)$ to those $(\nset{E}{i},s_i), (\nset{E}{i},s_j)\in R_n$ for whichever $k\in\{i,j\}$.
This concludes the proof of Lemma \ref{sigmaCWI}, and thus $\Sigma_{GA}$ is indeed an openable sigma protocol.

\ifnum\submission=1\vspace{-1em}\fi
\section{Constructing accountable ring signatures}
\label{sec:ARS}
\ifnum\submission=1\vspace{-1em}\fi

In this section, we will show how to obtain an accountable ring
signature scheme from our openable sigma protocol.
The construction
can be decomposed into two parts. We first take multiple parallel repetitions
to the protocol for soundness amplification; then, we apply the Fiat-Shamir transformation on the parallelized protocol to
obtain the full construction. One subtle issue is that since every
sigma protocol in the parallel repetition is generated independently,
each parallel session of the transcript may open to a different party.
Hence, we need an opening function for the parallelized protocol, which returns the majority output over the opening results of the
parallel sessions.

\ifnum\submission=1\vspace{-1em}\fi
\subsection{Construction}
\ifnum\submission=1\vspace{-1em}\fi

More generally, we are going to construct our ARS scheme $\mathcal{ARS}_{\Sigma}^t$ by performing Fiat-Shamir transformation to the parallel repeated protocol $\Sigma^{\otimes t}$ where the number of repetitions $t(\lambda,n)$ depends on the security parameter $\lambda$ and the number of members $n$. The construction of $\mathcal{ARS}_{\Sigma}^t$ is detailed as follows.

\begin{remark}
This can later be instantiated with $\mathcal{ARS}_{GA}:=\mathcal{ARS}_{\Sigma_{GA}}^t$ by choosing $\Sigma:=\Sigma_{GA}$ to be our previously constructed protocol over the group action and $t(\lambda,n):=2\lambda n$.
\end{remark}

\begin{itemize}
    \item $\MGen(1^\lambda)$:
    \begin{algorithmic}[1]
        \RETURN $(\mathsf{mpk}, \mathsf{msk}) \gets R_m(1^\lambda)$
    \end{algorithmic}
    
    \item $\Gen(1^\lambda)$:
    \begin{algorithmic}[1]
        \RETURN $(\pk, \sk) \gets  R(1^\lambda)$
    \end{algorithmic}
    
    \item $\Sign(\mathsf{mpk}, S, m, \sk)$
    \begin{algorithmic}[1]
        \STATE $t:=t(\lambda, |S|)$
        \STATE $\forall j \in [t], (\mathsf{com}_j, st_j)\leftarrow\Sigma_{GA}.\Com(\mathsf{mpk}, S, \sk)$
        \STATE $(\mathsf{ch}_1, \dots, \mathsf{ch}_t) \leftarrow H(m,\mathsf{com}_1,\dots, \mathsf{com}_t, S)$
        \STATE $\forall j \in [t], \mathsf{resp}_j\leftarrow \Sigma.\Resp(\mathsf{mpk}, S, \sk, \mathsf{com}_j, \mathsf{ch}_j, st_j)$
        \RETURN $\sigma = ({\mathsf{com}}, {\mathsf{resp}}) := ((\mathsf{com}_1,\dots, \mathsf{com}_t), (\mathsf{resp}_1, \dots, \mathsf{resp}_t)) $
    \end{algorithmic}
    
    \item $\Vrfy(\mathsf{mpk}, S, m, \sigma)$:
    \begin{algorithmic}[1]
        \STATE $t:=t(\lambda, |S|)$
        \STATE $\mathbf{parse\ } \sigma = ({\mathsf{com}}, {\mathsf{resp}})$
        \STATE ${\mathsf{ch}} := H(m, {\mathsf{com}}, S)$
        \STATE $\mathbf{check\ } \forall j\in[t] :  1\leftarrow \Sigma.\Vrfy(\mathsf{mpk}, \nset{\pk}{i}, \mathsf{com}_j, \mathsf{ch}_j, \mathsf{resp}_j)$
        \RETURN 1 $\mathbf{\ if\ all\ checks\ pass}$
    \end{algorithmic}
    
    \item $\Open(\mathsf{msk}, S, m, \sigma)$: 
    \begin{algorithmic}[1]
        \STATE $t:=t(\lambda, |S|)$
        \STATE $\mathbf{parse\ } \sigma = ({\mathsf{com}}, {\mathsf{ch}}, {\mathsf{resp}})$
        \STATE $\forall j \in [t], \mathsf{out}_j \leftarrow \Sigma.\Open(\mathsf{msk}, S, \mathsf{com}_i)$
        \STATE $\pk = \mathbf{Maj}(\aset{\mathsf{out}}{j}{[t]})$ \COMMENT{$\mathbf{Maj}$ outputs the majority element from its input list. In case of ties, it outputs a random choice of the majority elements.}
        \vspace{-1em}\RETURN $\pk$
    \end{algorithmic}
\end{itemize}

\begin{theorem}\label{ARSsecure}
    Let $\Sigma$ be a secure openable sigma protocol. Then ${\cal ARS}_{\Sigma}^t$ is secure for every $t(\lambda,n)= n\cdot {\sf poly}(\lambda)$. If $\Sigma$ is furthermore perfect-unique-response, then ${\cal ARS}_\Sigma^t$ is QROM-secure.
\end{theorem}
\begin{proof}
See Section~\ref{sec:classical_sec},~\ref{sec:qrom_sec} for the proof. This is concluded jointly from Lemma~\ref{lem:correct},~\ref{lem:anonymity_ROM},~\ref{lem:unforgeable_ROM},~\ref{lem:anonimity_qrom},~\ref{lem:unforgeable_QROM}.
\end{proof}

\ifnum\submission=1 \vspace{-1em}\fi
From Section~\ref{sec:openable_sigma_security} we know that ${\Sigma_{GA}}$ is a secure openable sigma protocol being $4$-special sound, and by applying the transformation from Section~\ref{sec:ARStoGS}, we
immediately get the following corollaries.

\begin{corollary}
    Assuming DDHAP is hard, then
    $\mathcal{ARS}_{GA}$ is a QROM-secure ARS scheme, and $\mathcal{GS}^{\mathcal{ARS}_{GA}}$ is a QROM-secure GS scheme
\end{corollary}
\ifnum\submission=1\vspace{-1em}\fi
This completes our construction of \textit{both} an accountable ring
signature scheme and a group signature scheme.

\ifnum\submission=0
\begin{remark}
One additional benefit of using class group
action as the key relation is that honest public keys can be
efficiently verified. As discussed in
Section~\ref{sec:classgroupaction}, any $E_i\in
\Ecal\ell\ell_p(\mathcal{O},\pi_p)$ is a valid public key since the
group action is transitive, and furthermore, any $E_i\notin
\Ecal\ell\ell_p(\mathcal{O},\pi_p)$ can be efficiently detected. This
prevents the possibility of a malformed master key or malformed public
keys, which is a potential attacking interface of an ARS scheme.
\end{remark} 
\fi

\subsection{Classical Security}\label{sec:classical_sec}
In this section, we provide the classical security proof for the ${\cal ARS}^t_\Sigma$ described earlier. We will outline the lemmas here, with the full formal proofs available in \cref{sec:classical_sec}.

\begin{lemma}\label{lem:correct}
    Let $\Sigma$ be a secure openable sigma protocol, then $\mathcal{ARS}_{\Sigma}^t$ is \textbf{correct}.
\end{lemma}

\begin{lemma}\label{lem:anonymity_ROM}
  Let $\Sigma$ be a secure openable sigma protocol, then $\mathcal{ARS}_{\Sigma}^t$ is \textbf{anonymous} for every $t(\lambda,n)\leq{\sf poly}(\lambda,n)$ in CROM.
\end{lemma}

\begin{lemma}\label{lem:unforgeable_ROM}
    Let $\Sigma$ be a secure openable sigma protocol. Then $\mathcal{ARS}_{\Sigma}^t$ is \textbf{unforgeable} for every $t(\lambda,n)=n\cdot{\sf poly}(\lambda)$ in the CROM.
\end{lemma}

\if\submission=1\subsection{Classical Security}\label{sec:classical_sec}
\ifnum\submission=1\vspace{-1em}\fi
In this section, we are going to provide classical security proof of the ${\cal ARS}^t_\Sigma$ described earlier. Starting from CROM, and then lifting those to QROM.

For the proof of Theorem \ref{ARSsecure} we again break down the theorem
into proving each security property, i.e. correctness, anonymity and unforgeability. For correctness, there is no difference between classical and quantum settings, but since the proof does not exploit ``quantum-ness'' of an adversary, we put it in this section as well.

\ifnum\submission=1\vspace{-0.5em}\fi
\begin{lemma}\label{lem:correct}
    Let $\Sigma$ be a secure openable sigma protocol, then $\mathcal{ARS}_{\Sigma}^t$ is \textbf{correct}.
\end{lemma}
\ifnum\submission=1\vspace{-1em}\fi
\begin{proof}
  For any master key pair $(\mathsf{mpk},
  \mathsf{msk})\in\Kcal\Pcal_m$, any key pair
  $(\pk,\sk)\in\Kcal\Pcal$, and any set of public keys $S$ such that
  $\pk\in S$, we directly have $(\mathsf{mpk}, \mathsf{msk})\in R_m$
  and $(S, \sk)\in R_n$ where $n=|S|$. Let
  $\sigma\leftarrow\Sign(\mathsf{mpk}, S, m, \sk)$ be an honest
  signature on message $m$ and ring $S$. Notice that in an honest
  execution of $\Sign$, each $\mathsf{com}_j$ and $\mathsf{resp}_j$ is
  honestly generated according to $\Sigma$. Thus by the
  correctness of $\Sigma$, we know for ${\sf ch}:=H({\sf com},m)$ and every $j\in[t]$ with
  probability $1-\mathsf{negl}(\lambda)$, that $1\leftarrow
  \Sigma.\Ver(\mathsf{mpk}, S, \mathsf{com}_j, \mathsf{ch}_j,
  \mathsf{resp}_j)$ and $\pk\leftarrow\Sigma.\Open(\mathsf{mpk},
  S, \mathsf{com}_j)$. Hence we directly obtain that, with probability
  $1-t\cdot \mathsf{negl}(\lambda) = 1-\mathsf{negl}(\lambda)$, we have that
  $1\leftarrow \Ver(\mathsf{mpk}, S, m, \sigma)$ and $\pk\leftarrow
  \Open(\mathsf{msk}, S, m, \sigma)$. This concludes the proof that
  $\mathcal{ARS}_{\Sigma}$ is correct.
\end{proof}

\ifnum\submission=1\vspace{-1em}\fi
\begin{lemma}\label{lem:anonymity_ROM}
  Let $\Sigma$ be a secure openable sigma protocol, then $\mathcal{ARS}_{\Sigma}^t$ is \textbf{anonymous} for every $t(\lambda,n)\leq{\sf poly}(\lambda,n)$ in CROM.
\end{lemma}
\ifnum\submission=1\vspace{-1.5 em}\fi
\begin{proof}
The anonymity of $\mathcal{ARS}_{\Sigma}$ follows immediately from the CWI
property of $\Sigma$. For any efficient adversary $A$ with at most $q$
queries to the random oracle, it can have at most
$q\cdot{\sf negl}(\lambda)\leq \mathsf{negl}(\lambda)$ advantage on distinguishing $\Sign^*$
and $(\Trans^*)^t$. And by CWI from $\Sigma$, we have
$\Trans^*(\mathsf{mpk}, S, \sk_{id_0})\approx_c \Trans^*(\mathsf{mpk},
S, \sk_{id_1})$. Hence we can directly conclude that
$\Sign^*(\mathsf{mpk}, S, \sk_{id_0})\approx_c \Sign^*(\mathsf{mpk},
S, \sk_{id_1})$, which proves that $\mathcal{ARS}_\Sigma^t$ is anonymous.
\end{proof}

\begin{lemma}\label{lem:unforgeable_ROM}
    Let $\Sigma$ be a secure openable sigma protocol. Then $\mathcal{ARS}_{\Sigma}^t$ is \textbf{unforgeable} for every $t(\lambda,n)=n\cdot{\sf poly}(\lambda)$ in the CROM.
\end{lemma}
We refer readers to \ref{sec:lem10} for the proof.
\fi
\ifnum\submission=1\vspace{-1em}\fi
\subsection{QROM security}\label{sec:qrom_sec}
\ifnum\submission=1\vspace{-0.5em}\fi

To start, we show the anonymity first, where an adversary is asked to distinguish the signing oracles $\Sign^*({\sf mpk}_\bullet,\bullet,\bullet,\sk_k)$ for $k\in\{i,j\}$. Recall that $\Sign^*$ is defined with respect to two fixed public-secret key pairs $(\pk_k,\sk_k)\in\Kcal\Pcal$ for $k\in\{i,j\}$, a query $\Sign^*({\sf mpk}_\nu,S,m,\sk_k)$ must be such that $\{\pk_i,\pk_j\}\subseteq S$. 

The idea is that $\Sign^*$ behaves {\em almost} as if running $t$ repetitions of the openable sigma protocol $\Trans^*({\sf mpk}_\bullet,\bullet,\sk_k)$, which one cannot distinguish between $k\in\{i,j\}$. There is one exception: namely, $\Sign^*$ computes the challenges by hash evaluation ${\sf ch}:=H(m,{\sf com}, S)$, but then since $m$ and $S$ are chosen by the adversary, this may cause bias to the challenge distribution.

Such bias is handled by reprogramming techniques. Note that the first message ${\sf com}$ is freshly sampled with high min-entropy in each query to $\Sign^*$. Therefore, it is unlikely that $H(m,{\sf com}, S)$ has been queried, and thus ${\sf ch}$ is almost unbiased. In the quantum setting, one cannot simply identify previous queries to $H$, but the adaptive reprogramming technique \cite{grilo2021tight} can still be used to mimic this line of reasoning.

For convenience, we will use the prefix ``$\Sigma^{\otimes t}.$'' to specify that the scope of the object lies in the $t$-time repetitions of $\Sigma$, with $\Sigma^{\otimes t}.\Vrfy$ outputting $1$ if all repetitions are accepted, and $\Sigma^{\otimes t}.\Open$ outputting the majority of the opening results.

\ifnum\submission=1\vspace{-1em}\fi
\begin{lemma}\label{lem:anonimity_qrom}
    Let $\Sigma$ be an openable sigma protocol that is high min-entropy. Then ${\cal ARS}_{\Sigma}^t$ is {\bf anonymous} for every $t(\lambda,n)\leq {\sf poly}(n,\lambda)$ in QROM.
\end{lemma}
\ifnum\submission=1\vspace{-1em}\fi
\begin{proof}
For the purpose of analysis, define the following $\Sign^*_2({\sf mpk}_\bullet,\bullet,\bullet,\sk_k)$ oracle for $k\in\{i,j\}$.
\begin{itemize}
    \item $\Sign^*_2({\sf mpk}_\nu,S,m,\sk_k)$:
    \begin{algorithmic}[1]
        \STATE {\bf abort} if $\{\pk_i,\pk_j\}\not\subseteq S$
        \STATE $({\sf com},{\sf st})\gets \Sigma^{\otimes t}.\Com({\sf mpk}, S,\sk_k)$
        \STATE {\bf program} $H(m,{\sf com}, S):={\sf ch}\gets \Sigma^{\otimes t}.\Ccal$
        \STATE ${\sf resp}\gets \Sigma^{\otimes t}.\Resp({\sf mpk},S,\sk_k,{\sf com},{\sf ch},{\sf st})$
        \STATE {\bf return} $({\sf com},{\sf resp})$
    \end{algorithmic}
\end{itemize}
Note that $\Sign_2^*$ only replaces the computation of challenge in $\Sign^*$ from ${\sf ch}:=H(m,{\sf com},S)$ using the random oracle $H$, to freshly sampling ${\sf ch}$ and reprogramming to $H(m,{\sf com},S):={\sf ch}$. As described earlier, since ${\sf com}$ is high-min-entropy, by \cite[Theorem~1]{grilo2021tight} we obtain $A^{\Sign^*({\sf mpk}_\bullet,\bullet,\bullet,\sk_k),{\sf mpk}_\bullet,H}(\pk_k)\approx A^{\Sign_2^*({\sf mpk}_\bullet,\bullet,\bullet,\sk_k),{\sf mpk}_\bullet,H}(\pk_k)$ being indistinguisable. Now, via Zhandry's comressed oracle technique, or alternatively as described in \cite[Appendix~A]{CFHL21}, there is an efficient quantum algorithm $B^{\Trans^*({\sf mpk}_\bullet,\bullet,\bullet,\sk_k),{\sf mpk}_\bullet}(\pk_k)$ that run as if $A^{\Sign_2^*({\sf mpk}_\bullet,\bullet,\bullet,\sk_k),{\sf mpk}_\bullet,H}(\pk_k)$ but emulating the random oracle and reprogramming by itself. Since $\Sigma$ is computational witness-indistinguishable, $B$ cannot distinguish between $k\in\{i,j\}$. Putting things together, we obtain the following chain of indistinguishability,
\begin{align*}
    &A^{\Sign^*({\sf mpk}_\bullet,\bullet,\bullet,\sk_i),{\sf mpk}_\bullet,H}(\pk_i)
    \approx A^{\Sign_2^*({\sf mpk}_\bullet,\bullet,\bullet,\sk_i),{\sf mpk}_\bullet,H}(\pk_i)\\
    &\approx B^{\Trans^*({\sf mpk}_\bullet,\bullet,\bullet,\sk_i),{\sf mpk}_\bullet}(\pk_i)
    \approx B^{\Trans^*({\sf mpk}_\bullet,\bullet,\bullet,\sk_j),{\sf mpk}_\bullet}(\pk_j)\\
    &\approx A^{\Sign_2^*({\sf mpk}_\bullet,\bullet,\bullet,\sk_j),{\sf mpk}_\bullet,H}(\pk_j)
    \approx A^{\Sign^*({\sf mpk}_\bullet,\bullet,\bullet,\sk_j),{\sf mpk}_\bullet,H}(\pk_j)\;.
\end{align*}
This concludes the proof.
\end{proof}

\ifnum\submission=1\vspace{-1.5em} \fi
For the rest of this section, we show unforgeability in QROM. The key to lifting Lemma~\ref{lem:unforgeable_ROM} into QROM is a quantum extraction technique. The classical forking lemma, which measures out part of the transcript before rewinding, may ruin the internal quantum state of the adversary and therefore does not trivially apply to the quantum setting.

First, we give a CMA-to-NMA reduction, i.e. transforming an adversary $A$ against $G^{\sf UF}_A$ into an adversary against $\widetilde G^{\sf UF}_A$ as defined below.

\ifnum\submission=1 \vspace{-2em}\fi
\begin{algorithm}[H]
\caption{$\widetilde G_{A}^{\mathsf{UF}}({\sf mpk},{\sf msk})$: NMA-Unforgeability game}
\begin{algorithmic}[1]
    \STATE $(\pk,\sk)\gets\Gen(1^\lambda)$
    \STATE $(S^*, m^*, \sigma^*)\leftarrow A^H(\pk)$
    \STATE {\bf check} $1\leftarrow \Vrfy(\mathsf{mpk}, S^*, m^*, \sigma^*)$
    \STATE {\bf check} $\pk\text{ or }\bot\leftarrow\Open(\mathsf{msk}, S^*, m^*, \sigma^*)$
    \STATE $A$ wins if all check pass
\end{algorithmic}
\end{algorithm}
\ifnum\submission=1 \vspace{-2.3em}\fi
This is by means of simulating the signing queries $\Sign$ via a simulator $\Sim$ as follows.
\ifnum\submission=1 \vspace{-1em}\fi
\begin{center}
    \begin{minipage}[t]{0.4\linewidth}
        $\Sign({\sf mpk},S,m,\sk)$:
        \vspace{-1em}\begin{algorithmic}[1]
            \STATE $t:=t(\lambda,|S|)$
            \STATE ${\sf com}\gets \Sigma^{\otimes t}.\Com$
            \STATE ${\sf ch}:=H(m,{\sf com},S)$
            \STATE ${\sf resp}\gets \Sigma^{\otimes t}.\Resp$
            \STATE {\bf return} $({\sf com},{\sf resp})$
        \end{algorithmic}
    \end{minipage}
    \begin{minipage}[t]{0.55\linewidth}
        $\Sim({\sf mpk},S,m)$:
        \begin{algorithmic}[1]
            \STATE $t:=t(\lambda,|S|)$
            \STATE $({\sf com},{\sf ch},{\sf resp})\gets \Sigma^{\otimes t}.\Sim({\sf mpk},S,\pk)$
            \STATE {\bf program} $H(a,m):={\sf ch}$
            \STATE {\bf return} $({\sf com},{\sf resp})$
        \end{algorithmic}
    \end{minipage}
\end{center}
\begin{lemma}\label{lem:sign_sim_ind}
    Let $\Sigma$ be a statistical HVZK, high min-entropy openable sigma protocol, the number of repetitions be $t(\lambda,n)\leq{\sf poly}(\lambda,n)$ and $(\pk,\sk)\gets\Gen(1^\lambda)$ be freshly sampled. Then for every efficient quantum algorithm $A$, we have
    $$
    \left|\Pr\left[
        1\gets A^{\Sign(\bullet,\bullet,\bullet,\sk),H}(\pk)
    \right] - \Pr\left[
        1\gets A^{\Sim,H}(\pk)
    \right]\right|\leq {\sf negl}(\lambda)\;.
    $$
\end{lemma}
\begin{proof}
Define an intermediate oracle $\Sign_2$ as follows.
\ifnum\submission=1 \vspace{-1em}\fi
\begin{itemize}
    \item $\Sign_2({\sf mpk}, S, m)$:
    \begin{algorithmic}[1]
            \STATE $t:=t(\lambda,|S|)$
            \STATE ${\sf com}\gets \Sigma^{\otimes t}.\Com$
            \STATE {\bf program} $H(m,{\sf com},S):= {\sf ch}\stackrel{\$}{\gets}\Ccal$
            \STATE ${\sf resp}\gets \Sigma^{\otimes t}.\Resp$
            \STATE {\bf return} $({\sf com},{\sf resp})$
        \end{algorithmic}
\end{itemize}
\ifnum\submission=1 \vspace{-1em}\fi
Note that $\Sign$ and $\Sign_2$ only differs at one place, where the former computes the challenge ${\sf ch}:=H(m,{\sf com},S)$ using the random oracle $H$, but the latter samples a fresh challenge ${\sf ch}$ and then reprogrammed the corresponding entry $H(m,{\sf com},S):={\sf ch}$. Since ${\sf com}$ is of high-min-entropy, a direct application of \cite[Theorem 1]{grilo2021tight} implies $A^{\Sign(\bullet,\bullet,\bullet,\sk),H}(\pk)\approx A^{\Sign_2,H}(\pk)$ being indistinguishable. Furthermore, $\Sign_2$ and $\Sim$ only differ in how the transcript is respectively generated, with the former produced via an honest execution $\Sigma^{\otimes t}.\Trans$, and the latter via the corresponding simulator $\Sigma^{\otimes t}.\Sim$. It follows directly from the HVZK property that $A^{\Sign_2,H}(\pk)\approx A^{\Sim,H}(\pk)$ is indistinguishable. This concludes the proof.
\end{proof}

\ifnum\submission=1 \vspace{-1em}\fi
Now we are ready to prove the CMA-to-NMA reduction.

\ifnum\submission=1 \vspace{-1em}\fi
\begin{lemma}\label{lem:CMA2NMA_QROM}
    Let $\Sigma$ be a statistical HVZK, high min-entropy, computationally unique-response openable sigma protocol and the number of repetitions $t(\lambda,n)\leq{\sf poly}(\lambda,n)$. For every valid master key pair $({\sf mpk},{\sf msk})\in\Kcal\Pcal_m$ efficient (CMA) quantum adversary $A$ against $G^{\sf UF}_A({\sf mpk},{\sf msk})$, there is an efficient (NMA) quantum adversary $B$ against $\widetilde G^{\sf UF}_B({\sf mpk},{\sf msk})$ such that
    $$
    \left|\Pr\left[A\text{ wins } G^{\sf UF}_A({\sf mpk},{\sf msk})\right]
    - \Pr\left[B\text{ wins } \widetilde G^{\sf UF}_B({\sf mpk},{\sf msk})\right]
    \right|\leq {\sf negl}(\lambda)\;.
    $$
\end{lemma}
\ifnum\submission=1 \vspace{-1em}\fi
\begin{proof}
Let $B^H(\pk)$ run $(S^*,m^*,\sigma^*)\gets A^{\Sim,H}(\pk)$ but emulating the reprogramming of $H$ by itself. Already from Lemma~\ref{lem:sign_sim_ind} we may conclude the following
$$
\left|\Pr\left[
    \substack{
        (S^*,m^*,\sigma^*)\gets A^{\Sign(\bullet,\bullet,\bullet,\sk),H}(\pk)\\
        1\gets \Vrfy({\sf mpk,S^*,m^*,\sigma^*})\\
        \pk\text{ or }\bot\gets\Open({\sf msk},S^*,m^*,\sigma^*)}
\right] - \Pr\left[
    \substack{
        (S^*,m^*,\sigma^*)\gets A^{\Sim,H}(\pk)\\
        1\gets \Vrfy^H({\sf mpk,S^*,m^*,\sigma^*})\\
        \pk\text{ or }\bot\gets\Open({\sf msk},S^*,m^*,\sigma^*)}
\right]\right|\leq {\sf negl}(\lambda)\;,
$$
where $\Vrfy^H$ is understood as the verification with respect to the possibly reprogrammed random oracle $H$.

Without loss of generality we may assume $A^{\Sim,H}$ never outputs $\sigma^*$ produced by querying $\Sim({\sf mpk},S^*,m^*)$ for the message $m^*$. If the produced $(S^*,m^*,\sigma^*)\gets A^{\Sim,H}(\pk)$ satisfies $1\gets \Vrfy^H({\sf mpk,S^*,m^*,\sigma^*})$ and $\pk\gets\Open({\sf msk},S^*,m^*,\sigma^*)$. It may be (1) there has been a query of form $({\sf com}^*,{\sf resp})\gets \Sim({\sf mpk},S^*,m^*,\pk)$ for some ${\sf resp}$ so that there has been the reprogramming of form $H(m^*,{\sf com}^*,S^*):={\sf ch}^*$, in which case ${\sf resp}\neq {\sf resp}^*$ so $({\sf com}^*,{\sf ch}^*,{\sf resp}^*)$ and $({\sf com}^*,{\sf ch}^*,{\sf resp})$ are distinct valid transcripts of $\Sigma^{\otimes t}$, which is hard to find due to the computational unique-response property, or (2) there has not been such a query, in which case $H(m^*,{\sf com}^*,S^*)$ would not have been reprogrammed (except with negligible probability), and so the verification $\Vrfy({\sf msk},S^*,m^*,\sigma^*)$ with respect to the un-reprogrammed $H$ will pass. This concludes the proof.

\end{proof}
\ifnum\submission=1 \vspace{-1em}\fi
Next, for every valid master key pair $({\sf mpk},{\sf msk})\in\Kcal\Pcal_m$, define the interactive unforgeability game $G^{\sf int}_A({\sf mpk},{\sf msk})$ as follows.
\ifnum\submission=1 \vspace{-2em}\fi
\begin{algorithm}[H]
\caption{$G_{A}^{\mathsf{int}}({\sf mpk},{\sf msk})$: Interactive unforgeability game}
\begin{algorithmic}[1]
    \STATE $(\pk,\sk)\gets\Gen(1^\lambda)$
    \STATE $(S^*, m^*, {\sf com}^*,{\sf st})\leftarrow A(\pk)$ and $t:=t(\lambda,|S|)$
    \STATE ${\sf ch}\stackrel{\$}{\gets}\Sigma^{\otimes t}.\Ccal$
    \STATE ${\sf resp}^*\gets A({\sf st},{\sf ch})$
    \STATE $A$ wins if the following holds: ${1\leftarrow
  \Sigma^{\otimes t}.\Vrfy(\mathsf{mpk}, S^*, {\sf com}^*,{\sf ch}^*,{\sf resp}^*)}$ and ${\pk\text{ or }\bot\leftarrow\Sigma^{\otimes t}.\Open(\mathsf{msk}, S^*, {\sf com}^*)}$
\end{algorithmic}
\end{algorithm}
\ifnum\submission=1 \vspace{-2em}\fi
We are going to reduce an NMA adversary to the another interactive adversary against the openable sigma protocol, with freedom to choose which set $S$ of instances to break on its choice, so long as the secret key $\sk$ is included in $S$. 
\begin{lemma}\label{lem:NMA2INT_QROM}
Let $\Sigma$ be an openable sigma protocol. For every $({\sf mpk},{\sf msk})\in\Kcal\Pcal_m$ and every efficient (NMA) quantum adversary $A$ against $\widetilde G^{\sf UF}_A({\sf mpk},{\sf msk})$ making at most $q$ queries to the random oracle $H$, there is an efficient (interactive) quantum adversary $B$ against $\widetilde G^{\sf int}_B$ such that the following holds
$$
\frac{\Pr\left[
    A \text{ wins }\widetilde G_{A}^{\mathsf{UF}}({\sf mpk},{\sf msk})
\right]}{(2q+1)^2} \leq \Pr\left[
    B \text{ wins }G_{A}^{\mathsf{int}}({\sf mpk},{\sf msk})
\right]\;.
$$
\end{lemma}
\ifnum\submission=1 \vspace{-1em}\fi
\begin{proof}
This is via direct application of the measure-and-reprogram technique. For every fixed choice of $\pk^\circ\in\Kcal\Pcal$, let $V_{\pk^{\circ}}$ be the predicate as described below.
\ifnum\submission=1 \vspace{-2em}\fi
\begin{itemize}
    \item $V_{\pk^{\circ}}(x=(m,{\sf com},S),{\sf ch},{\sf resp})$:
    \begin{algorithmic}[1]
        \STATE $t:=t(\lambda,|S|)$
        \STATE {\bf check} $1\leftarrow
  \Sigma^{\otimes t}.\Vrfy(\mathsf{mpk}, S, {\sf com},{\sf ch},{\sf resp})$
        \STATE {\bf check} $\pk^\circ\text{ or }\bot\leftarrow\Sigma^{\otimes t}.\Open(\mathsf{msk}, S, {\sf com})$
        \STATE {\bf return} $1$ iff all check pass
    \end{algorithmic}
\end{itemize}
\ifnum\submission=1 \vspace{-1em}\fi
By construction, for $(\pk,\sk)\gets\Gen(1^\lambda)$ we have
\begin{align*}
    &\Pr\left[A \text{ wins }\widetilde G^{\sf UF}_A\right] = \sum_{(\pk^\circ,\sk^\circ)\in\Kcal\Pcal} \Pr\left[\substack{\pk=\pk^\circ\\\sk=\sk^\circ}\right]\cdot \Pr_H\left[\substack{(S^*,m^*,({\sf com}^*,{\sf resp}^*))\gets A^H(\pk^\circ)
    \\1\gets V_{\pk^{\circ}}(x,H(x),{\sf resp}^*)}\right]\\
    &\Pr\left[B \text{ wins }\widetilde G^{\sf int}_A\right] = \sum_{(\pk^\circ,\sk^\circ)\in\Kcal\Pcal}
    \Pr\left[\substack{\pk=\pk^\circ\\\sk=\sk^\circ}\right]\cdot
    \Pr_{{\sf ch}\gets\Sigma^{\otimes t}.\Ccal}\left[\substack{(S^*,m^*,{\sf com}^*,{\sf st})\gets B(\pk^\circ)\\ {\sf resp}^*\gets B({\sf st},{\sf ch})\\ 1\gets V_{\pk^{\circ}}(x,H(x),{\sf resp}^*)}\right]\;,
\end{align*}
for every interactive algorithm $B$, where $x$ denotes $(m^*,{\sf com}^*,S^*)$. Summing over $x_\circ$ in \cite[Theorem~2]{DFM20}, we obtain the existence of an efficient $B$ such that the following holds for all $\pk^\circ,\sk^\circ$
$$
\Pr_{{\sf ch}\gets\Sigma^{\otimes t}.\Ccal}\left[\substack{(S^*,m^*,{\sf com}^*,{\sf st})\gets B(\pk^\circ)\\ {\sf resp}^*\gets B({\sf st},{\sf ch})\\ 1\gets V_{\pk^{\circ}}(x,{\sf ch}^*,{\sf resp}^*)}\right]
\geq
\Pr_H\left[\substack{(S^*,m^*,({\sf com}^*,{\sf resp}^*))\gets A^H(\pk^\circ)
    \\1\gets V_{\pk^{\circ}}(x,H(x),{\sf resp}^*)}\middle]\right/{(2q+1)^2}\;\;.
$$
Finally, summing over all choice of $(\pk^\circ,\sk^\circ)$ with suitable probability, the proof is concluded.
\end{proof}
\ifnum\submission=1 \vspace{-1em}\fi
Finally, we reduce an interactive adversary against $G^{\sf int}_A$ into another adversary that extract the secret key $\sk$ from the public key $\pk$. Note that the key generation samples a key pair $(\pk,\sk)\gets R(1^\lambda)$ with respect to a hard relation $R$, and thus the secret key $\sk$ should be hard to extract.

The idea of extraction goes as follows. Let $t(\lambda, n)=(n+1)\kappa$ be the number of repetitions, where $\kappa$ is to be decided later. If $A$ wins $G^{\sf int}_A$, i.e. producing a valid transcript that is opened to $\pk$ or $\bot$, then by the pigeonhole principle, there must be at least $\kappa$ repetitions opened to $\pk$ or at least $\kappa$ opened to $\bot$. We then perform rewinding in order to collect sufficient number of accepted responses for these repetitions. Once there are $\mu$ accepted responses in the same repetition being produced with non-zero probability, (\ref{eq:openable_special_sound}) immediately falsify them being opened to $\bot$, and so we can always extract a secret key $\sk$ using the extractor $\Sigma.\Ext$ provided by the $\mu$-special-sound property.

Note that it is not just a black-box evocation of (generalized) Unruh's rewinding because it only provides guarantee toward the number of collected valid transcripts, but not toward the content of those transcripts. When analyzing a parallel-repetition multi-special-sound protocol, one needs to open up the rewinding argument and see what's inside. On a very high-level, thanks to the fact that the opening result is determined once the first message {\sf com} is produced, one can still argue that conditioned on any fixed choice of the opening result, the collected transcripts are with challenges being uniformly random. The analysis is more involved, and we refer interested readers to Supplementary~\ref{sec:INT2EXT_QROM_proof}.

\begin{lemma}\label{lem:INT2EXT_QROM}
Let $\Sigma$ be a $\mu$-special-sound openable sigma protocol, the number of repetitions be $t(\lambda,n)=(n+1)\cdot\kappa(\lambda,n)$. For every $({\sf mpk},{\sf msk})\in\Kcal\Pcal_m$ and every efficient quantum adversary $A$ against $G^{\sf int}_A$, there exists an efficient quantum adversary $B$ such that
$$
\Pr\left[A\text{ wins }G^{\sf int}_A({\sf mpk},{\sf msk})\right]^{2\mu-1}\leq
\Pr\left[\substack{(\pk,\sk)\gets R(1^\lambda)\\ \sk\gets B(\pk)}\right] + \exp\left(\frac{-\kappa}{\mu^\mu}\right)\;.
$$
\end{lemma}
Putting everything together, we conclude unforgeability in QROM. For completion, see Supplementary~\ref{sec:unforgeable_QROM_proof} for a rather formal wrapping up.
\begin{lemma}\label{lem:unforgeable_QROM}
  Let $\mu$ be a constant and $\Sigma$ be an openable sigma protocol being correct, $\mu$-special-sound, statistical HVZK, perfect-unique-response, and high min-entropy. Then ${\cal ARS}_\Sigma^t$ is {\bf unforgeable} in QROM for every $t(\lambda,n)= n\cdot {\sf poly}(\lambda)$.
\end{lemma}

\ifnum\submission=1\vspace{-1.5em}\fi


\ifnum\submission=0
\subsection*{Acknowledgments}
Authors were supported by Taiwan Ministry of Science and Technology
Grant 109-2221-E-001-009-MY3, Sinica Investigator Award
(AS-IA-109-M01), Executive Yuan Data Safety and Talent Cultivation
Project (AS-KPQ-109-DSTCP), and Young Scholar Fellowship (Einstein
Program) of the Ministry of Science and Technology (MOST) in Taiwan,
under grant number MOST 110-2636-E-002-012, 
and by the Netherlands Organisation for 
Scientific Research (NWO) under grants 628.001.028 (FASOR)
and 613.009.144 (Quantum Cryptanalysis of Post-Quantum Cryptography),
and by the NWO funded project HAPKIDO (Hybrid Approach for quantum-safe Public Key Infrastructure Development for Organisations). Mi-Ying (Miryam) Huang is additionally supported by the NSF CAREER award 2141536, the United State. This work was carried out while the fifth author was
visiting Academia Sinica, she is grateful for the hospitality.
\fi

\if\submission=0
    \bibliography{ref-condensed}
    \bibliographystyle{splncs04}
\else
    \bibliography{ref-condensed}
    \bibliographystyle{splncs04}
\fi

\newpage
\appendix

\section*{Supplementary Material}

\section{Open problems}\label{sec:open_problem}

\noindent{\bf Extra judging functionality.} Our setting has premised an honest manager, only to whom the opening result is available. A corrupted manager can thus incriminate any party as the signer of an arbitrary signature. Many previous works on group signatures then provide an extra {\em judging function} allowing the manager to generate a
    publicly verifiable proof for its opening results. We offer (in Supplementary~\ref{sec:judge}) a weaker version with a simple tweak that will prevent a dishonest manager from incriminating honest non-signers. It remains open to constructing a QROM-secure ARS with a full-fledged judging function.

\section{Fiat Shamir with Aborts Flaw in Previous Literature}
\label{Tech:FSwA}
The flaw lies in arguing that an honestly generated signature does not leak its secret key, which is formally captured via the existence of a simulator $\Sim$ that is indistinguishable from the signing procedure $\Sign$ in the random oracle model, i.e.
$$
\left|\Pr\left[1\gets\Acal^{H,\Sign^H}\right] - \Pr\left[1\gets\Acal^{H,\Sim^H}\right]\right|
$$
is small, for an efficient oracle algorithm $\Acal$.
Here, $\Sim$ is allowed to reprogram certain entries $H(x):=y$ of a random oracle $H$, while $\Sign$ is not. For a detailed description of the flaw in general, we refer to \cite{FSwA23,FSwA23_Devevey}.

\paragraph{\bf Why \cite{beullens2020calamari, lai2021collusion, BDKLP22} cannot be easily fixed?}
First, we note that fixes of \cite{FSwA23_Devevey} premise a stronger honest-verifier zero-knowledge property of the underlying Sigma protocol, which is not available here. So we only discuss the patch as in \cite{FSwA23}. Essentially, in all the previous works, the underlying Sigma protocol is described as $\Sigma=(\Com,\Ccal,\Resp)$ where $\Ccal$ is a finite set, and $\Com^H,\Resp^H$ are oracle algorithms given access to $H$. One execution of $\Sigma$ generates $(a,{\sf st})\gets\Com^H$, $c\gets\Ccal$ and $z:=\Resp^H(a,c,{\sf st})$, while to generate a FSwA signature, the challenge is replaced with $c:=H(a,m)$ on input a message $m$, and then $(a,z)$ is returned as a signature. On the other hand, a simulated signature, is generated by first produce a non-abort transcript $(a,c,z)$, and reprogram it to the random oracle $H(a,m):=c$. The patch in \cite{FSwA23} proceeds through the following hybrid sequence
$$
\Acal^{H,\Sign}\approx\Acal^{H,{\bf Prog}}\approx \Acal^{H,\Trans}\approx\Acal^{H,\Sim}\;,
$$
where the precise definition of ${\bf Prog}$ and ${\bf Trans}$ are described below.
\begin{figure}[H]
    \centering
    \begin{minipage}[t]{0.49\linewidth}
        ${\bf Prog}^H(m)$:
        \begin{algorithmic}[1]
            \REPEAT
                \STATE ${\sf com},{\sf st}\gets\Com^H$
                \STATE $H(m,{\sf com}):={\sf ch}\gets\Ccal$
                \STATE ${\sf resp}\gets\Resp^H({\sf com},{\sf ch})$
            \UNTIL{$z\neq\bot$}
            \RETURN $({\sf com},{\sf resp})$
        \end{algorithmic}
    \end{minipage}
    \begin{minipage}[t]{0.5\linewidth}
        ${\bf Trans}^H(m)$:
        \begin{algorithmic}[1]
            \REPEAT
                \STATE ${\sf com},{\sf st}\gets\Com^H$
                \STATE ${\sf ch}\gets\Ccal$
                \STATE ${\sf resp}\gets\Resp^H({\sf com},{\sf ch})$
            \UNTIL{$z\neq\bot$}
            \STATE $H(m,{\sf com}):={\sf ch}$
            \RETURN $({\sf com},{\sf resp})$
        \end{algorithmic}
    \end{minipage}
    \caption{The oracles ${\bf Prog}$ and ${\bf Trans}$}
\end{figure}
In the game hop from $\Acal^{H,{\bf Prog}}$ to $\Acal^{H,{\bf Trans}}$, the set $S$ of entries that is reprogrammed by one but not the other, consists of pairs of the form $(m,{\sf com})$ for any transcript $({\sf com},{\sf ch},{\sf resp})$ generated at an iteration with $z=\bot$. The patch in \cite{FSwA23}, which assume $\Com$ and $\Resp$ make no query to $H$, crucially relies on ${\sf com}$ having high min-entropy in the view of $\Acal$, and so it will likely not notice the reprogramming. However, in the setting of \cite{beullens2020calamari, lai2021collusion, BDKLP22}, where $\Com^H,\Resp^H$ do make queries to $H$, the transcript ${\sf com}$ is chosen dependent on $H$, and hence it may no longer contain as high min-entropy in the view of $\Acal^{H,{\sf Prog}}$. Hence, the existing patch does not apply.

\section{Security proofs for $\Sigma_{GA}$}\label{sec:sigma_GA_proof}
\subsection{Proof of Lemma~\ref{lem:sigma_correct}}
\begin{proof}[Proof of correctness]
  By the definition of $\Com$ and $\Ver$, any honestly generated
  $(\mathsf{com}, \mathsf{ch}, \mathsf{resp})$ based on $(\nset{E}{i},
  s)\in R_n$ will be accepted as long as the set $\nset{E^\beta}{i}$
  is pairwise distinct. Since $\mathcal{GA}$ is free and transitive,
  there is a unique $g\in G$ s.t. $gE_i = E_j$. Thus,
  $E_i^\beta=E_j^\beta$ if and only if
  $(\Delta_j\Delta'_j)\inv\Delta_i\Delta'_i = g$, which happens with
  negligible probability since all $\Delta$'s are honestly sampled.
  Hence with probability $1-n\cdot \mathsf{negl}(\lambda)$, the set
  $\nset{E^\beta}{i}$ are all distinct, and hence $\Ver$ accepts.

  For the function $\Open$, note that if $(E_m, s_m)\in R_m$ and
  $(E_k, s)\in R$, then $E^{\mathsf{Open}} = \Delta_k\Delta_k'sE_m =
  \Delta_k\Delta_k'ss_mE$, hence $s_mE_k^\beta = E^{\mathsf{Open}}$.
  As argued previously, $\nset{E^\beta}{i}$ are all distinct with
  probability $1-\mathsf{negl}(\lambda)$, and $k$ would be unique if
  this is the case. Thus the probability that $\Open$ outputs $E_k$ is
  overwhelming, concluding the proof that $\Sigma_{GA}$ is correct.

\end{proof}

\begin{proof}[Proof of statistical HVZK]
  The construction of $\Sim$ is given in the following algorithm. We will
  show that $\Sim$ is in fact a perfect simulator for $\Trans$.
\begin{algorithm}[H]
\caption{$\Sim(E_m, \nset{E}{i}, E_k)$}
\setlength\multicolsep{0pt}
\begin{multicols}{2}
\begin{algorithmic}[1]
    \STATE $\mathsf{ch}\Usample \sch$
    \STATE $b\Usample G$, $\tau\Usample \textit{sym}(n)$
    \IF {$\mathsf{ch}=1$}
        \STATE $\nset{\Delta}{i}, \nset{D}{i}\Usample G$
        \STATE $\forall i\in[n] :  E_i^\alpha := \Delta_iE_i$
        \STATE $\forall i\in[n] : E_i^\beta :=  \Delta_iD_iE$
        \STATE $E^{\mathsf{Open}} := \Delta_kD_kE_m$
    \ELSIF {$\mathsf{ch}=2,3,4$}
        \STATE $\nset{D}{i},\nset{\Delta'}{i}\Usample G$
        \STATE $\forall i\in[n] :  E_i^\alpha := D_iE$
        \STATE $\forall i\in[n] : E_i^\beta := \Delta'_iE_i^\alpha$
        \STATE $E^{\mathsf{Open}} := D_k\Delta'_kE_m$
    \ENDIF
    \STATE $\forall i\in[n] :  E_i^\gamma := bE_i^\beta$
    \IF {$\mathsf{ch}=1,2,3$}
        \STATE $E^{\mathsf{Check}} := bE^{\mathsf{Open}}$
    \ELSIF {$\mathsf{ch}=4$}
        \STATE $l := \Delta_kD_kb$
        \STATE $E_{km}^{\mathsf{Check}} := lE_m$
    \ENDIF
    \RETURN $(\mathsf{com}, \mathsf{ch}, \mathsf{resp})$ with the same definition as honest $\Com$ and $\Resp$
\end{algorithmic}
\end{multicols}
\end{algorithm}
\ifnum\submission=1\vspace{-2em} \fi
Since $\mathcal{GA}$ is free and transitive, for every $E_i\in\Ecal$,
there exists a unique $s_i\in G$ s.t. $s_iE = E_i$. In $\Sim$, we can
thus set $\Delta'_i = D_is_i\inv$ in case $\mathsf{ch}=1$ and
$\Delta_i = D_is_i\inv$ in case $\mathsf{ch}=2,3,4$. Since the
distribution of $D_is_i\inv$ is uniformly random, $\Sim$ generates
identical distributions for $\Delta$'s as $\Trans$. Thus the output
distribution of $\Sim$ should also be identical to the real
transcript. Checking that verification passes for
all cases  shows that $\Sim$ is a perfect simulator.
\end{proof}

\section{Judging the opening}\label{sec:judge}

Due to the majority voting that we have adopted in our opening design, we do not know yet
how to construct a proof for the exact opening output.
However, as a natural byproduct of our construction, we could also empower the
manager to generate a proof $\pi$ additionally from $\textbf{Open}$
that could be publicly verified showing for multiple sessions
$s_m E_k^{\beta}=E^{\mathsf{Open}}$ (as in
Section~\ref{sec:sigma_construction}), which is done with a slight
twist to Couveignes' sigma protocol, as defined in $\textbf{Judge}$ below.

\begin{itemize}
    \item $\textbf{Open}(\mathsf{msk}, S=\{\pk_i\}_{i\in [n]},m,\sigma)\rightarrow (\pk,\pi)\in (S\cup\{\bot\})\times\{0,1\}^*$: The redefined open algorithm not only reveals signer identity $\pk$ but also produces a publicly verifiable proof $\pi$ for it.
    \item $\textbf{Judge}(\mathsf{mpk},S=\{\pk_i\}_{i\in [n]},\sigma,\pk,\pi)\rightarrow\mathsf{acc}\in\{0,1\}$: The judge algorithm accepts if the manager opened correctly,
\end{itemize}
Note that in Section~\ref{sec:sigma_construction}, the opening within the sigma protocol is done by picking the index $k$ such that $s_m E^\beta_k=E^{\mathsf{Open}}$. A manager could therefore prove this equality in a Schnorr-like manner, re-starting from the sigma protocol $\Sigma_{GA}$ with three additional algorithms $\textbf{JCommit},\textbf{JResp},\textbf{JVerify}$.
\begin{itemize}
    \item $\textbf{JCommit}(s_m:=\mathsf{msk},\{E_i\}_{i\in[n]}:=\{\pk_i\}_{i\in[n]}, \mathsf{com})$:
    \begin{algorithmic}[1]
        \STATE $b'\Usample G$
        \STATE $\textbf{parse}\ \mathsf{com} = \mathsf{(\nset{E^\alpha}{i}, \nset{E^\beta}{i}, \tau(\nset{E^\gamma}{i}), E^{\mathsf{Open}}, E^{\mathsf{Check}})}$
        \COMMENT{We use $\tau(\bullet)$ as a lazy convention of sending a permuted list}
        \STATE $E^{\mathsf{Judge}}:= {b'} E^{\mathsf{Open}}$
        \STATE $E^{b'}_m:=b's_mE$
        \RETURN $(\mathsf{jcom},\mathsf{jst}) = \left((E^{\mathsf{Judge}},E^{b'}_m),(b',s_m)\right)$
    \end{algorithmic}
    \item $\textbf{JResp}(E_m,\{E_i\}_{i\in[n]},\mathsf{jcom},\mathsf{jch},\mathsf{jst})$:
    \begin{algorithmic}[1]
        \STATE \textbf{parse} $\mathsf{jst} = (b',s_m)$
        \IF{$\mathsf{jch}=0$}
            \RETURN $\mathsf{jresp}:=b'$
        \ENDIF
        \IF{$\mathsf{jch}=1$}
            \RETURN $\mathsf{jresp}:=l'=b's_m$
        \ENDIF
    \end{algorithmic}
    \item $\textbf{JVerify}(E_m:=\mathsf{mpk},\{E_i\}_{i\in[n]}:=\{\pk_i\}_{i\in [n]}, E_k:=\pk, \mathsf{com},\mathsf{jcom},\mathsf{jch},\mathsf{jresp})$:
    \begin{algorithmic}[1]
        \vspace*{-1em}
        \STATE \textbf{parse} $\mathsf{com}=(\nset{E^\alpha}{i}, \nset{E^\beta}{i}, \tau(\nset{E^\gamma}{i}), E^{\mathsf{Open}}, E^{\mathsf{Check}})$
        \STATE \textbf{parse} $\mathsf{jcom}=(E^{\mathsf{Judge}},E^{b'}_m)$
        \IF{$\mathsf{jch}=0$}
            \STATE \textbf{check} $E^{\mathsf{Judge}}= b' E^{\mathsf{Open}}$
            \STATE \textbf{check} $E^{b'}_m=b'E_m$
        \ENDIF
        \IF{$\mathsf{jch}=1$}
            \STATE \textbf{check} $E^{\mathsf{Judge}}=l'E_k^{\beta}$
            \STATE \textbf{check} $E^{b'}_m=l'E$
        \ENDIF
        \RETURN 1 \textbf{if all check pass}
    \end{algorithmic}
\end{itemize}

For each run of $\textbf{Commit}\rightarrow (\mathsf{com},\mathsf{st})$, we have to do additionally
$\iota$ repetitions of $\textbf{JCommit}$ (and thus
$\iota t$ repetitions in total) to confirm that it is opened to the $k$-th signer with $\mathsf{negl}(\iota)$ probability. Similar as before, the Fiat-Shamir transform is applied for non-interactivity as follows.
\begin{itemize}
    \item $\textbf{Open}(\mathsf{msk}, S=\{\pk_i\}_{i\in [n]},m,\sigma)$
    \begin{algorithmic}[1]
        \STATE $t=t(\lambda,|S|)$; $\iota=\lambda$
        \STATE $\mathbf{parse\ } \sigma = ({\mathsf{com}}, {\mathsf{resp}})$
        \STATE $\forall j \in [t], \mathsf{out}_j \leftarrow \Sigma_{GA}.\Open(\mathsf{msk}, S, \mathsf{com}_i)$
        \STATE $\forall (i,j)\in [\iota]\times [t], \mathsf{jcom}_{i,j}\leftarrow \Sigma_{GA}.\textbf{JCommit}(\mathsf{msk},\{E_i\}_{i\in[n]},\mathsf{com}_j)$
        \STATE ${\sf jch}:=\{\mathsf{jch}_{i,j}\}_{(i,j)\in [\iota]\times [t]}\leftarrow H(\sigma,\{\mathsf{jcom}_{i,j}\}_{(i,j)\in[\iota]\times [t]})$
        \STATE $\forall (i,j)\in [\iota]\times [t], (\mathsf{jresp}_{i,j},\mathsf{jst}_{i,j})\leftarrow \Sigma_{GA}.\textbf{JResp}(E_m,\{E_i\}_{i\in[n]},\mathsf{jcom}_{i,j},\mathsf{jch}_{i,j},\mathsf{jst}_{i,j})$\vspace*{-1em}
        \STATE $\pk = \mathbf{Maj}(\aset{\mathsf{out}}{j}{[t]})$ \COMMENT{$\mathbf{Maj}$ outputs the majority element of a set. In case of ties, $\mathbf{Maj}$ outputs a random choice of the marjority elements.}
        \STATE $\pi:=\{\mathsf{jcom}_{i,j},\mathsf{jresp}_{i,j}\}_{(i,j)\in [\iota]\times [t]}$
        \RETURN $(\pk,\pi)$
    \end{algorithmic}
    \item $\textbf{Judge}(\mathsf{mpk},S=\{\pk_i\}_{i\in [n]},\sigma,\pk,\pi)$:
    \begin{algorithmic}[1]
        \RETURN $0$ \textbf{if} $\pk=\bot$
        \STATE $t=2\lambda|S|$; $\iota=\lambda$
        \STATE $\mathbf{parse\ } \sigma = ({\mathsf{com}}, {\mathsf{ch}}, {\mathsf{resp}})$
        \STATE \textbf{parse\ } $\pi=\{\mathsf{jcom}_{i,j},\mathsf{jresp}_{i,j}\}_{(i,j)\in [\iota]\times [t]}$
        \STATE ${\sf jch}:=\{\mathsf{jch}_{i,j}\}_{(i,j)\in [\iota]\times [t]}\leftarrow H(\sigma,\{\mathsf{jcom}_{i,j}\}_{(i,j)\in[\iota]\times [t]})$
        \STATE $\forall j\in[t], \mathsf{jout}_j\leftarrow \bigwedge_{i\in[\iota]}\Sigma_{GA}.\textbf{JVerify}(\mathsf{mpk},\{\mathsf{E}_i\}_{i\in[n]},\pk,\mathsf{com}_j,\mathsf{jcom}_{i,j},\mathsf{jch}_{i,j},\mathsf{jresp}_{i,j})$\vspace*{-1em}
        \RETURN $1$ \textbf{if} $\sum_{j\in[t]}\mathsf{jout}_j\geq \lambda$
    \end{algorithmic}
\end{itemize}
Here, a corrupted manager gets to selectively generate a partial proof $\{E_{\mathsf{out}_j},\mathsf{jcom}_{i,j},\mathsf{jch}_{i,j},\mathsf{jresp}_{i,j}\}_{(i,j)\in[\iota]\times\Jcal}$ where $\Jcal\subseteq[t]$ is adaptively chosen. So long as we have $\sum_{j\in\Jcal}\mathsf{jout}_j\geq \lambda$, the judged proof is accepted. This does not prevent the manager from generating accepted proofs that open to different members when $\#\{E_{\mathsf{out}_j}\}>1$, which could happen if the corresponding signature is generated by multiple colluding signers. Otherwise, incriminating an honest non-signer would require to make up at least $\lambda$ valid sessions of $\textbf{Commit}$, which will succeed with only negligible probability, i.e. for any efficient adversary $A$, any $t(\lambda,n)=n\cdot {\sf poly}(\lambda)$ and valid master key pair $(\mathsf{mpk},\mathsf{msk})\in\Kcal\Pcal_m$,
$$
\Pr[A\text{ wins }G_{A}^{\mathsf{JUF}}({\sf mpk},{\sf msk})]\leq \mathsf{negl}(\lambda),
$$
where the {\em judging unforgeability game} $G_{A}^{\mathsf{JUF}}$ is as specified below.
\begin{algorithm}[H]
\caption{$G_{A}^{\mathsf{JUF}}$: Judging unforgeability game}
\begin{algorithmic}[1]
    \STATE $(\pk,\sk)\leftarrow\Gen(1^\lambda)$
    \STATE $(S, m^*, \sigma^*)\leftarrow A^{\textbf{Sign}(\bullet, \bullet,
        \bullet, \sk),H}(\pk)$
    \STATE $A$ wins if $\sigma^*$ is not produced by querying $\textbf{Sign}({\sf mpk},S^*,m^*,\sk)$, $1\leftarrow
      \Vrfy(\mathsf{mpk}, S, m^*, \sigma^*)$, $({\sf out},\pi)\leftarrow
      \Open(\mathsf{msk}, S, m, \sigma^*)$ satisfies ${\sf out} \in
      \{\pk,\perp\}$ and $1\leftarrow\textbf{Judge}(\mathsf{mpk},S,\sigma,{\sf out},\pi)$.
\end{algorithmic}
\end{algorithm}

\section{Isogeny class group action}\label{sec:CSIDH}
Here we briefly cover the basics for {\em elliptic curve
  isogenies}. For simplicity, we consider a working (finite) field
$\Fbb_q$ with characteristic $p>3$. An isogeny $\phi$ between elliptic
curves $E_1\to E_2$ defined over an algebraic closure $\bar\Fbb_q$ is
a surjective homomorphism between the groups of rational points
$E_1(\bar\Fbb_q)\to E_2(\bar\Fbb_q)$ with a finite kernel. If,
additionally, $\phi$ is assumed {\em separable}, i.e. the induced
extension of function fields $\phi^*:\bar\Fbb_q(E_2)\hookrightarrow
\bar\Fbb_q(E_1)$ by $\bar\Fbb_p(E_2)\ni f\mapsto
f\circ\phi\in\bar\Fbb_p(E_1)$ is separable, then for any finite
subgroup $H\leq E_1(\bar\Fbb_p)$, there is an isogeny $\phi:E_1\to
E_2$ having $H$ as its kernel, and the co-domain curve is furthermore
uniquely determined up to isomorphisms (in $\bar\Fbb_q$). We
refer to the co-domain curve as the {\em quotient curve}, denoted $E_1/H$. A
corresponding isogeny could be computed using Velu's formula specified
in \cite{velu1971isogenies}, which works by expanding the coordinates
of $Q = \phi(P)$ as follows,
\begin{align*}
    &x(Q) = x(P) + \sum_{R\in H\setminus\{0\}} \left(x(P+R) - x(R)\right), \\
    &y(Q) = y(P) + \sum_{R\in H\setminus\{0\}} \left(y(P+R) - y(R)\right).
\end{align*}
The separable degree $\deg_{\mathsf{sep}}\phi$ is defined as the
separable degree for $\phi^*$, which coincides with the size of its
kernel $\#\ker\phi$, and since any isogeny could be acquired by
precomposing Frobenius maps to a separable isogeny, i.e. of form
$\phi\circ\pi_p^k$ where $\phi$ is separable, we can (equivalently)
define the (full) degree $\deg\left(\phi\circ\pi_{p^k}\right) =
\deg_{\mathsf{sep}}(\phi)p^k$. From now on, we will assume
separability of isogenies unless otherwise specified, and therefore
$\deg\phi=\deg_{\mathsf{sep}}\phi$ in this case.

For large degree $\phi$, when both domain $E_1$ and co-domain $E_2$
(supersingular) curves are prescribed, it could be hard to determine
the kernel (and thus $\phi$). The current best-known (generic) quantum
algorithm is {\em claw finding}, which takes $\tilde
O\left(\deg(\phi)^{1/3}\right)$ operations.

One important structure for isogenies is the so-called {\em isogeny
  class group action}, which was first used for cryptographic
constructions by \cite{couveignes2006hard,stolbunov2012cryptographic},
and was viewed as a weaker alternative for discrete logarithm.
However, although theoretically feasible, the instantiated group
action used to rely heavily on techniques regarding the so-called {\em
  modular polynomials}, which is computationally expensive in
practice. Later on, improvements in the {\em Commutative SIDH} (CSIDH)
\cite{castryck2018csidh} scheme got rid of these techniques. Concretely, the
space $X$ is instantiated as a set
$\Ecal\ell\ell_p(\Ocal,\pi_p)=\{E/\Fbb_p \text{ supersingular elliptic
  curves}\}/\cong_{\Fbb_p}$ acted by their ideal class group $\Cl$ of
the $\Fbb_p$-rational endomorphism ring $\Ocal = \mathsf{End}_p(E)$
where $E\in\Ecal\ell\ell_p(\Ocal,\pi_p)$ but $\Ocal\otimes\Qbb$
tensored as a $\Zbb$-module is identical regardless of the choice of
$E\in\Ecal\ell\ell_p(\Ocal,\pi_p)$ thus so is $\Cl$. The additional
parameter $\pi_p$ denotes the $p$-power Frobenius
$\pi_p:(x,y)\mapsto(x^p,y^p)$. Elements of $\Cl$ are
equivalence classes ${\frak a}$ of ideals of the (partial) endomorphism
ring $\Jcal\lhd\mathsf{End}_p(\Ocal)$. Any such ideal class
${\frak a}\in\Cl$ therefore acts on the curves by sending
$E\in\Ecal\ell\ell_p(\Ocal,\pi_p)$ to the quotient curve ${\frak
  a}\cdot E:=E/E[\Jcal]$ where $\Jcal\in{\frak a}$ is a representative
of the equivalence class ${\frak a}$ and $E[\Jcal] =
\bigcap_{f\in\Jcal}\ker{f}$ is the simultaneous kernel of $\Jcal$.

The working base field $\Fbb_p$ for CSIDH is carefully selected such
that $p=4\ell_1\cdots\ell_n-1$ where each $\ell_i>2$ is a small prime
generally referred to as an {\em Elkies prime}. This allows one to
generate a heuristically large enough sub-covering $\{{\frak
  l}_1^{e_1}\dots{\frak l}_n^{e_n}| \forall i: |e_i|\leq b_i\}$ of
$\Cl$ where each prescribed $b_i$ is small\footnote{For
 CSIDH-512 \cite{castryck2018csidh} proposes
$b_1=\dots=b_n=5$.} and each ${\frak
  l}_i^{\pm 1}$ is the class of ideal $\langle \pi_p\mp
1,\ell_i\rangle$. The indices $(e_1,\dots,e_n)$ thus 
represent the ideal class ${\frak l}_i^{e_1}\cdots{\frak l}_n^{e_n}$,
making it easier to compute the co-domain curve. In particular, for a
curve $E\in\Ecal\ell\ell_p(\Ocal,\pi_p)$ and any choice of $\ell_i$,
the curve ${\frak l}_i\cdot E:=E/E[\langle\pi_p-1,\ell_i\rangle]$ is
computed by sampling a generator of the kernel,
\ifnum\submission=1\vspace{-0.5em}\fi
\begin{align*}
E[\langle \pi_p-1,\ell_i\rangle] = E(\Fbb_p)[\ell_i] = \{P\in E(\Fbb_p)| \ell_i P = 0\},
\end{align*}
\ifnum\submission=1\vspace{-0.2em}\fi
which is a one dimensional $\Zbb/\ell$-linear eigen-subspace of
$\pi_p$ within the $\ell_i$-torsion $E[\ell_i]$. For the opposite 
direction, one can compute ${\frak l}_i^{-1}\cdot E=\left({\frak
    l}_i\cdot E^t\right)^t$ where the superscript $^t$ is referred to
as the quadratic twist of the specified curve, by taking the
convention that the curve is fixed when its $j$-invariant is $1728$, or
equivalently, this can be done by sampling from the other
$\Zbb/\ell_i$-linear eigen-subspace of $\pi_p$ in $E[\ell_i]$, which
sits in the quadratic extension $E(\Fbb_{p^2})$.

We also list here some well-known properties for the considered class
group action. First, the class group $\Cl$ {\em commutes}, which is a
direct result of the fact that the $\Fbb_p$-rational endomorphism ring
$\mathsf{End}_p(E)$ commutes. Second, as noted in
\cite[Theorem~7]{castryck2018csidh}, $\Cl$ acts {\em freely and
  transitively} on $\Ecal\ell\ell_p(\Ocal,\pi_p)$, which means that
for all $E_1, E_2\in \Ecal\ell\ell_p(\Ocal,\pi_p)$, there exists a
unique ${\frak a}\in\Cl$ such that ${\frak a}\cdot E_1 = E_2$.
Finally, elements in $\Ecal\ell\ell_p(\Ocal,\pi_p)$ can be {\em
  efficiently verified}. We note that a curve $E$ is supersingular if
and only if it has $p+1$ points over $\Fbb_p$. This can be efficiently
tested by finding some $P\in E(\Fbb_p)$ with order
$\mathsf{ord}(P)\geq 4\sqrt{p}$ dividing $p+1$.  A random point $P$
sampled from $E(\Fbb_p)$ satisfies such a condition with high
probability if $E$ is supersingular, and whether it does can be
verified efficiently as follows. If $(p+1)P\neq 0$, then
$\mathsf{ord}(P)$ does not divide $p+1$ and $E$ is ordinary.
Otherwise, we can perform the so-called {\em batch co-factor
  multiplication} computing $P_i = \frac{p+1}{\ell_i}P$ for each $i$,
by using convention that $\ell_0=4$. This allows us to determine
$\mathsf{ord}(P)=\prod_i \mathsf{ord}(P_i)$.

For typical cryptographic constructions such as CSIDH,
additional heuristic assumptions are required to sample a random
element from the class group (as in Definition~\ref{def:DDH}). This is
because the ``CSIDH-way'' for doing this is by sampling exponents 
$(e_1,\dots,e_n)$ satisfying $\forall i: |e_i|\leq b_i$,
and the resulting distribution for ideals ${\frak
  l}_1^{e_1}\dots{\frak l}_n^{e_n}$ is generally non-uniform within
$\Cl$. To get rid of such heuristics, one could instead work with
specific parameters, where a bijective (yet efficient) representation
of ideals is known. For instance, in \cite{beullens2019csi}, the
structure of $\Cl$ is computed, including a full generating set of
ideals ${\frak l}_1,\dots,{\frak l}_n$ and the entire lattice
$\Lambda:=\{(e_1,\dots,e_n)| {\frak l}_1^{e_1}\dots{\frak
  l}_n^{e_n}=\mathsf{id}\}$. Evaluating the group action is just a
matter of approximating a {\em closest vector} and then evaluating the
residue as in CSIDH. In this work, we will be working with such a
``perfect'' representation of ideals, unless otherwise specified.

As a remark, we note that the D-CSIDH problem for characteristic
$p=1\mod 4$ is known to be broken \cite{castryck2020breaking}.
Nevertheless, the attack is not applicable to the standard CSIDH
setting where $p=3\mod 4$.

\section{Sigma protocol}\label{sec:sigpro}

A sigma protocol should satisfy the following three properties.

\begin{definition} (Correctness)
A sigma protocol is correct if for any $(x,w)\in R$, the probability
\begin{equation*}
\Pr\left[
(\mathsf{com}, st)\leftarrow P_1(x,w),
    \mathsf{ch}\Usample\Ccal,
    \mathsf{resp}\leftarrow P_2(st, \mathsf{ch}),
    0\leftarrow V(x, \mathsf{com}, \mathsf{ch}, \mathsf{resp})
\right]
\end{equation*}
is negligible.
\end{definition}

\begin{definition} (Honest Verifier Zero Knowledge/HVZK) Let
  $\Trans(x,w)\rightarrow (\mathsf{com}, \mathsf{ch}, \mathsf{resp})$
  be a function that honestly executes the sigma protocol and outputs
  a transcript. We say that the sigma protocol is HVZK if there exists
  a simulator $\Sim(x)\rightarrow (\mathsf{com}, \mathsf{ch},
  \mathsf{resp})$ such that the output distribution of $\Trans(x,w)$
  and $\Sim(x)$ is indistinguishable.
\end{definition}

\begin{definition} ($\mu$-special soundness) A sigma protocol is
  $\mu$-special sound if there exist an efficient extractor $\Ext$ such
  that, for any set of $\mu$ transcripts with the same
  $(x,\mathsf{com})$, denoted as $(x, \mathsf{com},
  \aset{\mathsf{ch}}{i}{[\mu]}, \aset{\mathsf{resp}}{i}{[\mu]})$, where
  every $\mathsf{ch}_i$ is distinct, the probability
\begin{equation*}
  \Pr\left[(x,s)\notin R\wedge \forall i\in[\mu], \mathsf{acc}_i=1:
    \substack{
      \forall i\in[\mu],\ \mathsf{acc}_i \leftarrow V(x, \mathsf{com}, \mathsf{ch}_i, \mathsf{resp}_i),\\
      s \leftarrow\mathbf{Ext}(x, \mathsf{com}, \aset{\mathsf{ch}}{i}{[\mu]}, \aset{\mathsf{resp}}{i}{[\mu]})
    }
  \right]
\end{equation*}
is negligible.
\end{definition}

Here, we formulate a more general form of special soundness. While
most sigma protocol constructions in the literature adopt 2-special
soundness, any $\mu$-special sound protocol with constant $\mu$ can be
similarly transformed into a signature scheme, simply by applying more
rewinding trials.

\section{Analysis in CROM}\label{sec:cROM}
\subsection{The forking lemma}\label{sec:forkl}
The concept of the forking lemma is as follows. In the random oracle
model, let $A$ be an adversary that can with non-negligible
probability generate valid transcripts $(m, \mathsf{com}, \mathsf{ch},
\mathsf{resp})$ with $\mathsf{ch}=H(m,\mathsf{com})$. Since $H$ is a
random oracle, for some $(m,\mathsf{com})$, $A$ should be able to
succeed on sufficiently many different $\mathsf{ch}'$ from $H$ in
order to achieve an overall non-negligible success probability. If
we can rewind and rerun $A$ with different oracle outputs on
$H(m,\mathsf{com})$, we should be able to get multiple accepting
transcripts.

To dig a little bit deeper, we can construct an efficient algorithm
$B$ that runs $A$ as a subroutine, where $A\rightarrow
(m,\mathsf{com}, \mathsf{ch}, \mathsf{resp})$ has at most $Q$ oracle
queries. The tuple $(m,\mathsf{com})$ should, with all but negligible
probability, be among one of the $Q$ queries. $B$ first guesses the {\em
  critical query} $i\in[Q]$, the index where
$\Qcal_i=(m,\mathsf{com})$ is being queried. Then, $B$ replays $A$
with fixed random tape, fixed oracle outputs for the first $i-1$
queries, and fresh random oracle outputs for the remaining queries. If
the query guess $i$ and fixed randomness are ``good,'' which should
happen with non-negligible probability, then among sufficiently many
retries we should get $t$ successful outputs of $A$, which are
transcripts with identical $(m,\mathsf{com})$ with distinct challenges
$\mathsf{ch}$'s. For a rigorous proof, we refer the reader to
\cite{pointcheval1996security,pointcheval2000security} for the forking
lemma with 2 transcripts and \cite{brickell2000design} for a
$\mu$-transcript version.

Here, we give a reformulated version of the improved forking lemma
proposed by \cite{brickell2000design}. We renamed the variables to fit
our notion and restricted parameters to the range that is sufficient for
our proof.

\begin{theorem} (The Improved Forking Lemma\cite{brickell2000design},
  Reformulated)
\label{forkinglemma}
Let $A$ be a probabilistic polynomial-time algorithm and ${\mathsf Sim}$ be a
probabilistic polynomial-time simulator which can be queried by $A$.
Let $H$ be a random oracle with image size $|H| \geq 2^\lambda$. If
$A$ can output some valid tuple $(m, \mathsf{com}, \mathsf{ch},
\mathsf{resp})$ with non-negligible probability $\varepsilon \geq
1/poly(\lambda)$ within less than $Q$ queries to the random oracle,
then with $O(Q\mu\log \mu/\varepsilon)$ rewinds of $A$ with different
random oracles, $A$ will, with at least constant probability, output
$\mu$ valid tuples $(m, \mathsf{com}, \mathsf{ch}_i, \mathsf{resp}_i)$
with identical $(m,\mathsf{com})$ and pairwise distinct
$\mathsf{ch}_i$'s.
\end{theorem}

\subsection{Classical Security}\label{sec:classical_sec}

For the proof of Theorem \cref{ARSsecure} we again break down the theorem into proving each security property, i.e. correctness, anonymity and unforgeability. For correctness, there is no difference between classical and quantum settings, but since the proof does not exploit ``quantum-ness'' of an adversary, we put it in this section as well.

\paragraph{Proof of \cref{lem:correct}}
\begin{proof}
  For any master key pair $(\mathsf{mpk},
  \mathsf{msk})\in\Kcal\Pcal_m$, any key pair
  $(\pk,\sk)\in\Kcal\Pcal$, and any set of public keys $S$ such that
  $\pk\in S$, we directly have $(\mathsf{mpk}, \mathsf{msk})\in R_m$
  and $(S, \sk)\in R_n$ where $n=|S|$. Let
  $\sigma\leftarrow\Sign(\mathsf{mpk}, S, m, \sk)$ be an honest
  signature on message $m$ and ring $S$. Notice that in an honest
  execution of $\Sign$, each $\mathsf{com}_j$ and $\mathsf{resp}_j$ is
  honestly generated according to $\Sigma$. Thus by the
  correctness of $\Sigma$, we know for ${\sf ch}:=H({\sf com},m)$ and every $j\in[t]$ with
  probability $1-\mathsf{negl}(\lambda)$, that $1\leftarrow
  \Sigma.\Ver(\mathsf{mpk}, S, \mathsf{com}_j, \mathsf{ch}_j,
  \mathsf{resp}_j)$ and $\pk\leftarrow\Sigma.\Open(\mathsf{mpk},
  S, \mathsf{com}_j)$. Hence we directly obtain that, with probability
  $1-t\cdot \mathsf{negl}(\lambda) = 1-\mathsf{negl}(\lambda)$, we have that
  $1\leftarrow \Ver(\mathsf{mpk}, S, m, \sigma)$ and $\pk\leftarrow
  \Open(\mathsf{msk}, S, m, \sigma)$. This concludes the proof that
  $\mathcal{ARS}_{\Sigma}$ is correct.
\end{proof}

\paragraph{Proof of \cref{lem:anonymity_ROM}}
\begin{proof}
The anonymity of $\mathcal{ARS}_{\Sigma}$ follows immediately from the CWI
property of $\Sigma$. For any efficient adversary $A$ with at most $q$
queries to the random oracle, it can have at most
$q\cdot{\sf negl}(\lambda)\leq \mathsf{negl}(\lambda)$ advantage on distinguishing $\Sign^*$
and $(\Trans^*)^t$. And by CWI from $\Sigma$, we have
$\Trans^*(\mathsf{mpk}, S, \sk_{id_0})\approx_c \Trans^*(\mathsf{mpk},
S, \sk_{id_1})$. Hence we can directly conclude that
$\Sign^*(\mathsf{mpk}, S, \sk_{id_0})\approx_c \Sign^*(\mathsf{mpk},
S, \sk_{id_1})$, which proves that $\mathcal{ARS}_\Sigma^t$ is anonymous.
\end{proof}

\paragraph{Proof of \cref{lem:unforgeable_ROM}}
\label{sec:lem10}
\begin{proof}
Assume that there exists an efficient adversary $A'$ that wins
$G^{\mathsf{UF}}_{A'}(\mathsf{mpk}, \mathsf{msk})$ on some valid
key pair $(\mathsf{mpk}, \mathsf{msk})\in \Kcal\Pcal_m$ with
non-negligible probability. We aim to show that we can construct some
algorithm $B$ which runs $A'$ as a subroutine and extract an un-corrupted secret key.

As it doesn't hurt for a signing oracle to produce the challenges, let's abuse the notation as say the signing oracle returns not only the signature, but also those corresponding challenges.
First, we replace the $\Sign$ oracle with a simulator, so that
$A^{H}:=A'^{\Sim,H}$ can emulate the oracle responses to $A'$. We consider a
modified game $G^{\mathsf{UF},1}_{A'}$ which replaces the signing
oracle $\Sign(\bullet, \bullet, \bullet, \sk)$ by a
simulator $\Sim$ defined as follows:

\begin{itemize}
    \item $\Sim(\mathsf{mpk}, S, m, \pk\in S)$:
    \begin{algorithmic}[1]
        \STATE $t:=t(\lambda,|S|)$
        \STATE for $j\in[t]$, $(\mathsf{com}_j, \mathsf{ch}_j, \mathsf{resp}_j)\leftarrow \Sigma_{GA}.\Sim(\mathsf{mpk}, S, \pk\in S)$
        \STATE $\mathbf{program}$ $H(m,S,\mathsf{com}_1,\dots,\mathsf{com}_t) := (\mathsf{ch}_1,\dots, \mathsf{ch}_t)$
        \RETURN $\sigma = ({\mathsf{com}}, {\mathsf{resp}}) := ((\mathsf{com}_1,\dots, \mathsf{com}_t), (\mathsf{resp}_1, \dots, \mathsf{resp}_t)) $
    \end{algorithmic}
\end{itemize}
Since $\Sigma.\Sim$ is a statistical HVZK simulator, any adversary with
$Q={\sf poly}(\lambda)$ queries to $H$ cannot distinguish $\Sign$ from
$\Sim$ with non-negligible probability. Without loss of generality we assume
$A'^{\Sim,H}$ never produces a signature $\sigma^*$ from previous queries
to $\Sim({\sf mpk},S^*,m^*,\pk)$. Therefore for an forgery $({\sf com}^*,{\sf resp}^*)\gets A'^{\Sim,H}$ accepted with respect to the potentially reprogrammed $H$, $H(S^*,m^*,{\sf com}^*)$ must have not been reprogrammed, otherwise there must be a prior query of form $({\sf com^*},{\sf resp})\gets\Sim({\sf mpk},S^*,m^*,{\sf pk})$ for some ${\sf resp}$, but then ${\sf resp}\neq{\sf resp}^*$ is hard to find due to the computational unique-response property. Thus, $A$ should also win $G^{\mathsf{UF},1}_{A}$ with non-negligible probability.

Now, since $A$ wins $G^{\mathsf{UF},1}_{A}(\mathsf{mpk},
\mathsf{msk})$ only if it outputs some $(R, m^*, \sigma^*)$ such that
$\mathsf{out}^*\leftarrow \mathsf{Open}(\mathsf{msk}, R, m, \sigma^*)$
satisfies $\mathsf{out}^* = \pk$ or $\mathsf{out}^* = \perp$,
either $A$ wins with non-negligible probability with $\mathsf{out}^* =
\perp$, or $A$ wins with non-negligible
probability with $\mathsf{out}^* = \pk$. We deal with
these cases separately.

We first prove that there cannot exist efficient $A_\perp$ that wins
$G^{\mathsf{UF},1}_{A}(\mathsf{mpk}, \mathsf{msk})$ with
non-negligible probability with $\mathsf{out}^* = \perp$. If such
$A_\perp$ exists, we can construct an algorithm $B$ that honestly
generates $(\pk, \sk)$ and runs $A_\perp(\pk)$.
With non-negligible probability, $A_\perp$ will output
valid $(S, m, \sigma = ({\mathsf{com}}, {\sf ch},
{\mathsf{resp}}))$ such that $\perp\leftarrow\Open(\mathsf{msk},
S, m, \sigma)$. By applying the improved forking lemma (Theorem
\ref{forkinglemma}), with $r=O(Q/\varepsilon)$ rewinds of $A_\perp$,
it would, with constant probability, output $\mu$ valid signatures $(S,
m, \sigma^1,\dots, \sigma^\mu)$ with identical ${\mathsf{com}}$ and
pairwise distinct ${\mathsf{ch}}^c$, and that
$\perp\leftarrow\Open(\mathsf{msk}, S, m, \sigma^c)$ for all
$c\in[\mu]$. We now claim that with high probability, we can find some
parallel session $j\in[t]$ such that
$\perp\leftarrow\Sigma.\Open(\mathsf{msk}, S, \mathsf{com}_j)$
and $\mathsf{ch}^1_j, \dots, \mathsf{ch}^\mu_j$ are distinct. Note that
this is not trivially true, as the forking lemma only promises that
${\mathsf{ch}}^1, \dots, {\mathsf{ch}}^\mu$ are pairwise
distinct as vectors, so they might not be pairwise distinct on any
index $j$.

Let $T$ be the set of indices $j$ where
$\perp\leftarrow\Sigma.\Open(\mathsf{msk}, S, \mathsf{com}_j)$.
Since $\perp\leftarrow\Open(\mathsf{msk}, S, m, \sigma)$, by the
definition of $\Open$, $\perp$ must be (one of) the majority output
among the $t$ parallel sessions. Thus $|T|\geq t/(|S|+1) \geq
\lambda$. We say that $\mu$ challenges ${\mathsf{ch}'}^1, \dots,
{\mathsf{ch}'}^\mu$ are \textbf{good} on $T$ if there exists some $j\in
T$ such that $\mathsf{ch}'^1_j, \dots, \mathsf{ch}'^\mu_j$ are distinct.
For $\mu$ independently random challenges in $[\mu]^t$, the probability that
they are good on $T$ is $1-(1-(\mu!/\mu^\mu))^{|T|}=
1-\mathsf{negl}(\lambda)$.

Unfortunately, the challenges ${\mathsf{ch}}^1, \dots,
{\mathsf{ch}}^\mu$ obtained from rewinding $A$ are not necessarily
independent. To cope with this, we will need the fact that in each
rewind of $A$, the \textit{valid} ${\mathsf{ch}}$ is a new random
output from the new random oracle $H$. Thus, the finally output
$\mu$-tuple ${\mathsf{ch}}^1, \dots, {\mathsf{ch}}^\mu$ must be a
subset of $r=O(Q/\varepsilon)$ independent random samples from
$[\mu]^t$. By the union bound, the probability that \textit{all}
$\mu$-tuples in the $r$ samples are good on $T$ is
$1-\binom{r}{\mu}\mathsf{negl}(|T|)\geq 1-\mathsf{negl}(\lambda)$. Thus
we can find $j\in T$ such that $\mathsf{ch}^1_j, \dots,
\mathsf{ch}^\mu_j$ are distinct with probability $
1-\mathsf{negl}(\lambda)$.

For such $j$, we without loss of generality let $(\mathsf{ch}^1_j,
\dots, \mathsf{ch}^\mu_j) = (1,\dots, \mu)$ and consider $(S, \mathsf{com}_j,
\mathsf{resp}^1_j, \dots, \mathsf{resp}^\mu_j)$. Now $B$ achieves $\forall
c\in [\mu], 1\leftarrow\Sigma.\Ver(S, \mathsf{com}_j, c,
\mathsf{resp}^c_j)$, and 
$\perp\leftarrow\Sigma.\Open(\mathsf{msk}, S, \mathsf{com}_j)$.
Thus $B$ violates the $\mu$-special soundness property of $\Sigma$ and
brings a contradiction. Hence such $A_\perp$ cannot exist.

Now we consider the case where some $A$ wins $G^{\mathsf{UF},1}_{A}(\mathsf{mpk}, \mathsf{msk})$ with non-negligible probability with $\mathsf{out}^* = \pk$.

For such $A$, we can similarly
construct an algorithm $B$ that runs $A$ with input
$\pk$. Then again
by applying the improved forking lemma, with the same probability,
$r=O(Q/\varepsilon)$ rewinds of $A$ will output $\mu$ valid signatures
$(S, m, \sigma^1,\dots, \sigma^\mu)$ with identical
${\mathsf{com}}$ and pairwise distinct ${\mathsf{ch}}^c$,
so that $\pk\leftarrow\Open(\mathsf{msk}, S, m, \sigma^c)$ for all
$c\in[\mu]$. Again by the same argument as in the case of $A_\perp$, we can
with high probability find some $j\in[t]$ such that
$\pk\leftarrow\Sigma_{GA}.\Open(\mathsf{msk}, S, \mathsf{com}_j)$ and
$\mathsf{ch}^1_j, \dots, \mathsf{ch}^\mu_j$ are distinct.

Now, without loss of generality let $(\mathsf{ch}^1_j, \dots,
\mathsf{ch}^\mu_j) = (1,\dots, \mu)$ and consider $(S, \mathsf{com}_j,
\mathsf{resp}^1_j, \dots, \mathsf{resp}^\mu_j)$. We have $\forall
c\in [\mu], 1\leftarrow\Sigma_{GA}.\Ver(S, \mathsf{com}_j, c,
\mathsf{resp}^c_j)$, and that the challenge statement
$\pk\leftarrow\Sigma_{GA}.\Open(\mathsf{msk}, S, \mathsf{com}_j)$.
Thus by the $\mu$-special soundness property of $\Sigma_{GA}$, we can
extract the matching secret key $\sk\leftarrow\Sigma_{GA}.\Ext(S, \mathsf{com}_j,
\mathsf{resp}^1_j,\dots, \mathsf{resp}^\mu_j)$, such that $(\pk, \sk)\in
R$.

From the previous arguments, we see that if such efficient $A$
exists, then we can obtain an algorithm $B$ based on $A$ that, on
inputting random $\pk\in {\cal PK}$, output $\sk$ such that
$(\pk,\sk)\in R$ with non-negligible probability. Thus, we successfully construct
a secret extractor from adversary $A$ that wins the unforgeability game,
which concludes the proof that our $\mathcal{ARS}_{\Sigma}$ is unforgeable assuming
the instance relations are hard (to extract witness) for $\Sigma$.

\end{proof}

\section{Analysis in QROM}
\subsection{Proof of Lemma~\ref{lem:INT2EXT_QROM}}\label{sec:INT2EXT_QROM_proof}
\begin{proof}
We adopt the generalized Unruh's rewinding, as described in \cite[Lemma~29]{DFMS19}. Let $B(\pk)$ run as follows. First, execute $(S,m,{\sf com},{\sf st}_0)\gets A(\pk)$ as usual. Then, perform the following computation for $\mu$ times. For the $j$th time, freshly sample a challenge ${\sf ch}_j\gets\Sigma^{\otimes t}.\Ccal$ and then produce ${\sf resp}_j\gets A({\sf st}_{j-1},{\sf ch}_j)$, where the computation is {\em projectively executed}, i.e. after ${\sf resp}_j$ is produced, the computation is rewinded to where it started with ${\sf st}_{j-1}$, but with the internal state collapsed to ${\sf st}_j$ for the next run. After $\mu$ trials of rewinding, $B$ obtains $\mu$ samples of transcripts ${\sf com},\{{\sf ch}_j,{\sf resp}_j\}_{j\in[\mu]}$ sharing the same first message ${\sf com}$. Denote ${\sf com}^i,{\sf ch}_j^i,{\sf resp}_j^i$ to be the $i$th repetition of the $j$th rewinding. If there is some repetition (the $i$th) such that the corresponding transcript $({\sf com}^i,{\sf ch}^i_j,{\sf resp}^i_j)$ are distinct valid responses opened to $\pk$ or $\bot$ for all $j\in[\mu]$, then output $s\gets \Sigma.\Ext(S,{\sf com}^i,\{{\sf ch}^i_j\}_{j\in\mu},\{{\sf resp}_j^i\}_{j\in[\mu]})$, and abort otherwise.

As described earlier, by (\ref{eq:openable_special_sound}), we know that the output $s$ of $B(\pk)$ is always such that $(\pk,s)\in R$ whenever $B$ does not abort. Let ${\sf out}$, ${\sf out}^i$ and ${\sf acc}_j$ respectively be the output of $\Sigma^{\otimes t}.\Open({\sf msk},S,{\sf com})$, $\Sigma.\Open({\sf msk},S,{\sf com}^i)$ and $\Sigma^{\otimes t}.\Vrfy({\sf mpk},S,{\sf com},{\sf ch}_j,{\sf resp}_j)$.

The non-abort probability of $B$ can be union-bounded by two parts, namely
$$
\Pr\left[B(\pk)\text{ non-abort}\right]\geq \Pr\left[\substack{{\sf out}\in\{\pk,\bot\}\text{ and }\\ \forall j\in[\mu]: {\sf acc}_j=1}\right] - \Pr\left[\substack{\forall i\in[\mu]:{\sf out}^i\not\in\{\pk,\bot\}\text{ or }\\ {\sf ch}_1^i,\dots,{\sf ch}_\mu^{i}\text{ not distinct}} \middle| {\sf out}\in\{\pk,\bot\}\right]\;.
$$

We bound $\Pr\left[\substack{{\sf out}\in\{\pk,\bot\}\text{ and }\\\forall j\in[\mu]: {\sf acc}_j=1}\right]$ first. For every fixed choice $x^\circ:=(\pk^\circ,S^\circ,{\sf com}^\circ)$ of the joint random variable $x:=(\pk,S,{\sf com})$, identifying the (mixed) state of ${\sf st}_0$ conditioned on $x=x^\circ$ as $\rho_{x^\circ}$, and thus the un-conditioned state would be $\rho:=\sum_{x^\circ}\Pr[x=x^\circ] \rho_{x^\circ}$. Due to the perfect unique-response property, every time when a valid response ${\sf resp}_j$ is produced, it only disturbs the running state as a projector. Thus, we can define a family of projectors $\{P^{x^\circ}_{{\sf ch}^\circ}\}_{{\sf ch}^\circ\in\Sigma^{\otimes t}.\Ccal}$ where each projector $P_{{\sf ch}^\circ}^{x^\circ}$ on input ${\sf st}_{j-1}$ serve as the predicate that $({\sf com}^\circ,{\sf ch}_j,{\sf resp}_j)$ is an accepted transcript, i.e. $1\gets \Sigma^{\otimes t}.\Vrfy({\sf mpk},S^\circ,{\sf com}^\circ,{\sf ch}_j,{\sf resp}_j)$. Then
$$
\Pr\left[ \substack{\forall j\in[\mu]: \\1\gets \Sigma^{\otimes t}.\Vrfy({\sf mpk},S^\circ,{\sf com}^\circ,{\sf ch}_j,{\sf resp}_j)}\middle| x=x^\circ\right]=\sum_{{\sf ch}_1^\circ,\dots, {\sf ch}^\circ_\mu\in\Sigma^{\otimes t}.\Ccal}{\sf tr}\left( P^{x^\circ}_{{\sf ch}^\circ_1}\dots P^{x^\circ}_{{\sf ch}^\circ_\mu}\rho_{x^\circ}\right)\;.
$$
Expanding $\rho_{x^\circ}:=\sum_i\alpha_i\ket{\psi_i}\bra{\psi_i}$ via singular-value decomposition, we get
\begin{align*}
    &\sum_{{\sf ch}_1^\circ,\dots, {\sf ch}^\circ_\mu\in\Sigma^{\otimes t}.\Ccal}{\sf tr}\left( P^{x^\circ}_{{\sf ch}^\circ_1}\dots P^{x^\circ}_{{\sf ch}^\circ_\mu}\rho_{x^\circ}\right)
    = \sum_i\alpha_i\sum_{{\sf ch}_1^\circ,\dots, {\sf ch}^\circ_\mu\in\Sigma^{\otimes t}.\Ccal}\left\Vert P^{x^\circ}_{{\sf ch}^\circ_1}\dots P^{x^\circ}_{{\sf ch}^\circ_\mu}\ket{\psi_i}\right\Vert^2\\
    &\geq \sum_i \alpha_i \left(\sum_{{\sf ch}^\circ\in\Sigma^{\otimes t}.\Ccal}\left\Vert P^{x^\circ}_{{\sf ch}^\circ}\ket{\psi_i}\right\Vert^2\right)^{2\mu-1}
    \geq \left(\sum_{{\sf ch}^\circ\in\Sigma^{\otimes t}.\Ccal}{\sf tr}\left( P^{\pk^\circ}_{{\sf ch}^\circ}\rho_{x^{\circ}}\right)\right)^{2\mu-1}\;,
\end{align*}
where the first inequality is by \cite[Lemma~29]{DFMS19} and the second inequality is by Jensen's inequality. Summing over $x^\circ=(\pk^\circ,S^\circ,{\sf com}^\circ)$ such that $\{\pk^{\circ},\bot\}\ni{\sf out}^\circ:=\Sigma^{\otimes t}.\Open({\sf msk}, S^\circ,{\sf com}^\circ)$ with suitable probability, we obtain
\begin{align*}
    &\Pr\left[ \substack{{\sf out}\in\{\pk,\bot\}\text{ and }\\ \forall j\in[\mu]: {\sf acc_j}=1}\right]
    \geq \sum_{x^\circ\text{ s.t. }{\sf out}^\circ\in\{\pk^{\circ},\bot\}}\Pr\left[x=x^\circ\right]\left(\sum_{{\sf ch}^\circ\in\Sigma^{\otimes t}.\Ccal}{\sf tr}\left( P^{x^\circ}_{{\sf ch}^\circ}\rho_{x^{\circ}}\right)\right)^{2\mu-1}\\
    &\geq \left(\sum_{\substack{x^\circ\text{ s.t. }{\sf out}^\circ\in\{\pk^{\circ},\bot\}\\ {\sf ch}^\circ\in\Sigma^{\otimes t}.\Ccal}}\Pr\left[x=x^\circ\right]{\sf tr}\left( P^{x^\circ}_{{\sf ch}^\circ}\rho_{x^{\circ}}\right)\right)^{2\mu-1}
    =\Pr\left[A\text{ wins }G^{\sf int}_A({\sf mpk},{\sf msk})\right]^{2\mu-1}\;,
\end{align*}
where the second inequality is again via Jensen's inequality.

Next, for every $x^\circ=(\pk^\circ,S^\circ,{\sf com}^\circ)$ we define
$$
\Ical_{x^\circ}:=\big\{i\in[\mu]\,\big|\,\pk^\circ\text{ or }\bot\gets \Sigma.\Open({\sf msk},S^\circ,{\sf com}^\circ)\big\}\;,
$$
in order to bound $\Pr\left[\substack{\forall i\in[\mu]:{\sf out}^i\not\in\{\pk,\bot\}\text{ or }\\ {\sf ch}_1^i,\dots,{\sf ch}_\mu^{i}\text{ not distinct}} \middle| {\sf out}\in\{\pk,\bot\}\right]$
$$
\leq \max_{x^\circ\text{ s.t. }{\sf out}^\circ\in\{\pk^\circ,\bot\}}\Pr\left[\substack{\forall i\in\Ical_{x^\circ}:\\ {\sf ch}_1^i,\dots,{\sf ch}_\mu^{i}\text{ not distinct}} \middle| x=x^\circ\right]
= \max_{x^\circ\text{ s.t. }{\sf out}^\circ\in\{\pk^\circ,\bot\}}\Pr\left[\substack{\forall i\in\Ical_{x^\circ}:\\ {\sf ch}_1^i,\dots,{\sf ch}_\mu^{i}\text{ not distinct}}\right]
\;,
$$
where the last equality is due to the freshly sampled $\{{\sf ch}^i_j\}_{(i,j)\in[\mu]\times[\mu]}$ being independent with $x$. Note that, by the pigeonhole principle, ${\sf out}^\circ\in\{\pk^\circ,\bot\}$ implies $\#\Ical_{x^\circ}\geq\kappa$, thus the above can be bounded by
$$
\leq \left(1-\frac{{C \choose \mu}}{C^{\mu}}\right)^{\kappa}\leq\exp\left(\frac{-\kappa}{\mu^\mu}\right)\;,
$$
where $C:=\#\Sigma.\Ccal$ is the size of the challenge space.

Putting things together,
\begin{align*}
    &\Pr\left[\sk\gets B(\pk)\right]
    \geq \Pr[B(\pk)\text{ non-abort}]\\
    &\geq \Pr\left[A\text{ wins }G^{\sf int}_A({\sf mpk},{\sf msk})\right]^{2\mu-1}-\exp\left(\frac{-\kappa}{\mu^\mu}\right)\;,
\end{align*}
we conclude the proof.
\end{proof}

\subsection{Proof of Lemma~\ref{lem:unforgeable_QROM}}\label{sec:unforgeable_QROM_proof}
\begin{proof}
    Let $t(\lambda,n)=(n+1)\kappa$ where $\kappa={\sf poly}(\lambda)$ and $A$ be an efficient quantum adversary against $G^{\sf UF}_A({\sf mpk},{\sf msk})$, making at most $q$ queries to the random oracle $H$. By Lemma~\ref{lem:CMA2NMA_QROM},~\ref{lem:NMA2INT_QROM},~\ref{lem:INT2EXT_QROM}, we know that for every $({\sf mpk},{\sf msk})\in\Kcal\Pcal_m$, there exists efficient quantum adversaries $A_1, A_2, A_3$ respectively such that
    \begin{align*}
        &\Pr\left[A \text{ wins }G^{\sf UF}_A({\sf mpk},{\sf msk})\right]
        \leq \Pr\left[A_1 \text{ wins }\widetilde G^{\sf UF}_A({\sf mpk},{\sf msk})\right] + {\sf negl}(\lambda)\\
        &\leq \Pr\left[A_2 \text{ wins }\widetilde G^{\sf int}_A({\sf mpk},{\sf msk})\right](2q+1)^2 + {\sf negl}(\lambda)\\
        &\leq \left(\Pr\left[\substack{(\pk,\sk)\gets R(1^\lambda)\\ \sk\gets A_3(\pk)}\right] + \exp\left(\frac{-\kappa}{\mu^\mu}\right)\right)^{\frac{1}{2\mu-1}}(2q+1)^2 + {\sf negl}(\lambda)\;.
    \end{align*}
    By assumption $R$ is hard and $\kappa\geq {\sf poly}(\lambda)$, making the right-most term negligible. This concludes the proof.
\end{proof}

\section{Group signature}\label{sec:gsig}
A group signature scheme consists of one manager and $n$ parties. The
manager can set up a group and provide secret keys to each party.
Every party is allowed to generate signatures on behalf of the whole
group. Any party can verify the signature for the group without
knowing the signer, while the manager party can open the signer's
identity with his master secret key.

\noindent\textbf{Syntax.} A group signature scheme $\mathcal{GS}$
consists of the following four algorithms.
\begin{itemize}
\item $\mathbf{GKeygen}(1^\lambda,1^n)\rightarrow
  (\mathsf{gpk},\nset{\sk}{i},\mathsf{msk})$: The key generation
  algorithm $\mathbf{GKeygen}$ takes $1^\lambda$ and $1^n$ as inputs
  where $\lambda$ is the security parameter and $n \in \bbN$ is the
  number of parties in the group, and outputs $(\mathsf{gpk},
  \{\sk_i\}_{i\in[n]}, \ok)$ where $\mathsf{gpk}$ is the public key
  for the group, $\sk_i$ being the secret key of the $i$-th player for
  each $i \in [n]$, and $\ok$ is the master secret key held by the
  manager for opening.
\item $\mathbf{GSign}(\mathsf{gpk},m, \sk_k)\rightarrow\sigma$:
  The signing algorithm $\mathbf{GSign}$ takes a secret key $\sk_k$
  and a message $m$ as inputs, and outputs a signature $\sigma$
  of $m$ using $\sk_k$.
\item $\mathbf{GVerify}(\mathsf{gpk},m,\sigma)\rightarrow
  y\in\{0,1\}$: The verification algorithm $\mathbf{GVerify}$ takes
  the public key $\mathsf{gpk}$, a message $m$, and a candidate
  signature $\sigma$ as inputs, and outputs either $1$ for
  accept or $0$ for reject.
\item $\mathbf{GOpen}(\mathsf{gpk},\mathsf{msk},m,
  \sigma)\rightarrow k\in[n]$: The open algorithm
  $\mathbf{GOpen}$ takes the public key $\mathsf{gpk}$, the manager's
  master secret key $\mathsf{msk}$, a message $m$, and a signature
  $\sigma$ as inputs, and outputs an identity $k$ or abort with
  output $\bot$.
\end{itemize}

A group signature scheme should satisfy the following security
properties.

\noindent\textbf{Correctness.} A group signature scheme is said to be
correct if every honest signature can be correctly verified and
opened.
\begin{definition}
  A group signature scheme $\mathcal{GS}$ is correct if for any
  tuple of keys  
  $(\mathsf{gpk},\nset{\sk}{i},\mathsf{msk})\leftarrow
  \mathbf{GKeygen}(1^\secparam,1^n)$, any $i\in[n]$ and any message
  $m$,
  \begin{equation*}
    \Pr\left[
      \substack{
        \mathsf{acc}=1\land \mathsf{out}=i
      } : 
      \substack{
        \sigma\leftarrow\mathbf{GSign}(\mathsf{gpk}, m, \sk_i),\\
        \mathsf{acc} \leftarrow \mathbf{GVerify}(\mathsf{gpk}, m, \sigma),\\
        \mathsf{out} \leftarrow\mathbf{GOpen}(\mathsf{gpk}, \mathsf{msk}, m, \sigma)
      } \right]> 1-\mathsf{negl}(\lambda)
  \end{equation*}
\end{definition}

\noindent\textbf{Anonymity.} A group signature is said to be anonymous
if no adversary can determine the signer's identity among the 
group of signers given a signature, without using the master's secret key
($\mathsf{msk}$).
\begin{definition}
  A group signature scheme $\mathcal{GS}$ is anonymous if for any efficient
  adversary $A$ and any $n=poly(\lambda)$,
  \begin{equation*}
    \left|
      \Pr[1\leftarrow G_{A,0}^{\mathsf{Anon}}(\lambda, n)] - \Pr[1\leftarrow G_{A,1}^{\mathsf{Anon}}(\lambda, n)]
    \right|\leq \mathsf{negl}(\lambda),
  \end{equation*}
where the game $G_{A,b}^{\mathsf{Anon}}(\lambda, n)$ is defined below.
\vspace*{-0.7cm}
\end{definition}
\begin{algorithm}[H]
  \caption{$G_{A,b}^{\mathsf{Anon}}(\lambda, n)$: Anonymity game}
  \begin{algorithmic}[1]
    \STATE
    $(\mathsf{gpk},\nset{\sk}{i},\mathsf{msk})\leftarrow\mathbf{GKeygen}(1^\lambda,
    1^n)$
    \STATE $(st, i_0, i_1)\leftarrow A(\mathsf{gpk},\nset{\sk}{i})$\\
    \STATE $b \gets \{0,1\}$
    \RETURN $\mathsf{out}\leftarrow A^{\mathbf{GSign}(\mathsf{gpk},
      \cdot, \sk_{i_b})}(st)$
  \end{algorithmic}
\end{algorithm}
\vspace*{-0.7cm}
\noindent\textbf{Unforgeability.}  A group signature is said to be
unforgeable if no adversary can forge a valid signature that fails to
open or opens to some non-corrupted parties, even if the manager
has also colluded.
{\ifnum\submission=1
    \vspace*{-1em} 
\fi
\begin{definition}
  A group signature scheme $\mathcal{GS}$ is unforgeable if for any
  efficient adversary $A$ and any $n=poly(\lambda)$,
  \begin{equation*}
    \Pr[A\ \text{wins}\ G_{A}^{\mathsf{UF}}(\lambda, n)] < \mathsf{negl}(\lambda),
  \end{equation*}
where the game $G_{A}^{\mathsf{UF}}(\lambda, n)$ is defined below.
\end{definition}
\vspace*{-0.7cm}
\begin{algorithm}[H]
  \caption{$G_{A}^{\mathsf{UF}}(\lambda, n)$: Unforgeability game}
  \begin{algorithmic}[1]
    \STATE
    $(\mathsf{gpk},\nset{\sk}{i},\mathsf{msk})\leftarrow\mathbf{GKeygen}(1^\lambda,1^n)$,
    $\mathsf{Cor} = \{\}$ \STATE $( m^*, \sigma^*)\leftarrow
    A^{\mathbf{GSign}(\mathsf{gpk}, \bullet, \sk_i\notin \mathsf{Cor}),
      \textbf{Corrupt}(\bullet)}(\mathsf{gpk},\mathsf{msk})$\\\COMMENT{$\textbf{Corrupt}(i)$
      returns $\sk_i$ stores query $i$ in list $\mathsf{Cor}$} \STATE
    $A$ wins if $\sigma^*$ is not produced by querying
    $\mathbf{GSign}({\sf gpk},m^*,\sk_i\not\in{\sf Cor})$, $1\leftarrow \mathbf{GVerify}(\mathsf{gpk}, m^*,
    \sigma^*)$\\ and $i\leftarrow
    \mathbf{GOpen}(\mathsf{gpk},\mathsf{msk}, m^*, \sigma^*)$
    satisfies $i\notin \mathsf{Cor}$
      \end{algorithmic}
    \end{algorithm}
}

\section{Fiat-Shamir with Aborts Flaw in Related Works}\label{sec:FSwA_flaw}

We briefly introduce several relevant notions before describing the Fiat-Shamir with Aborts (FSwA) flaw.
In FSwA signatures (resp. NIZKs), one considers a Sigma protocol $({\sf com},{\sf ch},{\sf resp})\gets\Sigma$ that may abort (in such case ${\sf resp}=\bot$) with a certain probability. Such a protocol is called an {\em aborting Sigma protocol}. Typically, the transcript $({\sf com},{\sf ch},{\sf resp})$ may leak information about the secret key $\sk$, and the transcript can only be simulated conditioned on it not aborting (${\sf resp}\neq\bot$). Let ${\sf Sim}_\Sigma$ be such a simulator indistinguishable from $\Sigma|_{{\sf resp}\neq\bot}$ as specified in Fig.~\ref{fig:FSwA}. Typically, a FSwA signature is then constructed by repeating $\Sigma$ but replacing the challenge with some hash output (as produced by ${\sf FSwA}[\Sigma]$ in Fig.~\ref{fig:FSwA}). To show that it is hard to forge a signature, even given existing signatures, one typically has to perform a so-called CMA-to-NMA reduction, which makes up the signatures via $\Sim(\pk,m)$ as in Fig.~\ref{fig:FSwA} and give them to the forger. In such a reduction, it is then crucial to argue the signatures simulated by $\Sim$ are indistinguishable from the real signatures as generated by ${\sf FSwA}[\Sigma]$.

\begin{figure}
    \centering
    \begin{minipage}[t]{0.54\linewidth}
    $\Sigma(\sk)$: leaks $\sk$
    \begin{algorithmic}[1]
        \STATE ${\sf com},{\sf st}\gets\Com(\sk)$
        \STATE ${\sf ch}\gets\Ccal$
        \STATE ${\sf resp}\gets\Resp(\sk)$
        \RETURN $({\sf com},{\sf ch},{\sf resp})$
    \end{algorithmic}
    \ \\
     $\Sigma|_{{\sf resp}\neq\bot}(\sk)$: simulatable by $\Sim_\Sigma(\pk)$\\
    \vspace{-1em}\begin{algorithmic}[1]
        \REPEAT
            \STATE ${\sf com},{\sf st}\gets\Com(\sk)$
            \STATE ${\sf ch}\gets\Ccal$
            \STATE ${\sf resp}\gets\Resp(\sk)$
        \UNTIL{${\sf resp}\neq\bot$}
        \RETURN $({\sf com},{\sf ch},{\sf resp})$
    \end{algorithmic}
    
\end{minipage}
\begin{minipage}[t]{0.44\linewidth}
    $\Sign(\sk,m)$
    \begin{algorithmic}[1]
        \vspace{-1em}\REPEAT
            \STATE ${\sf com},{\sf st}\gets\Com(\sk)$
            \STATE ${\sf ch}\gets H(m,{\sf com})$
            \STATE ${\sf resp}\gets\Resp(\sk)$
        \UNTIL{${\sf resp}\neq\bot$}
        \RETURN $({\sf com},{\sf resp})$
    \end{algorithmic}
    \ \\
    $\Sim(\pk,m)$:
    \vspace{-1em}\begin{algorithmic}[1]
        \STATE $({\sf com},{\sf ch},{\sf resp})\gets \Sim_\Sigma(\pk)$
        \STATE $H(m,{\sf com}):={\sf ch}$
        \RETURN $({\sf com},{\sf resp})$
    \end{algorithmic}
\end{minipage}    
    \caption{An aborting Sigma protocol $\Sigma$ that may leak $\sk$, its non-abort transcripts $\Sigma|_{{\sf resp}=\bot}$ that does not leak $\sk$ (formally, simulatable by $\Sim_\Sigma$), and its FSwA signatures ${\sf FSwA}[\Sigma]$ that can be simulated by $\Sim$.}
    \label{fig:FSwA}
\end{figure}

Now, we give a high-level description of the Fiat-Shamir with Aborts (FSwA) flaw. In order to argue the closeness between ${\sf FSwA}[\Sigma]$ and $\Sim$, essentially in all the existing analyses, an intermediate oracle $\Trans$ is introduced, that (1) generates a non-abort transcript $({\sf com},{\sf ch},{\sf resp})\gets\Sigma|_{{\sf resp}\neq\bot}$, (2) reprograms $H(m,{\sf com}):={\sf ch}$, and (3) returns the transcript $({\sf com},{\sf resp})$. From the simulatability of non-abort transcripts, the closeness between $\Sim$ and $\Trans$ immediately follows, and hence it remains to argue the closeness between ${\sf FSwA}[\Sigma]$ and $\Trans$. The FSwA flaw lies in those analyses that argue (on a high level) that as long as the input $(m,{\sf com})$ where $H$ is being reprogrammed has not been queried prior to the reprogramming, then both oracles cannot be distinguished. However, this is not the case, for even without any prior query to the oracle ${\sf FSwA}[\Sigma]$ or $\Trans$, there is still {\em positive advantage} of distinguishing both oracles. Indeed, each time the oracle $\Trans$ is  queried, a non-abort transcript $({\sf com},{\sf ch},{\sf resp}\neq\bot)$ is reprogrammed to the oracle, thereby biasing the distribution of $H$ toward having more non-aborting input-output pairs. This flawed argumentation occurs not only in Dilithium (as \cite{FSwA23} have pointed out) but also in \cite{beullens2021group} and likely even in \cite{beullens2020calamari,lai2021collusion}.

Since the flaw that appeared in the analyses of Dilithium has been fixed by \cite{FSwA23}, it is natural to ask if similar techniques fix relevant isogeny-based ring/group signatures.\footnote{The work \cite{FSwA23_Devevey} also fixes the FSwA flaw for Lyubashevsky-style signatures, but it premises a stronger simulator that is not available in isogeny-based constructions.} To the best of our knowledge, the fix as provided in \cite{FSwA23} {\em does not} immediately fix these works. Indeed, in each of \cite{beullens2020calamari,lai2021collusion,beullens2021group}, the construction crucially relies on a non-standard variant of aborting Sigma protocols $\Sigma^H$ (specified by $\Com^H, \Resp^H)$ that is given additional query access to the random oracle $H$. However, in showing the closeness between $\Sign$ and $\Trans$, for both games, the sub-procedures $\Com^H$ and $\Resp^H$ are now given query access to the random oracle $H$. It is then conceivable that such additional access may help an adversary to distinguish both oracles.

Indeed, zooming into the argumentation of the Dilithium fix \cite{FSwA23}, one relies (information-theoretically) on the fact that a distinguisher interacting with $H$ and $\Trans$ cannot learn those aborting transcripts generated in $\Trans$. However, since now the aborting transcripts are partly determined by the randomness of $H$, by making queries to $H$, the distinguisher may actually learn something about those aborting transcripts. Therefore, for fixing the FSwA flaw in all currently available isogeny-based ring/group signatures \cite{beullens2020calamari,lai2021collusion,beullens2021group} (besides ours), a new idea or a very different proof is necessary.

\subsection{Details of the Flaw in \cite{BDKLP22}}
Here, we elaborate on the technical details of how the Fiat-Shamir with Aborts flaw affects the analyses of \cite{BDKLP22}. We point to the proof in its ePrint version, \cite[Theorem~6.4]{beullens2021group}, in which the closeness of two specific games ${\sf Game}_1$ and ${\sf Game}_2$ is argued.

To start, we briefly recap the definitions of both games. In ${\sf Game}_1$, the adversary $\Acal$ interacts with the oracle ${\sf Prove}$, while in ${\sf Game}_2$, such oracle is replaced by another oracle $\Scal_{\sf int}$, which runs the underlying aborting Sigma protocol (as $P_1$ and $P_2$ specified in \cite[Fig.~4]{beullens2021group}) and then reprograms the transcript to the random oracle, which is formally realized by maintaining a list $L$ of previously defined inputs.

Indeed, if taken at the face value, ${\sf Game}_1$ and ${\sf Game}_2$ are easily distinguishable, because in the former game the oracle ${\sf Prove}$ always returns non-$\bot$ transcripts, whereas in the latter game $\Scal_{\sf int}$ may return an aborting transcript (with $\bot$).

A natural way to fix this is by insisting that $\Scal_{\sf int}$ always generates a non-aborting transcript and then reprograms such transcript to the random oracle. Concretely, this can be done by adding a for-loop in ${\sf Sim}_{\sf int}$ (see \cite[Fig.~9]{beullens2021group}) that terminates after ${\sf resp}\neq\bot$, and only then executes the reprogramming $L[{\sf FS}\;\Vert\; {\sf lbl}\;\Vert\; X\;\Vert\; {\sf com}]:={\sf chall}$. However, if this is their intended approach, the FSwA flaw re-appears. This is because such reprogramming biases the random oracle toward having more non-$\bot$ input-output pairs. Hence, even if $Q_2=1$ and no query to $\Ocal/{\sf Sim}_0$ is made prior to the ${\sf Prove}/\Scal_{\sf int}$ query, there is still a positive advantage of distinguishing ${\sf Game}_1$ and ${\sf Game}_2$, contradicting the reasoning in, quote “the view of $\Acal$ is identical to the previous game unless ${\sf Sim}_{\sf int}$ outputs $\bot$ in Line 4.”

\paragraph{Is it fixable?} To the best of our knowledge, there is no immediate solution to the flaw. Below, we show two natural alternatives that do not (directly) fix the flaw.

First, if we modify $\Scal_{\sf int}$ in such a way that does the reprogramming $L[{\sf FS}\;\Vert\; {\sf lbl}\;\Vert\; X\;\Vert\; {\sf com}]:={\sf chall}$ in every iteration of the for-loop (instead of at the end), then one may then be able to show the closeness between ${\sf Game}_1\approx{\sf Game}_2$, but the very next hybrid step ${\sf Game}_2\approx{\sf Game}_3$ falls apart. This is because, under the considered non-abort HVZK property, only non-abort transcripts are guaranteed to be efficiently simulatable, and hence, the reprogramming taking place at abort iterations cannot be simulated in (any imaginable twists of) ${\sf Game}_3$ efficiently. Therefore, this approach does not (immediately) work.

Second, one may be tempted to apply generic results provided in \cite{FSwA23}, but the (flawed) analyses in \cite[Theorem~6.4]{beullens2021group} are performed in a non-blackbox manner, and simply un-covered by such results. In addition, note that \cite{beullens2021group} considers a twisted variant of aborting Sigma protocol, where the prover is given access to the random oracle, which is not the case in \cite{FSwA23}. It is conceivable that such additional access to the random oracle helps an adversary to distinguish ${\sf Game}_1$ and ${\sf Game}_2$. Therefore, it is unclear how (and whether it is possible) to fix such a flaw, even with the reasoning provided in \cite{FSwA23}.
\end{document}